\titlespacing*{\section}{0pt}{1.8\baselineskip}{\baselineskip}
\def\nn{\nonumber}
\def\ba{\begin{eqnarray}}
\def\ea{\end{eqnarray}}
\let\originalleft\left
\let\originalright\right
\renewcommand{\left}{\mathopen{}\mathclose\bgroup\originalleft}
\renewcommand{\right}{\aftergroup\egroup\originalright}
\newcommand{\nc}{\newcommand}
\nc{\be}{\begin{equation}} \nc{\ee}{\end{equation}}
\nc{\bea}{\begin{eqnarray}} \nc{\eea}{\end{eqnarray}}
\nc{\disp}{\displaystyle} \nc{\ade}{\mbox{$A$-$D$-$E$}}
\nc{\calN}{{\cal N}} \nc{\calC}{{\cal C}} \nc{\calM}{{\cal M}}
\nc{\calS}{{\cal S}} \nc{\phit}{\hat{\varphi}}
\nc{\chit}{\hat{\chi}} \nc{\hcalN}{\hat{\calN}}
\nc{\hcalS}{\hat{\calS}} \nc{\hS}{\hat{S}}
\nc{\sigmad}{\sigma^\dagger} \nc{\psid}{\psi^\dagger}
\def\bra#1{\langle #1|}
\def\ket#1{|#1\rangle}
\def\floor#1{\lfloor #1 \rfloor}
\def\c{{\rm c}}
\definecolor{IndianRed}{rgb}{0.8,0.36,0.36}
\newtheorem{lemma}{Lemma}
\newtheorem{theorem}{Theorem}
\newtheorem{defn}{Definition}
\newtheorem{prop}{Proposition}
\newtheorem{conj}{Conjecture}
\newtheorem{remark}{Remark}
\newcommand{\propref}[1]{Proposition~\ref{prop:#1}}
\newcommand{\secref}[1]{Section~\ref{sec:#1}}
\newcommand{\lemref}[1]{Lemma~\ref{lem:#1}}
\newcommand{\thmref}[1]{Theorem~\ref{thm:#1}}
\newcommand{\figref}[1]{Figure~\ref{fig:#1}}
\newcommand{\defref}[1]{Definition~\ref{def:#1}}
\newcommand{\appref}[1]{Appendix~\ref{#1}}
\newcommand{\conjref}[1]{Conjecture~\ref{conj:#1}}
\begin{document}

\begin{center}
{\Large \textbf{Finite-size corrections for universal boundary entropy in bond percolation}}
\end{center}

\begin{center}
Jan de Gier\textsuperscript{1}, Jesper Lykke Jacobsen\textsuperscript{2} and Anita Ponsaing\textsuperscript{1}
\end{center}

\begin{center}
\textbf{1} ARC Centre of Excellence for Mathematical and Statistical Frontiers (ACEMS), School of Mathematics and Statistics, The University of Melbourne, VIC 3010, Australia\\
 \textbf{2} ${}^a$LPTENS, \'Ecole Normale Sup\'erieure -- PSL Research University, 24 rue Lhomond, F-75231 Paris Cedex 05, France\\
${}^b$Sorbonne Universit\'es, UPMC Universit\'e Paris 6, CNRS UMR 8549, F-75005 Paris, France\\
${}^c$Institut de Physique Th\'eorique, CEA Saclay, F-91191 Gif-sur-Yvette, France\\

jdgier@unimelb.edu.au, jesper.jacobsen@ens.fr, aponsaing@unimelb.edu.au
\end{center}

\begin{center}
\today
\end{center}

\section*{Abstract}
We compute the boundary entropy for bond percolation on the square lattice in the presence of a boundary loop weight, and prove explicit and exact expressions on a strip and on a cylinder of size $L$. For the cylinder we provide a rigorous asymptotic analysis which allows for the computation of finite-size corrections to arbitrary order. For the strip we provide exact expressions that have been verified using high-precision numerical analysis. Our rigorous and exact results corroborate an argument based on conformal field theory, in particular concerning universal logarithmic corrections for the case of the strip due to the presence of corners in the geometry. We furthermore observe a crossover at a special value of the boundary loop weight.

\vspace{10pt}
\noindent\rule{\textwidth}{1pt}
\tableofcontents\thispagestyle{fancy}
\noindent\rule{\textwidth}{1pt}
\vspace{10pt}

\section{Introduction}
A 2D classical statistical mechanical model can be viewed as a 1+1D model evolving in imaginary time. It is well known that at a critical point the Hamiltonian (or equivalently the transfer matrix) of such a model can be described by conformal field theory (CFT). The Hamiltonian $H_L$ can be related to the CFT translation operator. Consequently, analysing the eigenvalues of $H_L$ on the lattice for different sizes $L$ is a very useful way of extracting the conformal spectrum \cite{Affleck,Blote}. 

It was shown in \cite{DubailJS2010} that scalar products can be measured as well, by mimicking on the lattice the construction of in and out states and scalar products of the continuum limit \cite{Ginsparg,Flohr}. Natural scalar products on the lattice were proposed for the examples of the Ising chain and the Temperley--Lieb loop model. Strong numerical evidence in \cite{DubailJS2010,BDJS,BJS2} suggests that these lattice scalar products indeed go over to the continuum limit ones as naively expected, for all quantities of interest.

One such quantity of interest is the boundary entropy \cite{AffleckLudwig} which is defined in the following way. For a given CFT, one can define several conformally invariant boundary conditions which are encoded by a boundary state \cite{Cardy84,Cardy89}. When one perturbs a conformal boundary condition (CBC) by a relevant operator, it flows towards another CBC under the renormalisation group flow. These CBCs and their flows can be characterised by their boundary entropy $S_B$ which is defined in the CFT via the scalar product of a boundary state $\ket{B}$ with the ground state $\ket{0}$ of the conformal Hamiltonian:
\be
\label{eq:overlap}
S_B = -\log \bra{B}0\rangle.
\ee 

These numbers are universal and have been computed analytically for many CFTs and for many different CBCs.
In the context of CBCs relevant for loop models \cite{JS_bcs}, several such analytical computations were presented in \cite{DubailJS09,DubailJS2010b}. Using the lattice regularisation of the scalar products, $S_B$ was also investigated numerically in \cite{DubailJS2010} from finite-size calculations in two examples: the periodic Ising chain and the Temperley--Lieb (TL) loop model on the cylinder. 

In this paper we will focus on the TL loop model and provide several mathematically rigorous results on the computation of these scalar products on a lattice of size $L$ and for the case where the bulk loop weight $\beta=1$. At this value of the loop weight the model is well known to be equivalent to bond percolation \cite{TemperleyL71}. We note here that this model is also equivalent to the stochastic raise and peel model for which the stationary state entanglement entropy in the context of shared information was studied in \cite{AlcR}. The rigorous finite size results allow us to perform a detailed asymptotic analysis in $L$, of which the universal contribution can be compared to the CFT predictions. Moreover, we will compute $S_B$ for the TL loop model on the cylinder as well as on a strip. The latter gives rise to non-trivial universal logarithmic corrections in the CFT due to the existence of corners \cite{CardyPeschel,Jacobsen2010,VernierJacobsen,BDJS,BJS2}.

We note that scalar products, or overlaps, such as \eqref{eq:overlap} have been recently considered in the context of non-equilibrium dynamics in spin chains. The result for the $q$-dimerised boundary state in \cite{Pozsgay2014} in the XXZ spin chain is relevant to this paper (the loop weight $\beta$ is related to $q$), and could also be computed using the approach of \cite{BrockmannDNWC2014}. Here we follow a very different path in computing the overlap \eqref{eq:overlap}, which is more explicit but only valid when $q$ is a third root of unity, or $\beta=1$.

In the following section we give descriptions of the loop model on a cylinder (for even size $L=2n$) as well as on a strip, i.e., with periodic and reflecting boundary conditions. We will be precise in terms of mathematical statements, writing ``conjecture'' for statements we are very confident about being true but for which a rigorous mathematical proof is lacking.

\subsection{The Temperley--Lieb algebra}
The Temperley--Lieb algebra is built from the generators $\{e_i\ |\ 1\leq i<L\}$, which satisfy the following relations,
\be
e_i^2=\beta e_i,\qquad
e_i e_{i\pm 1} e_i =e_i,
\label{eq:TLrelns}
\ee
with $\beta$ a parameter of the model. This algebra can be supplemented with extra generators that dictate boundary conditions. We will consider two types of boundary conditions, periodic and reflecting (corresponding to the cylinder and the strip). The reflecting case consists simply of the above generators, while the periodic case has an extra generator $e_L$ that satisfies the same relations as the others, working mod $L$:
\begin{align}
e_L^2= \beta e_L,\qquad e_Le_1e_L=e_L,\qquad
e_1e_Le_1=e_1.
\end{align}
 In addition to these local relations, in the even periodic case we impose the idempotent relations
 \begin{align}
 I_1I_2I_1= I_1,&\qquad I_2I_1I_2= I_2,\\
 I_1=e_1e_3\dots e_{L-1},&\qquad I_2=e_2e_4\dots e_L,\nonumber
 \end{align}
 to ensure that non-contractible loops going around the cylinder also have weight $1$. Similar quotients need to be defined in the odd case \cite{Martin1994}.

There is a natural representation of this algebra in terms of link patterns. These are perfect matchings of $L$ sites, which satisfy the imposed boundary conditions. If $L$ is odd there will be one site that is unpaired (or connected to a point at infinity). In the periodic case we can view the link pattern either as a chord diagram (see \figref{lpper}), or as matchings of sites along a strip, similarly to the reflecting case (see \figref{lprefl}), with periodic boundary conditions understood. We use $\mathrm{LP}_{L}$ to refer to the set of link patterns of size $L$.

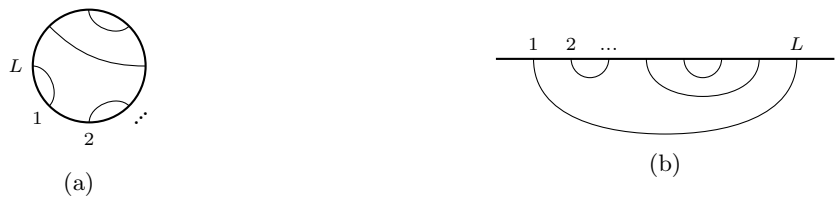
\begin{figure}[ht]
\begin{subfigure}[c]{0.5\linewidth}
\centering
\begin{tikzpicture}[baseline=(current  bounding  box.center),scale=0.75]
\draw[thick] (0,0) circle(1);
\draw[smooth] (0:1) to[out=180,in=315] (135:1);
\draw[smooth] (45:1) to[out=225,in=270] (90.5:1);
\draw[smooth] (180:1) to[out=0,in=45] (225:1);
\draw[smooth] (270:1) to[out=90,in=135] (315:1);
\node at (225:1.3) {${\scriptstyle 1}$};
\node at (270:1.3) {${\scriptstyle 2}$};
\node at (310:1.3) {.};
\node at (315:1.3) {.};
\node at (320:1.3) {.};
\node at (180:1.3) {${\scriptstyle L}$};
\end{tikzpicture}
\caption{\label{fig:lpper}}
\end{subfigure}%
\begin{subfigure}[c]{0.5\linewidth}
\centering
\begin{tikzpicture}[baseline=(current  bounding  box.center)]
\draw[thick] (0,1.5)--(4.5,1.5);
\draw[smooth] (0.5,1.5) to[out=270,in=180] (2.25,0.5) to[out=0,in=270] (4,1.5);
\draw[smooth] (1,1.5) to[out=270,in=180] (1.25,1.25) to[out=0,in=270] (1.5,1.5);
\draw[smooth] (2,1.5) to[out=270,in=180] (2.75,1) to[out=0,in=270] (3.5,1.5);
\draw[smooth] (2.5,1.5) to[out=270,in=180] (2.75,1.25) to[out=0,in=270] (3,1.5);
\node[above] at (0.5,1.5) {${\scriptstyle 1}$};
\node[above] at (1,1.5) {${\scriptstyle 2}$};
\node[above] at (1.5,1.5) {${\scriptstyle \dots}$};
\node[above] at (4,1.5) {${\scriptstyle L}$};
\end{tikzpicture}
\caption{\label{fig:lprefl}}
\end{subfigure}%
\caption{Example periodic and reflecting link patterns for $L=8$.}
\label{fig:lps}
\end{figure}

In this representation the Temperley--Lieb generator $e_i$ acts between site $i$ and $i+1$, and has the graphical representation
\be
e_i=\raisebox{-5pt}{
\begin{tikzpicture}[baseline=(current  bounding  box.center),scale=0.75]
\draw[smooth] (1,0) to[out=90,in=180] (1.5,0.5) to[out=0,in=90] (2,0);
\draw[smooth] (1,1.5) to[out=270,in=180] (1.5,1) to[out=0,in=270] (2,1.5);
\draw[dotted] (0,0) -- (0.5,0);
\draw[thick] (0.5,0)--(2.5,0);
\draw[thick,dotted] (2.5,0) -- (3,0);
\node[below] at (1,0) {${\scriptstyle i}$};
\node[below] at (2,0) {${\scriptstyle i+1}$};
\end{tikzpicture}}.
\ee
The algebraic rules \eqref{eq:TLrelns} amount to the rules-of-thumb ``strings are pulled tight, closed loops are replaced with a weight of $\beta$''.
An example of the action of $e_i$ on a link pattern is
\be
e_3\;
\raisebox{-0.9cm}{
\begin{tikzpicture}[baseline=0,scale=0.75]
\draw[thick] (0,1.5)--(4.5,1.5);
\draw[smooth] (0.5,1.5) to[out=270,in=180] (2.25,0.5) to[out=0,in=270] (4,1.5);
\draw[smooth] (1,1.5) to[out=270,in=180] (1.25,1.25) to[out=0,in=270] (1.5,1.5);
\draw[smooth] (2,1.5) to[out=270,in=180] (2.75,1) to[out=0,in=270] (3.5,1.5);
\draw[smooth] (2.5,1.5) to[out=270,in=180] (2.75,1.25) to[out=0,in=270] (3,1.5);
\end{tikzpicture}}
\;=\;
\raisebox{-0.9cm}{
\begin{tikzpicture}[baseline=0,scale=0.75]
\draw[thick] (0,1.5)--(4.5,1.5);
\draw[smooth] (1.5,1.5) to[out=90,in=180] (1.75,1.75) to[out=0,in=90] (2,1.5);
\draw[smooth] (1.5,2.25) to[out=270,in=180] (1.75,2) to[out=0,in=270] (2,2.25);
\draw[smooth] (0.5,1.5) to[out=270,in=180] (2.25,0.5) to[out=0,in=270] (4,1.5);
\draw[smooth] (1,1.5) to[out=270,in=180] (1.25,1.25) to[out=0,in=270] (1.5,1.5);
\draw[smooth] (2,1.5) to[out=270,in=180] (2.75,1) to[out=0,in=270] (3.5,1.5);
\draw[smooth] (2.5,1.5) to[out=270,in=180] (2.75,1.25) to[out=0,in=270] (3,1.5);
\end{tikzpicture}}
\;=\;
\raisebox{-0.9cm}{
\begin{tikzpicture}[baseline=0,scale=0.75]
\draw[thick] (0,1.5)--(4.5,1.5);
\draw[smooth] (0.5,1.5) to[out=270,in=180] (2.25,0.5) to[out=0,in=270] (4,1.5);
\draw[smooth] (1,1.5) to[out=270,in=180] (2.25,0.8) to[out=0,in=270] (3.5,1.5);
\draw[smooth] (1.5,1.5) to[out=270,in=180] (1.75,1.25) to[out=0,in=270] (2,1.5);
\draw[smooth] (2.5,1.5) to[out=270,in=180] (2.75,1.25) to[out=0,in=270] (3,1.5);
\end{tikzpicture}}\;.
\ee
\subsection{The Temperley--Lieb loop model}
The Temperley--Lieb loop model, or completely packed O($n$) loop model, is a model on a square lattice where each face of the lattice has loops drawn on it in one of two possible configurations:
\[
\begin{tikzpicture}[baseline=(current  bounding  box.center),scale=0.75]
\draw (0,0) -- (0,1) -- (1,1) -- (1,0) -- cycle;
\draw[thick,smooth] (0.5,0) to[out=90,in=180] (1,0.5);
\draw[thick,smooth] (0,0.5) to[out=0,in=270] (0.5,1);
\end{tikzpicture}
\qquad\qquad
\begin{tikzpicture}[baseline=(current  bounding  box.center),scale=0.75]
\draw (0,0) -- (0,1) -- (1,1) -- (1,0) -- cycle;
\draw[thick,smooth] (0.5,0) to[out=90,in=0] (0,0.5);
\draw[thick,smooth] (1,0.5) to[out=180,in=270] (0.5,1);
\end{tikzpicture}
\]
The lattice is arranged either on a semi-infinite cylinder or a semi-infinite strip, depending on the boundary conditions. In the reflecting case, arcs are drawn at the boundaries between neighbouring rows (see \figref{lattice}).

\begin{figure}[ht]
\begin{subfigure}[c]{0.5\linewidth}
\centering
\def\a{3 }
\def\b{1 }
\begin{tikzpicture}[scale=0.75]
\draw (-\a,0)--(-\a,-2);
\draw (\a,0)--(\a,-2);
\draw[dotted] (-\a,-2)--(-\a,-2.5);
\draw[dotted] (\a,-2)--(\a,-2.5);
\draw ({\a*cos(270)},{\b*sin(270)})--({\a*cos(270)},{\b*sin(270)-2});
\draw ({\a*cos(270+20)},{\b*sin(270+20)})--({\a*cos(270+20)},{\b*sin(270+20)-2});
\draw ({\a*cos(270-20)},{\b*sin(270-20)})--({\a*cos(270-20)},{\b*sin(270-20)-2});
\draw ({\a*cos(270+40)},{\b*sin(270+40)})--({\a*cos(270+40)},{\b*sin(270+40)-2});
\draw ({\a*cos(270-40)},{\b*sin(270-40)})--({\a*cos(270-40)},{\b*sin(270-40)-2});
\draw ({\a*cos(270+60)},{\b*sin(270+60)})--({\a*cos(270+60)},{\b*sin(270+60)-2});
\draw ({\a*cos(270-60)},{\b*sin(270-60)})--({\a*cos(270-60)},{\b*sin(270-60)-2});
\draw[dotted] ({\a*cos(270)},{\b*sin(270)-2})--({\a*cos(270)},{\b*sin(270)-2.5});
\draw[dotted] ({\a*cos(270+20)},{\b*sin(270+20)-2})--({\a*cos(270+20)},{\b*sin(270+20)-2.5});
\draw[dotted] ({\a*cos(270-20)},{\b*sin(270-20)-2})--({\a*cos(270-20)},{\b*sin(270-20)-2.5});
\draw[dotted] ({\a*cos(270+40)},{\b*sin(270+40)-2})--({\a*cos(270+40)},{\b*sin(270+40)-2.5});
\draw[dotted] ({\a*cos(270-40)},{\b*sin(270-40)-2})--({\a*cos(270-40)},{\b*sin(270-40)-2.5});
\draw[dotted] ({\a*cos(270+60)},{\b*sin(270+60)-2})--({\a*cos(270+60)},{\b*sin(270+60)-2.5});
\draw[dotted] ({\a*cos(270-60)},{\b*sin(270-60)-2})--({\a*cos(270-60)},{\b*sin(270-60)-2.5});
\begin{scope}
\draw[dashed] ({0}:\a and \b) arc ({0}:{180}:\a and \b);
\draw ({180}:\a and \b) arc ({180}:{360}:\a and \b);
\end{scope}
\begin{scope}[shift={(0,-0.98)}]
\draw[dashed] ({180-30}:\a and \b) arc ({180-30}:{30}:\a and \b);
\end{scope}
\begin{scope}[shift={(0,-1)}]
\draw ({180}:\a and \b) arc ({180}:{360}:\a and \b);
\end{scope}
\begin{scope}[shift={(0,-2)}]
\draw ({180}:\a and \b) arc ({180}:{360}:\a and \b);
\end{scope}
\begin{scope}[shift={(0,-1.96)}]
\draw[dashed] ({100}:\a and \b) arc ({100}:{80}:\a and \b);
\end{scope}
\end{tikzpicture}
\caption{\label{fig:latticeper}}
\end{subfigure}%
\begin{subfigure}[c]{0.5\linewidth}
\centering
\begin{tikzpicture}[scale=0.75]
\draw (0,0) grid (6,4);
\draw[dotted] (0,0) -- (0,-0.5) (1,0) -- (1,-0.5) (2,0) -- (2,-0.5) (3,0) -- (3,-0.5) (4,0) -- (4,-0.5) (5,0) -- (5,-0.5) (6,0) -- (6,-0.5);
\draw[thick,smooth] (0,0.5) to[out=180,in=270] (-0.5,1) to[out=90,in=180] (0,1.5);
\draw[thick,smooth] (0,2.5) to[out=180,in=270] (-0.5,3) to[out=90,in=180] (0,3.5);
\draw[thick,smooth] (6,0.5) to[out=0,in=270] (6.5,1) to[out=90,in=0] (6,1.5);
\draw[thick,smooth] (6,2.5) to[out=0,in=270] (6.5,3) to[out=90,in=0] (6,3.5);
\end{tikzpicture}
\caption{\label{fig:latticerefl}}
\end{subfigure}%
\caption{Square lattices for the periodic (a) and reflecting (b) Temperley--Lieb loop models. Both cylinder and strip extend downward to infinity.}
\label{fig:lattice}
\end{figure}

When $\beta=1$, this lattice model is equivalent to the bond percolation model (or the $Q=1$ Potts model). With sites located on alternating vertices of the lattice, the loops on a face describe whether or not a bond exists between the two sites on opposite corners. For example:
\[
\begin{tikzpicture}[scale=0.75]
\draw (0,0) grid (3,3);
\draw[dotted] (0,0) -- (0,-0.5) (1,0) -- (1,-0.5) (2,0) -- (2,-0.5) (3,0) -- (3,-0.5);
\draw[dotted] (0,3) -- (0,3.5) (1,3) -- (1,3.5) (2,3) -- (2,3.5) (3,3) -- (3,3.5);
\draw[dotted] (0,0) -- (-0.5,0) (0,1) -- (-0.5,1) (0,2) -- (-0.5,2) (0,3) -- (-0.5,3);
\draw[dotted] (3,0) -- (3.5,0) (3,1) -- (3.5,1) (3,2) -- (3.5,2) (3,3) -- (3.5,3);
\node at (0,0) {$\bullet$};
\node at (0,2) {$\bullet$};
\node at (1,1) {$\bullet$};
\node at (1,3) {$\bullet$};
\node at (2,0) {$\bullet$};
\node at (2,2) {$\bullet$};
\node at (3,1) {$\bullet$};
\node at (3,3) {$\bullet$};
\draw[thick] (0,0) -- (1,1) -- (2,0) (1,1) -- (0,2) -- (1,3) (3,1) -- (2,2);
\draw[smooth,thick,gray] (0.5,0) to[out=90,in=180] (1,0.5) to[out=0,in=90] (1.5,0);
\draw[smooth,thick,gray] (0,0.5) to[out=0,in=270] (0.5,1) to[out=90,in=0] (0,1.5);
\draw[smooth,thick,gray] (0,2.5) to[out=0,in=270] (0.5,3);
\draw[smooth,thick,gray] (1.5,3) to[out=270,in=0] (1,2.5) to[out=180,in=90] (0.5,2) to[out=270,in=180] (1,1.5) to[out=0,in=90] (1.5,1) to[out=270,in=180] (2,0.5) to[out=0,in=90] (2.5,0);
\draw[smooth,thick,gray] (2.5,3) to[out=270,in=180] (3,2.5);
\draw[smooth,thick,gray] (3,1.5) to[out=180,in=270] (2.5,2) to[out=90,in=0] (2,2.5) to[out=180,in=90] (1.5,2) to[out=270,in=180] (2,1.5) to[out=0,in=90] (2.5,1) to[out=270,in=180] (3,0.5);
\end{tikzpicture}
\]

By applying the rules of the Temperley--Lieb algebra, the configurations of loops on the lattice can be grouped according to the link patterns they produce at the top of the lattice. In this way the states of the model live in a vector space with a basis indexed by link patterns,  and  the model has Hamiltonian
\begin{align}
H^{\mathrm{(per)}}_L = \sum_{i=1}^L (1-e_i),\qquad H^{\mathrm{(refl)}}_L = \sum_{i=1}^{L-1} (1-e_i).
\end{align}

For our purposes we consider only the case where $\beta=1$. In this case the Hamiltonian has a ground state eigenvalue of $0$, with trivial left eigenvector
\be
\bra{\Psi_L} = \sum_{\alpha\in\mathrm{LP}_{L}} \bra{\alpha},
\ee
and non-trivial right eigenvector, or ground state,
\be
\ket{\Psi_L} = \sum_{\alpha\in\mathrm{LP}_{L}}\psi_\alpha \ket{\alpha}. 
\ee

The ground state at $\beta=1$ can be normalised to have integer components, where the smallest component is $1$ \cite{RazStr01,BatdGN01}. The normalisation is simply the sum of components,
\be 
Z_L = \sum_{\alpha\in\mathrm{LP}_{L}}\psi_\alpha. 
\ee
We recall that the sum of components $Z_L$ is given by \cite{DFZJ07,DF05}
\begin{align}
Z^{\mathrm{(per)}}_{2n}={\rm A}_n,\qquad
 Z^{\mathrm{(refl)}}_{2n}={\rm AV}_{2n+1},\qquad Z^{\mathrm{(refl)}}_{2n+1}&={\rm C}_{2n+2}, \label{eq:norms}
\end{align}
where ${\rm A}_n$ is the number of $n\times n$ alternating sign matrices (ASMs), ${\rm AV}_{2n+1}$ is the number of vertically symmetric ASMs of size $2n+1$, and ${\rm C}_{2n}$ is the number of cyclically symmetric transpose complement plane partitions of size $2n$. These numbers are explicitly given in \appref{app:ASMnumbers}.

\subsection{The boundary entropy generating function}
Let the link pattern $\alpha_0$ consist of small arcs between sites $2i-1$ and $2i$, and site $L$ unpaired if $L$ is odd. We define the boundary state $\bra{B}=\bra{\alpha_0}$, and consider the generating function $F(x)$ defined by placing this boundary state at the top of the lattice, see \figref{Flattice}. Any closed loop that passes through the top boundary acquires a weight $x$ and we sum over all possible configurations.

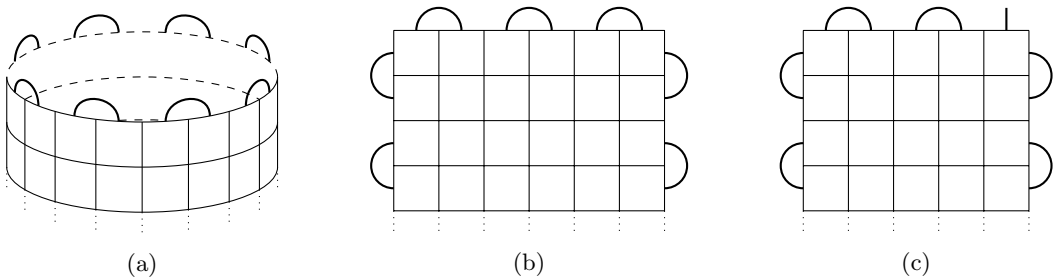
\begin{figure}[ht]
\begin{subfigure}[c]{0.33\linewidth}
\centering
\def\a{3 }
\def\b{1 }
\begin{tikzpicture}[scale=0.6]
\draw (-\a,0)--(-\a,-2);
\draw (\a,0)--(\a,-2);
\draw[dotted] (-\a,-2)--(-\a,-2.5);
\draw[dotted] (\a,-2)--(\a,-2.5);
\draw ({\a*cos(270)},{\b*sin(270)})--({\a*cos(270)},{\b*sin(270)-2});
\draw ({\a*cos(270+20)},{\b*sin(270+20)})--({\a*cos(270+20)},{\b*sin(270+20)-2});
\draw ({\a*cos(270-20)},{\b*sin(270-20)})--({\a*cos(270-20)},{\b*sin(270-20)-2});
\draw ({\a*cos(270+40)},{\b*sin(270+40)})--({\a*cos(270+40)},{\b*sin(270+40)-2});
\draw ({\a*cos(270-40)},{\b*sin(270-40)})--({\a*cos(270-40)},{\b*sin(270-40)-2});
\draw ({\a*cos(270+60)},{\b*sin(270+60)})--({\a*cos(270+60)},{\b*sin(270+60)-2});
\draw ({\a*cos(270-60)},{\b*sin(270-60)})--({\a*cos(270-60)},{\b*sin(270-60)-2});
\draw[dotted] ({\a*cos(270)},{\b*sin(270)-2})--({\a*cos(270)},{\b*sin(270)-2.5});
\draw[dotted] ({\a*cos(270+20)},{\b*sin(270+20)-2})--({\a*cos(270+20)},{\b*sin(270+20)-2.5});
\draw[dotted] ({\a*cos(270-20)},{\b*sin(270-20)-2})--({\a*cos(270-20)},{\b*sin(270-20)-2.5});
\draw[dotted] ({\a*cos(270+40)},{\b*sin(270+40)-2})--({\a*cos(270+40)},{\b*sin(270+40)-2.5});
\draw[dotted] ({\a*cos(270-40)},{\b*sin(270-40)-2})--({\a*cos(270-40)},{\b*sin(270-40)-2.5});
\draw[dotted] ({\a*cos(270+60)},{\b*sin(270+60)-2})--({\a*cos(270+60)},{\b*sin(270+60)-2.5});
\draw[dotted] ({\a*cos(270-60)},{\b*sin(270-60)-2})--({\a*cos(270-60)},{\b*sin(270-60)-2.5});
\draw[thick,smooth,looseness=1.5] ({270+10}:\a and \b) to[out=90,in=90] ({270+30}:\a and \b);
\draw[thick,smooth,looseness=1.5] ({270-10}:\a and \b) to[out=90,in=90] ({270-30}:\a and \b);
\draw[thick,smooth,looseness=2.25] ({270+50}:\a and \b) to[out=90,in=90] ({270+70}:\a and \b);
\draw[thick,smooth,looseness=2.25] ({270-50}:\a and \b) to[out=90,in=90] ({270-70}:\a and \b);
\draw[thick,smooth,looseness=2.25] ({90-50}:\a and \b) to[out=90,in=90] ({90-70}:\a and \b);
\draw[thick,smooth,looseness=2.25] ({90+50}:\a and \b) to[out=90,in=90] ({90+70}:\a and \b);
\draw[thick,smooth,looseness=1.5] ({90+10}:\a and \b) to[out=90,in=90] ({90+30}:\a and \b);
\draw[thick,smooth,looseness=1.5] ({90-10}:\a and \b) to[out=90,in=90] ({90-30}:\a and \b);
\begin{scope}
\draw[dashed] ({0}:\a and \b) arc ({0}:{180}:\a and \b);
\draw ({180}:\a and \b) arc ({180}:{360}:\a and \b);
\end{scope}
\begin{scope}[shift={(0,-1)}]
\draw[dashed] ({180-30}:\a and \b) arc ({180-30}:{30}:\a and \b);
\draw ({180}:\a and \b) arc ({180}:{360}:\a and \b);
\end{scope}
\begin{scope}[shift={(0,-2)}]
\draw ({180}:\a and \b) arc ({180}:{360}:\a and \b);
\end{scope}
\begin{scope}[shift={(0,-1.95)}]
\draw[dashed] ({100}:\a and \b) arc ({100}:{80}:\a and \b);
\end{scope}
\end{tikzpicture}
\caption{\label{fig:Flatticeper}}
\end{subfigure}%
\begin{subfigure}[c]{0.33\linewidth}
\centering
\begin{tikzpicture}[scale=0.6]
\draw (0,0) grid (6,4);
\draw[dotted] (0,0) -- (0,-0.5) (1,0) -- (1,-0.5) (2,0) -- (2,-0.5) (3,0) -- (3,-0.5) (4,0) -- (4,-0.5) (5,0) -- (5,-0.5) (6,0) -- (6,-0.5);
\draw[thick,smooth] (0,0.5) to[out=180,in=270] (-0.5,1) to[out=90,in=180] (0,1.5);
\draw[thick,smooth] (0,2.5) to[out=180,in=270] (-0.5,3) to[out=90,in=180] (0,3.5);
\draw[thick,smooth] (6,0.5) to[out=0,in=270] (6.5,1) to[out=90,in=0] (6,1.5);
\draw[thick,smooth] (6,2.5) to[out=0,in=270] (6.5,3) to[out=90,in=0] (6,3.5);
\draw[thick,smooth] (0.5,4) to[out=90,in=180] (1,4.5) to[out=0,in=90] (1.5,4);
\draw[thick,smooth] (2.5,4) to[out=90,in=180] (3,4.5) to[out=0,in=90] (3.5,4);
\draw[thick,smooth] (4.5,4) to[out=90,in=180] (5,4.5) to[out=0,in=90] (5.5,4);
\end{tikzpicture}
\caption{\label{fig:Flatticerefleven}}
\end{subfigure}%
\begin{subfigure}[c]{0.33\linewidth}
\centering
\begin{tikzpicture}[scale=0.6]
\draw (0,0) grid (5,4);
\draw[dotted] (0,0) -- (0,-0.5) (1,0) -- (1,-0.5) (2,0) -- (2,-0.5) (3,0) -- (3,-0.5) (4,0) -- (4,-0.5) (5,0) -- (5,-0.5);
\draw[thick,smooth] (0,0.5) to[out=180,in=270] (-0.5,1) to[out=90,in=180] (0,1.5);
\draw[thick,smooth] (0,2.5) to[out=180,in=270] (-0.5,3) to[out=90,in=180] (0,3.5);
\draw[thick,smooth] (5,0.5) to[out=0,in=270] (5.5,1) to[out=90,in=0] (5,1.5);
\draw[thick,smooth] (5,2.5) to[out=0,in=270] (5.5,3) to[out=90,in=0] (5,3.5);
\draw[thick,smooth] (0.5,4) to[out=90,in=180] (1,4.5) to[out=0,in=90] (1.5,4);
\draw[thick,smooth] (2.5,4) to[out=90,in=180] (3,4.5) to[out=0,in=90] (3.5,4);
\draw[thick,smooth] (4.5,4) -- (4.5,4.5);
\end{tikzpicture}
\caption{\label{fig:Flatticereflodd}}
\end{subfigure}%
\caption{Boundary conditions for $F_L(x)$ in the case of (a) $L=2n$ periodic, (b) $L=2n$ reflecting, and (c) $L=2n+1$ reflecting.}
\label{fig:Flattice}
\end{figure}

Let $k_{\alpha}$ denote the number of closed loops produced when the link pattern $\alpha$ is paired with $\alpha_0$. The generating function $F_{L}(x)$ for $L=2n$ or $L=2n+1$ is then 
\begin{align}
 F_L(x):= \frac{\bra{\alpha_0} \Psi_L \rangle_x} {\bra{\alpha_0} \Psi_L \rangle_{x=1}} = \frac1{Z_L}\ \sum_{\alpha\in\mathrm{LP}_L}  \psi_\alpha \bra{\alpha_0} \alpha\rangle_x = \frac1{Z_L}\sum_{\alpha\in\mathrm{LP}_L}\psi_\alpha x^{k_\alpha} 
  =\sum_{k=1}^n \frac{a_{k,n}}{Z_L} x^k, \label{eq:genfun}
\end{align}
where $a_{k,n}$ is the sum of components of $\Psi_L$ for which $k_\alpha=k$.

$F(x)$ is the Affleck--Ludwig $g$-factor \cite{AffleckLudwig} for critical bond percolation, or rather the Temperley--Lieb (or completely packed O(1)) loop model. For $x=1$ it is the overlap in the related quantum XXZ spin chain with the deformed dimerised state \cite{Pozsgay2014}. One defines the boundary entropy $S_B$ by 
\be
S_B=-\log \big( F(x)\big).
\label{eq:entropy}
\ee
Our aim is to calculate the asymptotics of \eqref{eq:entropy} from \eqref{eq:genfun}. This is done mathematically rigorously for the periodic case in \secref{perasymp}, and conjecturally for the reflecting case in \secref{cftargument}.

\section{Summary of main results}
\subsection{Exact finite size expressions}
One of our main results is an explicit expression for any size $L$ for the boundary entropy generating function $F_L(x)$ in both the periodic and reflecting cases of the TL loop model at $\beta=1$.
\begin{theorem}
\label{thm:genfun}
The boundary entropy generating functions $F_L(x)$ on the semi-infinite cylinder and semi-infinite strip are given by
\begin{align}
\label{eq:FperE}F_{2n}^{\mathrm{(per)}}(x) &= \sum_{k=1}^n \binom{n+k-2}{k-1}\frac{(2n-1)! (2n-k-1)! }{(3 n-2)! (n-k)!}x^k,\\
F_{2n}^{\mathrm{(refl)}}(x) &= \prod_{k=0}^{n-1}\left(\frac{(4k+3)!(4k+2)!}{(3k+2)(6k+3)!(2k+1)!}\right) \times\nn\\
&\label{eq:FreflE} \hphantom{\frac{(4k+3)!(4k+2)!}{(3k+2)(6k+3)!} } \det_{1\leq i,j\leq n} \left[\binom{i+j-2}{2j-i}+x\binom{i+j-2}{2j-i-1}\right],\\
\label{eq:FreflO}F_{2n+1}^{\mathrm{(refl)}}(x) &= \prod_{k=0}^n \left(\frac{(4k)!(4k+1)!}{(3k+1)(6k)!(2k)!}\right) \det_{1\leq i,j\leq n} \left[\binom{i+j-1}{2j-i}+x\binom{i+j-1}{2j-i-1}\right].
\end{align}
\end{theorem}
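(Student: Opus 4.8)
The common starting point is the reduction already recorded in \eqref{eq:genfun}: since $F_L(x)=Z_L^{-1}\sum_{k}a_{k,n}\,x^k$ with $a_{k,n}=\sum_{\alpha:\,k_\alpha=k}\psi_\alpha$, it suffices to evaluate the refined sums $a_{k,n}$ of ground-state components grouped by the number of loops $k_\alpha$ closed against the reference pattern $\alpha_0$. My plan is to identify these refined sums with refined enumerations of the combinatorial objects governing $Z_L$ in \eqref{eq:norms} — alternating sign matrices in the periodic case, and vertically symmetric ASMs / cyclically symmetric transpose complement plane partitions in the reflecting case — and then to substitute the known closed forms for those refined numbers.

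\textbf{Periodic case.} For \eqref{eq:FperE} the key claim is $a_{k,n}=A_{n,k}$, the refined ASM number counting $n\times n$ ASMs whose unique $1$ in the top row sits in column $k$; one checks directly that the coefficient in \eqref{eq:FperE} is exactly $A_{n,k}/A_n$ with $A_n=Z_{2n}^{\mathrm{(per)}}$. I would establish this in two steps. First, realise $\psi_\alpha$ inside the inhomogeneous $O(1)$ loop model, where the components $\psi_\alpha(z_1,\dots,z_{2n})$ form the polynomial solution of the $q$KZ equations at $q=e^{2\pi i/3}$ and specialise to the integer components at the homogeneous point. Second, show that the loop-number statistic $k_\alpha$ against $\alpha_0$ is distributed, under the weights $\psi_\alpha$, like the standard ASM refinement; this is the content of the refined Razumov--Stroganov sum rule, accessible either through the exchange/recursion relations satisfied by $\psi_\alpha$ or through the Wieland-type symmetry of the fully-packed-loop model underlying the ASM correspondence. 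Inserting the product formula for $A_{n,k}$ (see \appref{app:ASMnumbers}) then gives \eqref{eq:FperE}. A more self-contained variant is to derive a recursion in $n$ for the polynomial $Z_{2n}^{\mathrm{(per)}}F_{2n}^{\mathrm{(per)}}(x)$ from the TL/$q$KZ structure and verify that the right-hand side of \eqref{eq:FperE} solves it together with the initial datum $F_2^{\mathrm{(per)}}(x)=x$.

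\textbf{Reflecting case.} For \eqref{eq:FreflE} and \eqref{eq:FreflO} I would encode the whole $x$-weighted overlap $\bra{\alpha_0}\Psi_L\rangle_x$ as the partition function of the inhomogeneous loop / six-vertex model on the strip with one reflecting end, the deformation $\beta\to x$ being applied only to loops that touch the top boundary. Such reflecting-end partition functions admit a Tsuchiya (reflecting Izergin--Korepin) determinant; specialising the spectral parameters to the homogeneous combinatorial point should collapse this into a determinant of binomial coefficients. Reading that determinant through Lindstr\"om--Gessel--Viennot as a weighted count of families of non-intersecting lattice paths, with the two binomial terms distinguishing the two boundary local configurations (loop closed versus loop passed through) and the second weighted by $x$, is what produces the entries $\binom{i+j-2}{2j-i}+x\binom{i+j-2}{2j-i-1}$ and the $\binom{i+j-1}{\cdots}$ variant for odd $L$. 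The scalar prefactors are then fixed by dividing out $Z_L=\mathrm{AV}_{2n+1}$ or $Z_L=\mathrm{C}_{2n+2}$ from \eqref{eq:norms} and matching the product formulae in \appref{app:ASMnumbers}; equivalently one pins the normalisation down by checking $F_L(1)=1$.

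\textbf{Main obstacle.} The delicate point throughout is the passage from the loop-number statistic $k_\alpha$ to a \emph{refined} enumeration with product or determinant form. In the periodic case this rests on the refined Razumov--Stroganov identity, which is available, so \eqref{eq:FperE} can be made fully rigorous. In the reflecting case the corresponding refined sum rules for VSASM and CSTCPP are much less standard, and the cleanest route is the determinant specialisation above; the genuinely hard steps are (i) justifying the Tsuchiya determinant at $q=e^{2\pi i/3}$ with the boundary loop weight $x$ inserted, and (ii) simplifying the resulting determinant at the homogeneous point into the stated binomial form with the correct prefactor. If a complete proof of (i)--(ii) is out of reach, \eqref{eq:FreflE}--\eqref{eq:FreflO} would instead be underpinned by the high-precision numerical verification, with the asymptotics deferred to \secref{cftargument}.
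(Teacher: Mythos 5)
Your opening reduction (to the refined sums $a_{k,n}$) coincides with the paper's, but your treatment of the periodic case has a genuine gap at its central step. The claim that $a_{k,n}={\rm A}_{n,k}$ is ``the content of the refined Razumov--Stroganov sum rule'' and hence available off the shelf is not correct: the refined versions of the Razumov--Stroganov correspondence that exist in the literature concern a single spectral inhomogeneity (equivalently a marked boundary edge on the fully-packed-loop side), not the weight $x^{k_\alpha}$ attached to loops closed against $\alpha_0$. In fact the paper stresses in its Conclusion that, once \eqref{eq:FperE} is proved, combining it with the Razumov--Stroganov--Cantini--Sportiello theorem \cite{RazStr04,CantiniS11} yields an equivalence between two \emph{different} refined enumerations of ASMs for which no combinatorial (e.g.\ Wieland-gyration) proof is known; what you propose to invoke is therefore a corollary of the theorem, not an ingredient you may assume. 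The paper's proof of \propref{Fper} runs along a different chain: \lemref{ka} identifies $k_\alpha$ with the Dyck-ribbon number $d_\alpha$; by \cite[Theorem 2]{dGLS2012} the specialised deformed Macdonald polynomial $M(u,\dots,u)$ expands in the ground-state (Kazhdan--Lusztig) basis with coefficients $(-[u]/[u+1])^{n-d_\alpha}$, whence $F_{2n}^{\mathrm{(per)}}(x)=x^nM(u,\dots,u)/Z_{2n}$ for $x=-[u+1]/[u]$; and the evaluation $M(u,\dots,u)=\sum_k{\rm A}_{n,k}x^{k-n}$ then follows from the constant-term and refined-TSSCPP identities conjectured in \cite{ZJDF08} and proved in \cite{Zeil2007,FonsZJ08}, together with ${\rm A}_{n,k}={\rm A}_{n,n+1-k}$. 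Your alternative ``self-contained'' recursion in $n$ is likewise unsubstantiated: no closed recursion for $Z_{2n}F_{2n}(x)$ at the homogeneous point is exhibited or known.

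For the reflecting case you offer only a programme---a Tsuchiya-type reflecting-end determinant carrying the boundary weight $x$ at a third root of unity, followed by homogeneous specialisation---and you explicitly concede that its two key steps may be out of reach, falling back on numerical verification. That leaves \eqref{eq:FreflE}--\eqref{eq:FreflO} unproved, whereas they are rigorous theorems, and no new Izergin--Korepin-type computation is needed: after rewriting the entries with the Vandermonde convolution $\sum_s\binom{a}{b-s}\binom{c}{d+s}=\binom{a+c}{b+d}$, the determinants in \eqref{eq:FreflE} and \eqref{eq:FreflO} are precisely $x^nS(2n,n-1|x^{-1})$ and $x^nS(2n+1,n|x^{-1})$ of \cite{dGPZJ2009}. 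There $S(t)=\sum_a t^{m_a}y_a$ is a generating function over a transformed basis of the homogeneous ground state; Lemma 1 of \cite{dGPZJ2009} shows each $y_a$ is a partial sum of components $\psi_\alpha$ within which the number of odd sites paired to the right is constant, and \lemref{ka} converts this statistic into $m_a=n-k_a$, giving $x^nS(x^{-1})={\rm AV}_{2n+1}F_{2n}^{\mathrm{(refl)}}(x)$ and its odd-size analogue with normalisation ${\rm C}_{2n+2}$. So the rigour you defer to high-precision numerics is in fact attained by matching loop statistics against the known strip determinants of \cite{dGPZJ2009}; in the paper, numerics enter only the asymptotic conjectures, never the finite-size theorem.
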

The proof of this theorem is given in \secref{exactproofs}. Unfortunately we have not been able to obtain an explicit result for odd periodic systems, except for some special values of $x$ which are listed in \appref{app:specialvpero}.

$F_{2n}^{\mathrm{(per)}}(x)$ has the property $F(x)= x^{n+1}F(1/x)$, which is a consequence of the fact that if an even-sized periodic link pattern gives $k$ loops when paired with $\alpha_0$, then its rotation by one step gives $n-k+1$ loops. Similarly $F_{2n+1}^{\mathrm{(refl)}}(x)$ has the property $F(x)= x^nF(1/x)$, which is a consequence of the fact that if an odd-sized reflecting link pattern gives $k$ loops when paired with $\alpha_0$, then its reflection gives $n-k$ loops.

\subsection{Asymptotics} 
In order to remove the overall factor of $x$ in the even cases, for both periodic and reflecting boundaries we introduce $\tilde F_L(x)$, defined by
\begin{align}
\label{eq:FtildeE} F_{2n}(x) &= x \tilde F_{2n}(x),\\
\label{eq:FtildeO} F_{2n+1}(x) &= \tilde F_{2n+1}(x).
\end{align}
We have analytically calculated exact asymptotics of $\tilde F_L(x)$ as $L\rightarrow\infty$ in the periodic case, and conjecture asymptotic expressions in the reflecting case (supported by numerical results and a conformal field theory argument).

\subsubsection{Asymptotics for the model on the cylinder}
To determine the asymptotic behaviour $L\rightarrow\infty$ from \thmref{genfun} it will be convenient to use the parametrisation
\begin{equation}
\label{eq:xparam}
x=\frac{\sin\big(\frac{\pi(r+1)}{3}\big)}{\sin\big(\frac{\pi r}{3})}, \quad 0<r<3.
\end{equation}
For periodic boundaries and even size $L=2n$ we can determine the asymptotics of $\tilde F_{2n}^{\mathrm{(per)}}(x)$ rigorously from \eqref{eq:FperE}.
\begin{prop}
\label{prop:perasymp}
The asymptotics of $\tilde F_{2n}^{\mathrm{(per)}}(x)$ is given by
\be
\tilde F_{2n}^{\mathrm{(per)}}(x)= \varepsilon_{n,x} \exp\big(n f_0(x)+ f_1(x)+n^{-1}f_2(x),
+\dots \big),
\ee
where $\varepsilon_{n,x}  = (-1)^{n+1}$ if $x<-1$, $\varepsilon_{n,x}  = 1$ otherwise; and
\begin{align}
\exp(f_0(x))&=\begin{cases}
\dfrac{4}{3\sqrt{3}}\ \dfrac1{\tan\big(\frac{\pi  r}{6}\big)}\ \dfrac{\sin\big(\frac{\pi  (r+1)}{6}\big)^2}{\sin\big(\frac{\pi  (r+2)}{6}\big)^2}, & 0<r\leq \tfrac52\;\; (x\geq -1),\\[4mm]
\dfrac{-4}{3\sqrt{3}}\ \dfrac1{\tan\big(\frac{\pi(r-3)}{6}\big)}\ \dfrac{\sin\big(\frac{\pi(r-2)}{6}\big)^2}{\sin\big(\frac{\pi(r-1)}{6}\big)^2},  & \tfrac52\leq r<3\;\; (x\leq -1),
\end{cases}\\[2mm]
\exp(f_1(x))&=\begin{cases}
\mathrlap{\dfrac{\sqrt{3}}{2}\ \dfrac{\sin\big(\frac{\pi r}{2}\big)}{\sin\big(\frac{\pi  (r+1)}{3}\big)},}\hphantom{\dfrac{-4}{3\sqrt{3}}\ \dfrac1{\tan\big(\frac{\pi(r-3)}{6}\big)}\ \dfrac{\sin\big(\frac{\pi(r-2)}{6}\big)^2}{\sin\big(\frac{\pi(r-1)}{6}\big)^2},  } & 0<r\leq \tfrac52\;\; (x\geq -1),\\[4mm]
\mathrlap{\dfrac{-\sqrt{3}}{2}\ \dfrac{\sin\big(\frac{\pi(r-3)}{2}\big)}{\sin\big(\frac{\pi  (r-2)}{3}\big)},}\hphantom{\dfrac{-4}{3\sqrt{3}}\ \dfrac1{\tan\big(\frac{\pi(r-3)}{6}\big)}\ \dfrac{\sin\big(\frac{\pi(r-2)}{6}\big)^2}{\sin\big(\frac{\pi(r-1)}{6}\big)^2},  } & \tfrac52\leq r<3\;\; (x\leq -1),
\end{cases}\\[2mm]
f_2(x)&=\begin{cases}
\mathrlap{\dfrac{5}{72}(\cos(\pi r)+1),}\hphantom{\dfrac{-4}{3\sqrt{3}}\ \dfrac1{\tan\big(\frac{\pi(r-3)}{6}\big)}\ \dfrac{\sin\big(\frac{\pi(r-2)}{6}\big)^2}{\sin\big(\frac{\pi(r-1)}{6}\big)^2},  } & 0<r\leq \tfrac52\;\; (x\geq -1),\\[3mm]
\mathrlap{\dfrac{5}{72}(\cos(\pi(r-3))+1),}\hphantom{\dfrac{-4}{3\sqrt{3}}\ \dfrac1{\tan\big(\frac{\pi(r-3)}{6}\big)}\ \dfrac{\sin\big(\frac{\pi(r-2)}{6}\big)^2}{\sin\big(\frac{\pi(r-1)}{6}\big)^2},  } & \tfrac52\leq r<3\;\; (x\leq -1).
\end{cases}
\end{align}
At $x=-1$ these expressions are only valid for $n$ odd.
\end{prop}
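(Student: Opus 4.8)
The plan is to extract the large-$n$ behaviour directly from the closed form \eqref{eq:FperE} by the Laplace (saddle-point) method for sums, which automatically produces an expansion of the exact shape $\exp\big(n f_0+f_1+n^{-1}f_2+\cdots\big)$ claimed in \propref{perasymp}. Write the $k$-th term of $\tilde F_{2n}^{\mathrm{(per)}}(x)=F_{2n}^{\mathrm{(per)}}(x)/x$ as
\be
g_k=\binom{n+k-2}{k-1}\frac{(2n-1)!\,(2n-k-1)!}{(3n-2)!\,(n-k)!}\,x^{k-1},
\ee
and apply Stirling's formula to $\log g_k$ with $t=k/n$ kept in a compact subinterval of $(0,1)$. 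The three factorials in the numerator and the four in the denominator have leading $n$-coefficients summing to $5$ on both sides, so the $n\log n$ terms cancel and $\tfrac1n\log g_k\to\Phi(t)$ with
\be
\Phi(t)=(1+t)\log(1+t)+(2-t)\log(2-t)-t\log t-(1-t)\log(1-t)-3\log3+2\log2+t\log x.
\ee

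The next step is to locate and exploit the saddle. Stationarity $\Phi'(t)=0$ reads $\frac{(1+t)(1-t)}{t(2-t)}\,x=1$, equivalently $(1-x)\,t^2-2t+x=0$, with roots $t_\pm=\big(1\pm\sqrt{1-x+x^2}\,\big)/(1-x)$. Here the parametrisation \eqref{eq:xparam} is the decisive simplification: a short computation gives $1-x+x^2=\tfrac34\sin^{-2}\!\big(\tfrac{\pi r}{3}\big)$, so the discriminant is a perfect square and $t_\pm$, along with $\Phi(t_\pm)$, $\Phi''(t_\pm)$ and the higher derivatives, all rationalise into $\sin$ and $\cos$ of multiples of $\tfrac{\pi r}{6}$. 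For $x>0$ the root $t_-$ lies in $(0,1)$ and the estimate is a textbook interior-maximum Laplace expansion. Using the saddle relation, $f_0=\Phi(t_-)$ collapses to $\exp f_0=\tfrac{4}{27}(1+t_-)(2-t_-)^2/(1-t_-)$, which the parametrisation turns into the stated form; $f_1$ is assembled from the Gaussian prefactor $\tfrac12\log\!\big(2\pi/(-\Phi''(t_-))\big)$, the $O(1)$ remainder of Stirling's formula, and the shift $-\log x$ coming from $\tilde F=F/x$ (the $\tfrac12\log n$ pieces cancelling); and $f_2$ is the next order of the Laplace expansion, built from $\Phi'''$, $\Phi''''$ and the subleading Stirling correction.

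It remains to pass to negative $x$ and to account for the sign and the parity. For $x<0$ the summand alternates and $t_-$ leaves $(0,1)$, so I would represent the sum as a contour integral — conveniently, the consecutive-term ratio $\frac{g_{k+1}}{g_k}=\frac{(n+k-1)(n-k)}{k(2n-k-1)}x$ shows it is, up to a prefactor, the terminating series ${}_2F_1\big(n,1-n;2-2n;x\big)$, for which an Euler integral representation is available — and deform onto the steepest-descent path through the (now real but exterior) continuation of $t_-$; this yields the first-branch formulas on the whole range $0<r<\tfrac52$, i.e.\ $x\in(-1,\infty)$. The second branch ($x<-1$) then follows rigorously from the functional equation $F_{2n}^{\mathrm{(per)}}(x)=x^{n+1}F_{2n}^{\mathrm{(per)}}(1/x)$, equivalently $\tilde F_{2n}(x)=x^{n-1}\tilde F_{2n}(1/x)$: since $1/x\in(-1,0)$ lies in the first branch (at parameter $5-r$), the factor $x^{n-1}$ supplies both the magnitude adjusting $f_0,f_1$ and the sign $(-1)^{n-1}=(-1)^{n+1}=\varepsilon_{n,x}$. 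At $x=-1$ the map $x\mapsto1/x$ is the identity, so $\tilde F_{2n}(-1)=(-1)^{n-1}\tilde F_{2n}(-1)$, forcing $\tilde F_{2n}(-1)=0$ for even $n$; this is precisely why the expressions are valid at $x=-1$ only for $n$ odd.

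The main obstacle is rigour in the alternating regime $x<0$. For $x>0$ the interior saddle makes the leading term and every correction routine, and the $f_i$ are pinned down by the parametrisation. For $x<0$ the dominant saddle sits outside the summation range, so a bare real Laplace argument is unavailable: the crux is to fix a global integral representation valid for all $x$, to justify deforming the contour onto the descent path, and to bound the discarded tails uniformly in $x$ up to the value $x=-1$, where $\operatorname{Re}\Phi(t_+)=\operatorname{Re}\Phi(t_-)$ marks a Stokes transition. Controlling this transition — and matching it with the parity argument above — is the delicate part; everything else is the standard, if lengthy, bookkeeping of a saddle-point expansion rationalised through \eqref{eq:xparam}.
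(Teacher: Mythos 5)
Your route is genuinely different from the paper's. The paper never works with the sum \eqref{eq:FperE} directly: it recasts $\tilde F$ as the terminating hypergeometric series \eqref{eq:2F1}, derives the second-order ODE \eqref{eq:diffeq}, \emph{assumes} an expansion of the form $\tilde F=\exp\big(\sum_{j\ge0}n^{1-j}f_j\big)$, and solves the resulting hierarchy of first-order ODEs for the $f_j$, fixing integration constants and branch choices from the exact evaluations at $x=1,0,-1,\pm\infty$ in \eqref{eq:init}; the crossover $x_{\rm c}=-1$ only emerges at the end, by equating \eqref{eq:f0a} and \eqref{eq:f0b}. Where your saddle-point analysis works (fixed $x>0$), it buys something the paper's argument does not: it \emph{proves} that an expansion of the assumed shape exists rather than postulating it, and your setup is correct --- I checked that your $\Phi(t)$, the saddle equation $(1-x)t^2-2t+x=0$, and the collapse $\exp\Phi(t_-)=\tfrac{4}{27}(1+t_-)(2-t_-)^2/(1-t_-)$ are all right. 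Your derivation of the $x<-1$ branch from the inversion symmetry $\tilde F_{2n}(x)=x^{n-1}\tilde F_{2n}(1/x)$ is also cleaner than the paper's separate re-integration with boundary data at $-\infty$: the substitution $r\mapsto 5-r$ together with the factor $x^{n-1}$ does reproduce the stated second-branch formulas and the sign $\varepsilon_{n,x}=(-1)^{n+1}$, and your parity argument for $\tilde F_{2n}(-1)=0$ ($n$ even) is correct.

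There is, however, a genuine gap, and it sits exactly where your whole negative-$x$ treatment hangs: the interval $-1<x<0$. Your plan is to trade the alternating sum for a contour integral via ``an Euler integral representation'' of ${}_2F_1(n,1-n;2-2n;x)$. That representation does not exist for these parameters: Euler's formula requires $\operatorname{Re}c>\operatorname{Re}b>0$, whereas here $c=2-2n$ is a negative integer, so the prefactor $\Gamma(c)/\big(\Gamma(b)\Gamma(c-b)\big)$ is meaningless and the convergence condition fails whichever of the two upper parameters plays the role of $b$ (one has $b=1-n<0$ or $c-b=2-3n<0$). A valid global representation can be built --- e.g.\ writing the two binomial-type factors in \eqref{eq:FperE} as coefficient-extraction contour integrals and summing the geometric series in $k$, in the spirit of \cite{ColomoP2010} --- but constructing it, justifying the deformation onto the steepest-descent path, and above all carrying out the Stokes analysis at $x=-1$ (where two saddle contributions become equal in modulus, cancel for $n$ even and add for $n$ odd, which is precisely what the odd-$n$ statement of \propref{perasymp} at $x=-1$ requires) \emph{is} the proof in this regime, and none of it is done; you flag it as ``the delicate part'' but leave it open. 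Since your $x<-1$ branch is obtained by inversion from $(-1,0)$, this unproven piece is load-bearing for everything to the left of $x=0$ except the parity statement. The paper's ODE route sidesteps this difficulty entirely (the same linear equations and boundary conditions govern both branches), at the cost of assuming the form of the expansion; your route would make the result fully constructive, but only once the missing contour-integral analysis is supplied.
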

We note that in each case, the two expressions coincide at $r=5/2$ ($x=-1$). The first order asymptotics $f_0(x)$ was also calculated in \cite[Section~4]{ColomoP2010} ($\tilde F_{2n}^{\mathrm{(per)}}(x)$ is equal to $h_{2n}(x;\frac12,1)$ in that paper's notation).

\subsubsection{Asymptotics for the model on the strip} 
For reflecting boundary conditions we were not able to obtain rigorous results from \eqref{eq:FreflE} and \eqref{eq:FreflO}, except for some special values of the loop weight $x$, see \appref{app:specialvrefle} and \ref{app:specialvreflo}. We can however analyse \eqref{eq:FreflE} and \eqref{eq:FreflO} with arbitrary numerical precision and have in this way been able to obtain closed form expressions for their exact asymptotics. 

We assume the asymptotic form
\be
\label{eq:reflexp}
\tilde F_L^{\mathrm{(refl)}}(x)= \varepsilon_{n,x} \exp\big(n g_0(x) + \log(n) g_1(x) + g_2(x) + n^{-1} g_3(x) +\dots\big), 
\ee
which now contains a logarithmic term that was absent from the periodic case. Here $\varepsilon_{n,x}$ is chosen to match the sign of $\tilde F_L^{\mathrm{(refl)}}(x)$. This term is due to the presence of corners as will be explained in \secref{cftargument}.

\begin{conj}
\label{conj:reflasymp}
With the parametrisation given in \eqref{eq:xparam}, for $L=2n$ we have
\begin{align}
g_0&=f_0,\\
g_1&=\begin{cases}
\frac16(1-r^2), &\qquad 0<r< \tfrac52\;\; (x> -1),\\
\frac16(1-(r-3)^2),&\qquad \tfrac52\leq r<3\;\; (x\leq -1),
\end{cases}
\end{align}
and for $L=2n+1$,
\begin{align}
g_0&=f_0,\\
g_1&=\begin{cases}
-\frac16(1-r)^2, &\qquad 0<r\leq \tfrac52\;\; (x\geq -1),\\
-\frac16(4-r)^2, &\qquad \tfrac52\leq r<3\;\; (x\leq -1).
\end{cases}
\end{align}
(Note that the two expressions for $g_1$ coincide at $x=-1$ in the odd case, but not in the even case, see \figref{refldata}.)
\end{conj}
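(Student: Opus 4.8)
Since a fully rigorous route is not available---the determinants in \eqref{eq:FreflE}--\eqref{eq:FreflO} resist direct asymptotic analysis at general $x$---the plan is two-pronged: derive the \emph{form} of \eqref{eq:reflexp}, and in particular the coefficients $g_0$ and $g_1$, from a conformal corner argument, and then pin down and confirm these coefficients by high-precision analysis of the exact determinant formulas, anchored at the special values of $x$ for which closed forms are available (\appref{app:specialvrefle}, \appref{app:specialvreflo}) and at $x=1$. The identity $g_0=f_0$ I would argue first and on general grounds: the leading rate is governed by the bulk free energy per row, a bulk quantity insensitive to whether the lateral boundaries are periodic or reflecting, so it must coincide with the cylinder value $f_0$ obtained rigorously in \propref{perasymp}. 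Consistently, the ASM-type product prefactors in \eqref{eq:FreflE}--\eqref{eq:FreflO} contribute only to the constant and $\log n$ terms via Stirling and Barnes $G$-function asymptotics, so the exponential rate is carried entirely by the determinant and matches the periodic analysis.

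The crux is the logarithmic term $g_1\log n$, which is absent on the cylinder precisely because that geometry has no corners, whereas the strip has two corners where the weighted top boundary $\alpha_0$ meets the reflecting sides. By the Cardy--Peschel mechanism \cite{CardyPeschel}, a corner of interior angle $\theta$ produces a logarithmic contribution to the free energy; for percolation $c=0$ the purely central-charge part vanishes, and the surviving contribution is set by the dimension $h(r)$ of the boundary-condition-changing operator sitting at the corner, with a geometric factor $\pi/\theta$ fixed by the right-angle corners. I would read $h(r)$ off the Coulomb-gas/Kac formula for the dense $O(1)$ loop model; the parametrization \eqref{eq:xparam}, built on the bulk weight $\beta=1=2\cos(\pi/3)$, is exactly the one that linearizes the loop weight so that $h(r)$ becomes a quadratic in $r$, reproducing the conjectured $g_1$ and explaining the detailed development in \secref{cftargument}. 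The even/odd discrepancy follows because $\alpha_0$ carries an unpaired site for odd $L$, so the two cases sit in different sectors (through-line versus none) and the relevant operator---hence the quadratic---differs; the vanishing of both expressions at $r=1$ ($x=1$) is forced by $F_L(1)=1$ from \eqref{eq:genfun}.

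To confirm the coefficients I would evaluate the determinants \eqref{eq:FreflE}--\eqref{eq:FreflO} in exact rational arithmetic over a wide range of $n$ and a grid of rational $r$, form $\log\tilde F_L^{\mathrm{(refl)}}$, and extract $g_0,g_1,g_2,\dots$ by iterated Richardson and Bulirsch--Stoer extrapolation against the ansatz \eqref{eq:reflexp}; subtracting the already-known $g_0=f_0$ sharpens the estimate of $g_1$ considerably. The special-$x$ closed forms of the appendices, together with the reflection symmetry $F_{2n+1}(x)=x^n F_{2n+1}(1/x)$, furnish independent anchor points that the fitted $g_1$ must hit exactly.

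The main obstacle to upgrading this to a theorem is the rigorous large-$n$ asymptotics of the determinants at general $x$. Unlike the periodic sum \eqref{eq:FperE}, which is amenable to Darboux/saddle-point methods, these are $n\times n$ determinants of a matrix that is neither Toeplitz nor Hankel but a two-term ($x$-mixed) combination of binomial shifts; obtaining even the leading rate, let alone the subleading $\log n$ coefficient, would require identifying an underlying orthogonal-polynomial or integrable structure and carrying out a Borodin--Okounkov or Riemann--Hilbert analysis of the deformed kernel. Isolating $g_1$ rigorously is especially delicate, since it is a correction sitting beneath the exponential growth of the determinant---which is precisely why the statement is offered as a conjecture rather than a theorem.
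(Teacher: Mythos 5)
Your proposal is correct and follows essentially the same route as the paper: $g_0=f_0$ is argued from insensitivity of the extensive term to the lateral boundary conditions, $g_1$ is obtained from the Cardy--Peschel corner mechanism with boundary-condition-changing operators at $c=0$ (the even/odd difference traced to the unpaired strand), and the coefficients are confirmed by high-precision analysis of the exact determinant formulas anchored at the special values of $x$ --- precisely the combination of \secref{cftargument}, \figref{refldata} and \appref{app:asymps}. What the paper adds is only the execution of the steps you defer: the effective corner weight $\tilde h = 2h - \frac{c}{16}$ from the map $z=w^{1/2}$, the identification of the corner operator as $\phi_{r,r}$ with weight $h_{r,r}$, the fusion $\phi_{r,r}\times\phi_{1,2}$ with the channel $\phi_{r,r+1}$ fixed by the $r\to 1$ limit in the odd case, and the shift $r\to r-3$ producing the $x\leq -1$ branch; note also that your aside about the product prefactors in \eqref{eq:FreflE}--\eqref{eq:FreflO} is inaccurate (their logarithms grow like $n^{2}$ and cancel against the determinant, rather than contributing only at constant and $\log n$ order), though nothing in your argument rests on it.
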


\begin{figure}[ht]
\centering
\begin{subfigure}{.45\linewidth}
\centering
\includegraphics[width=\textwidth]{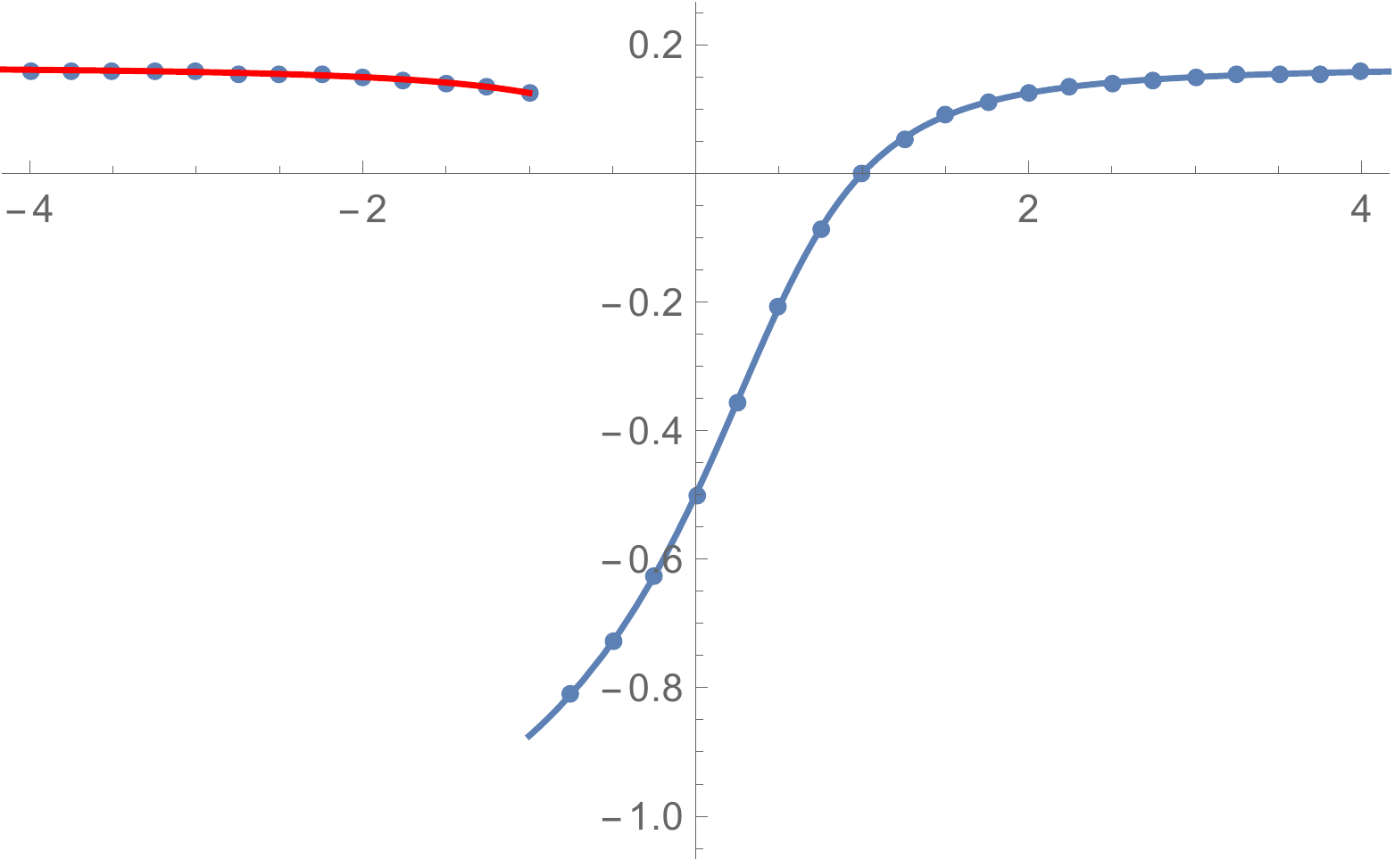}
\caption{}
\end{subfigure}
\quad
\begin{subfigure}{.45\linewidth}
\centering
\includegraphics[width=\textwidth]{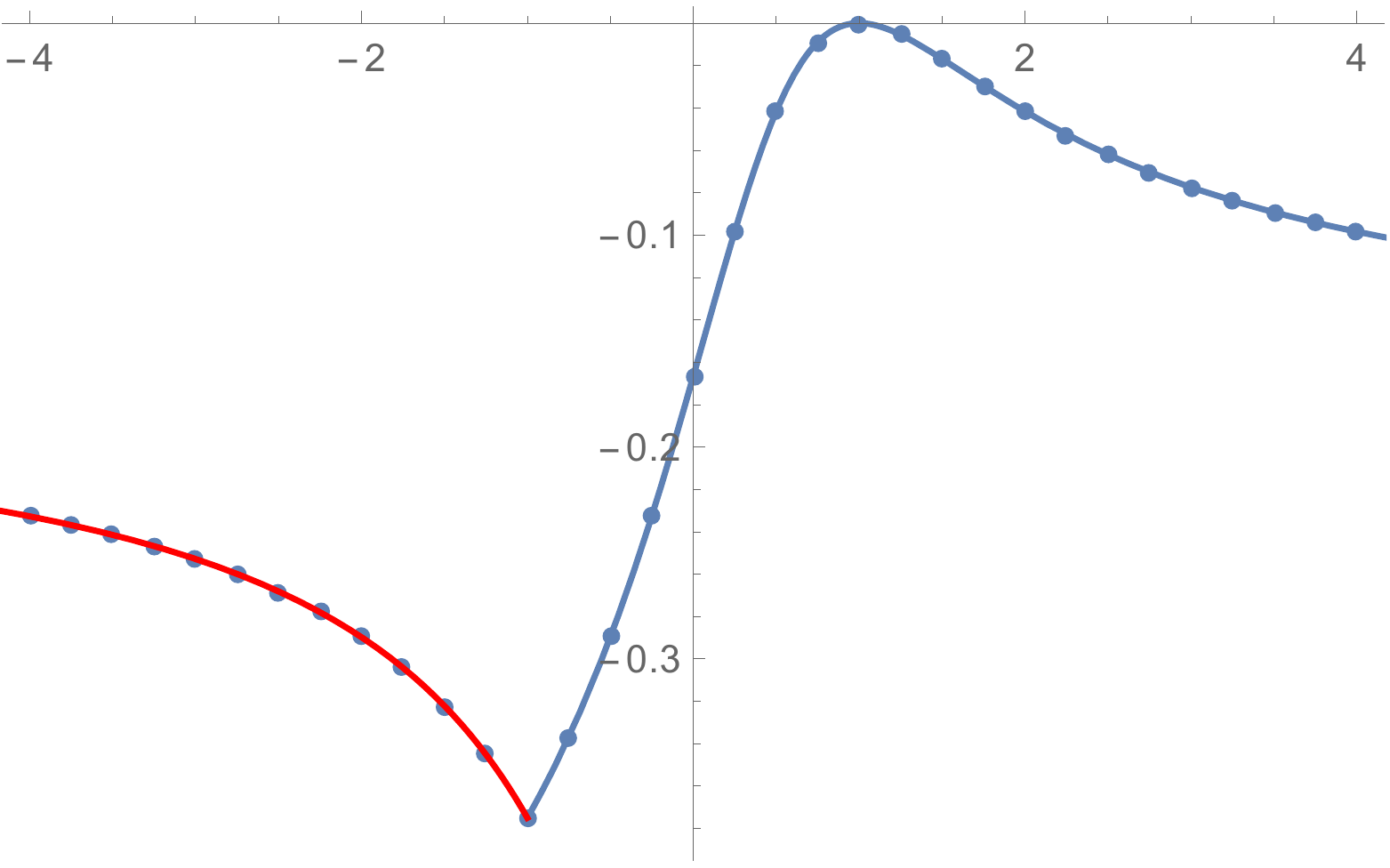}
\caption{}
\end{subfigure}
\caption{Comparison of expressions in \conjref{reflasymp} to numerical results for $g_1(x)$ for (a) $L=2n$, and (b) $L=2n+1$, plotted against $x$. In each case, the blue line is the expression for $x>-1$, the red is the expression for $x<-1$, and the blue dots are obtained by a fit to data of $\tilde F_{L}(x)$ from \eqref{eq:FreflE} and \eqref{eq:FreflO}, with even $n$ between $50$ and $100$.}
\label{fig:refldata}
\end{figure}

These formul\ae\ are supported by explicit calculations for the special values of $x$ in \appref{app:asymps}. While these results are mathematically speaking conjectures, we stress that they can be ascertained with arbitrary numerical accuracy from \eqref{eq:FreflE} and \eqref{eq:FreflO}. They can also be obtained from a conformal field theoretic argument which we will provide in \secref{cftargument}.

\section{Exact expressions for the boundary entropy for finite size}
\label{sec:exactproofs}

To prove \thmref{genfun} we will need a variety of results that are scattered across the literature. In the following we will, where required, clarify the connection between our notation and that used elsewhere. First we elucidate a few different ways to express the number of boundary loops $k_\alpha$ in terms of properties of the link pattern $\alpha$. 

\subsection{Boundary loops and properties of Dyck paths}
There is a well-known bijection between link patterns and Dyck paths (see Figures~\ref{fig:lpDyckeven} and \ref{fig:lpDyckodd}). For even $L$, a Dyck path is a path of $L$ steps from $(0,0)$ to $(L,0)$, where each step can either be a diagonal up-step or a diagonal down-step, such that the height of the path is never less than zero. The bijection to link patterns is made by interpreting each up-step from $(i-1,j-1)$ to $(i,j)$ as an opening of a link at site $i$, and each down-step from $(i-1,j)$ to $(i,j-1)$ as a closing of a link at site $i$. For odd $L$, the interpretation of a link pattern as a Dyck path is slightly different: The unpaired link is interpreted as an up-step and the path ends at height one, instead of height zero (see \figref{lpDyckodd}).

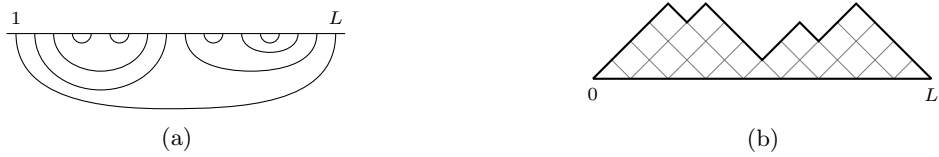
\begin{figure}[ht]
\begin{subfigure}[c]{0.5\linewidth}
\centering
\begin{tikzpicture}[scale=0.25]
\draw (0.5,0) -- (18.5,0);
\draw[smooth] (1,0) to[out=270,in=180] (9,-4) to[out=0,in=270] (18,0);
\draw[smooth] (2,0) to[out=270,in=180] (5.5,-3) to[out=0,in=270] (9,0);
\draw[smooth] (3,0) to[out=270,in=180] (5.5,-2) to[out=0,in=270] (8,0);
\draw[smooth] (4,0) to[out=270,in=180] (4.5,-0.5) to[out=0,in=270] (5,0);
\draw[smooth] (6,0) to[out=270,in=180] (6.5,-0.5) to[out=0,in=270] (7,0);
\draw[smooth] (10,0) to[out=270,in=180] (13.5,-2) to[out=0,in=270] (17,0);
\draw[smooth] (11,0) to[out=270,in=180] (11.5,-0.5) to[out=0,in=270] (12,0);
\draw[smooth] (13,0) to[out=270,in=180] (14.5,-1) to[out=0,in=270] (16,0);
\draw[smooth] (14,0) to[out=270,in=180] (14.5,-0.5) to[out=0,in=270] (15,0);
\node[above] at (1,0) {$\scriptstyle 1$};
\node[above] at (18,0) {$\scriptstyle L$};
\end{tikzpicture}
\caption{\label{fig:lpDyckeven1}}
\end{subfigure}%
\begin{subfigure}[c]{0.5\linewidth}
\centering
\begin{tikzpicture}[scale=0.25]
\draw[thin,gray] (0,0) -- (1,1) -- (2,0) -- (3,1) -- (4,0) -- (5,1) -- (6,0) -- (7,1) -- (8,0) -- (9,1) -- (10,0) -- (11,1) -- (12,0) -- (13,1) -- (14,0) -- (15,1) -- (16,0) -- (17,1) -- (18,0);
\draw[thin,gray] (2,2) -- (3,1) -- (5,3) -- (7,1) -- (8,2) (3,3) -- (5,1) -- (7,3) (10,2) -- (11,1) -- (12,2) -- (13,1) -- (15,3) (13,3) -- (15,1) -- (16,2);
\draw[thick] (0,0) -- (4,4) -- (5,3) -- (6,4) -- (9,1) -- (11,3) -- (12,2) -- (14,4) -- (18,0);
\draw[thick] (0,0) -- (18,0);
\node[below] at (0,0) {$\scriptstyle 0$};
\node[below] at (18,0) {$\scriptstyle L$};
\end{tikzpicture}
\caption{\label{fig:lpDyckeven2}}
\end{subfigure}%
\caption{An even-sized (periodic or reflecting) link pattern (a) and its interpretation as a Dyck path (b).}
\label{fig:lpDyckeven}
\end{figure}
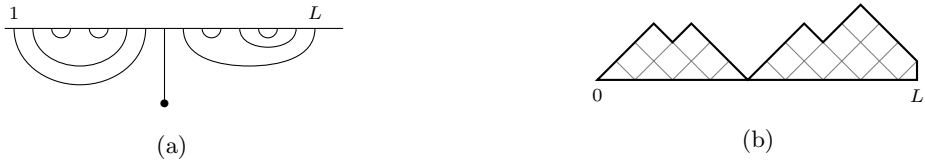
\begin{figure}[ht]
\begin{subfigure}[c]{0.5\linewidth}
\centering
\begin{tikzpicture}[scale=0.25]
\draw (0.5,0) -- (18.5,0);
\draw (9,0) -- (9,-4);
\node at (9,-4) {$\scriptstyle \bullet$};
\draw[smooth] (1,0) to[out=270,in=180] (4.5,-3) to[out=0,in=270] (8,0);
\draw[smooth] (2,0) to[out=270,in=180] (4.5,-2) to[out=0,in=270] (7,0);
\draw[smooth] (3,0) to[out=270,in=180] (3.5,-0.5) to[out=0,in=270] (4,0);
\draw[smooth] (5,0) to[out=270,in=180] (5.5,-0.5) to[out=0,in=270] (6,0);
\draw[smooth] (10,0) to[out=270,in=180] (13.5,-2) to[out=0,in=270] (17,0);
\draw[smooth] (11,0) to[out=270,in=180] (11.5,-0.5) to[out=0,in=270] (12,0);
\draw[smooth] (13,0) to[out=270,in=180] (14.5,-1) to[out=0,in=270] (16,0);
\draw[smooth] (14,0) to[out=270,in=180] (14.5,-0.5) to[out=0,in=270] (15,0);
\node[above] at (1,0) {$\scriptstyle 1$};
\node[above] at (17,0) {$\scriptstyle L$};
\end{tikzpicture}
\caption{\label{fig:lpDyckodd1}}
\end{subfigure}%
\begin{subfigure}[c]{0.5\linewidth}
\centering
\begin{tikzpicture}[scale=0.25]
\draw[thin,gray] (2,2) -- (3,1) -- (5,3) -- (7,1) -- (8,2) (3,3) -- (5,1) -- (7,3) (10,2) -- (11,1) -- (12,2) -- (13,1) -- (16,4) (11,3) -- (12,2) -- (13,3) -- (15,1) -- (17,3) (14,4) -- (17,1) -- (18,2);
\draw[thick] (1,1) -- (4,4) -- (5,3) -- (6,4) -- (9,1) -- (12,4) -- (13,3) -- (15,5) -- (18,2) -- (18,1) -- cycle;
\node[below] at (1,1) {$\scriptstyle 0$};
\node[below] at (18,1) {$\scriptstyle L$};
\end{tikzpicture}
\caption{\label{fig:lpDyckodd2}}
\end{subfigure}%
\caption{An odd-sized link pattern (a) and its interpretation as a Dyck path (b).}
\label{fig:lpDyckodd}
\end{figure}

With respect to Dyck paths, we make two definitions.
\begin{defn}
\label{def:signedsum}
Each Dyck path associated with a link pattern $\alpha$ can be filled underneath with tiles as shown in \figref{lpDyckeven2}. We define $s_\alpha$ to be the signed sum of these tiles, which is found by assigning $+1$ or $-1$ to a tile depending on its vertical position, starting from $+1$ on the first row of complete (square) tiles, and then summing these weights over all tiles, for example:
\be
\alpha:\;\raisebox{-2cm}{
\begin{tikzpicture}[scale=0.5]
\draw[thick] (0,0) -- (10,0);
\draw[thick] (0,0) -- (1,1) -- (2,0) -- (3,1) -- (4,0) -- (5,1) -- (6,0) -- (7,1) -- (8,0) -- (9,1) -- (10,0);
\draw[thick] (0,0) -- (4,4) -- (7,1) -- (8,2) -- (10,0);
\draw[thin,gray] (2,2) -- (3,1) -- (5,3) (3,3) -- (5,1) -- (6,2);
\node at (2,1) {$+$};
\node at (4,1) {$+$};
\node at (4,3) {$+$};
\node at (6,1) {$+$};
\node at (8,1) {$+$};
\node at (3,2) {$-$};
\node at (5,2) {$-$};
\node[below] at (0,-0.25) {$\scriptstyle 0$};
\node[below] at (1,-0.25) {$\scriptstyle 1$};
\node[below] at (2,-0.25) {$\scriptstyle 2$};
\node[below] at (3,-0.25) {$\scriptstyle 3$};
\node[below] at (4,-0.25) {$\scriptstyle 4$};
\node[below] at (5,-0.25) {$\scriptstyle 5$};
\node[below] at (6,-0.25) {$\scriptstyle 6$};
\node[below] at (7,-0.25) {$\scriptstyle 7$};
\node[below] at (8,-0.25) {$\scriptstyle 8$};
\node[below] at (9,-0.25) {$\scriptstyle 9$};
\node[below] at (10,-0.25) {$\scriptstyle 10$};
\end{tikzpicture}}
\qquad\qquad s_\alpha=4-2+1=3.
\ee
Note that the assignment of signs can equally be made in terms of the horizontal position, i.e., $+1$ is assigned to tiles on even sites and $-1$ to tiles on odd sites.
\end{defn}
If we also assign $-1$ to the tiles in the $0$th row, we get $s_\alpha-n$ for $L=2n$, or $s_\alpha-n-1$ for $L=2n+1$.
\begin{defn}
For a link pattern $\alpha$, we define $d_\alpha$ to be the number of Dyck ribbons in the Dyck path corresponding to $\alpha$ \cite{dGLS2012}. The Dyck ribbon decomposition works as follows: A maximal ribbon of tiles is shaded inside the Dyck path:
\[
\textit{\scriptsize Step 1:}\qquad
\begin{tikzpicture}[baseline=(current bounding box.center),scale=0.25]
\filldraw[gray!40] (0,0) -- (4,4) -- (5,3) -- (6,4) -- (9,1) -- (11,3) -- (12,2) -- (14,4) -- (18,0) -- (16,0) -- (14,2) -- (12,0) -- (11,1) -- (10,0) -- (8,0) -- (6,2) -- (5,1) -- (4,2) -- (2,0) -- cycle;
\draw[thin,gray] (0,0) -- (1,1) -- (2,0) -- (3,1) -- (4,0) -- (5,1) -- (6,0) -- (7,1) -- (8,0) -- (9,1) -- (10,0) -- (11,1) -- (12,0) -- (13,1) -- (14,0) -- (15,1) -- (16,0) -- (17,1) -- (18,0);
\draw[thin,gray] (2,2) -- (3,1) -- (5,3) -- (7,1) -- (8,2) (3,3) -- (5,1) -- (7,3) (10,2) -- (11,1) -- (12,2) -- (13,1) -- (15,3) (13,3) -- (15,1) -- (16,2);
\draw[thick] (0,0) -- (4,4) -- (5,3) -- (6,4) -- (9,1) -- (11,3) -- (12,2) -- (14,4) -- (18,0);
\draw[thick] (0,0) -- (18,0);
\end{tikzpicture}
\]
As can be seen from the picture, it is allowed for a Dyck ribbon to include one of the tiles in the $0$th row, but not to cross the horizontal line. After the first Dyck ribbon has been shaded, those tiles are discarded and another ribbon is shaded, and so on, until all the tiles have been discarded.
\[
\textit{\scriptsize Step 2:}\qquad
\begin{tikzpicture}[baseline=(current bounding box.center),scale=0.25]
\filldraw[pattern = crosshatch dots] (0,0) -- (4,4) -- (5,3) -- (6,4) -- (9,1) -- (11,3) -- (12,2) -- (14,4) -- (18,0) -- (16,0) -- (14,2) -- (12,0) -- (11,1) -- (10,0) -- (8,0) -- (6,2) -- (5,1) -- (4,2) -- (2,0) -- cycle;
\filldraw[gray!40] (2,0) -- (4,2) -- (5,1) -- (6,2) -- (8,0) -- cycle;
\draw[thin,gray] (0,0) -- (1,1) -- (2,0) -- (3,1) -- (4,0) -- (5,1) -- (6,0) -- (7,1) -- (8,0) -- (9,1) -- (10,0) -- (11,1) -- (12,0) -- (13,1) -- (14,0) -- (15,1) -- (16,0) -- (17,1) -- (18,0);
\draw[thin,gray] (2,2) -- (3,1) -- (5,3) -- (7,1) -- (8,2) (3,3) -- (5,1) -- (7,3) (10,2) -- (11,1) -- (12,2) -- (13,1) -- (15,3) (13,3) -- (15,1) -- (16,2);
\draw[thick] (0,0) -- (18,0);
\draw[thick] (0,0) -- (4,4) -- (5,3) -- (6,4) -- (9,1) -- (11,3) -- (12,2) -- (14,4) -- (18,0);
\end{tikzpicture}
\]
\[
\textit{\scriptsize Step 3:}\qquad
\begin{tikzpicture}[baseline=(current bounding box.center),scale=0.25]
\filldraw[pattern = crosshatch dots] (0,0) -- (4,4) -- (5,3) -- (6,4) -- (9,1) -- (11,3) -- (12,2) -- (14,4) -- (18,0) -- (16,0) -- (14,2) -- (12,0) -- (11,1) -- (10,0) -- (8,0) -- cycle;
\filldraw[gray!40] (10,0) -- (11,1) -- (12,0) -- cycle;
\draw[thin,gray] (0,0) -- (1,1) -- (2,0) -- (3,1) -- (4,0) -- (5,1) -- (6,0) -- (7,1) -- (8,0) -- (9,1) -- (10,0) -- (11,1) -- (12,0) -- (13,1) -- (14,0) -- (15,1) -- (16,0) -- (17,1) -- (18,0);
\draw[thin,gray] (2,2) -- (3,1) -- (5,3) -- (7,1) -- (8,2) (3,3) -- (5,1) -- (7,3) (10,2) -- (11,1) -- (12,2) -- (13,1) -- (15,3) (13,3) -- (15,1) -- (16,2);
\draw[thick] (0,0) -- (18,0);
\draw[thick] (0,0) -- (4,4) -- (5,3) -- (6,4) -- (9,1) -- (11,3) -- (12,2) -- (14,4) -- (18,0);
\end{tikzpicture}
\]
\[
\textit{\scriptsize Step 4:}\qquad
\begin{tikzpicture}[baseline=(current bounding box.center),scale=0.25]
\filldraw[pattern = crosshatch dots] (0,0) -- (4,4) -- (5,3) -- (6,4) -- (9,1) -- (11,3) -- (12,2) -- (14,4) -- (18,0) -- (16,0) -- (14,2) -- (12,0) -- cycle;
\filldraw[gray!40] (12,0) -- (14,2) -- (16,0) -- cycle;
\draw[thin,gray] (0,0) -- (1,1) -- (2,0) -- (3,1) -- (4,0) -- (5,1) -- (6,0) -- (7,1) -- (8,0) -- (9,1) -- (10,0) -- (11,1) -- (12,0) -- (13,1) -- (14,0) -- (15,1) -- (16,0) -- (17,1) -- (18,0);
\draw[thin,gray] (2,2) -- (3,1) -- (5,3) -- (7,1) -- (8,2) (3,3) -- (5,1) -- (7,3) (10,2) -- (11,1) -- (12,2) -- (13,1) -- (15,3) (13,3) -- (15,1) -- (16,2);
\draw[thick] (0,0) -- (18,0);
\draw[thick] (0,0) -- (4,4) -- (5,3) -- (6,4) -- (9,1) -- (11,3) -- (12,2) -- (14,4) -- (18,0);
\end{tikzpicture}
\]
In this example, $d_\alpha=4$.
\end{defn}
\begin{lemma}
\label{lem:ka}
For a link pattern of size $L=2n$ or $L=2n+1$, the number of closed loops $k_\alpha$ formed between a link pattern $\alpha$ and $\alpha_0$ is equal to:
\begin{enumerate}[label=(\alph*)]
 \item The number of odd sites in $\alpha$ that are paired to the right,
 \item $n-s_\alpha$, and
 \item $d_\alpha$ if $L=2n$; $d_\alpha-1$ if $L=2n+1$.
\end{enumerate}
\end{lemma}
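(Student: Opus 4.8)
The plan is to establish the three characterisations in the order (a), (b), (c), treating (a) as the anchor since it alone refers directly to the loop count $k_\alpha$.

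To prove (a) I would stack $\alpha$ against the reference pattern $\alpha_0$ and count the cycles of the degree-two graph $\alpha\cup\alpha_0$ (each vertex carries one arc from each matching). The argument is an induction on $n$ by \emph{peak removal}: whenever $n\ge 1$ the Dyck path has a peak, i.e.\ consecutive sites $i,i+1$ that are an up-step followed by a down-step, corresponding to a small arc $(i,i+1)$ of $\alpha$. If $i=2j-1$ is odd, this arc coincides with the arc $(2j-1,2j)$ of $\alpha_0$, so the two close off one loop; deleting the two sites and relabelling leaves a smaller instance with the \emph{same} small-arc reference pattern, and both $k_\alpha$ and the number of odd right-paired sites drop by exactly one. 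If $i=2j$ is even, the arc $(2j,2j+1)$ is threaded by the $\alpha_0$-arcs $(2j-1,2j)$ and $(2j+1,2j+2)$ into a single strand $2j-1\to 2j\to 2j+1\to 2j+2$; deleting sites $2j,2j+1$ short-circuits $2j-1$ to $2j+2$, preserves the reference pattern, leaves $k_\alpha$ unchanged, and removes only an even up-step and an odd down-step, so the odd-right-paired count is also unchanged. Both statistics obey the same recursion and agree in the base case. (Conceptually, the minimal site on any loop is forced to be odd and right-paired, since it must open to the right in both $\alpha$ and $\alpha_0$.) Under the bijection of \figref{lpDyckeven} and \figref{lpDyckodd}, ``odd site paired to the right'' is exactly ``up-step on an odd site'', with the proviso that for odd $L$ the unpaired site is drawn as an up-step but is not right-paired.

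For (b) I would pass to the height function $h(x)$ of the Dyck path, using $h(x)\equiv x \pmod 2$. A diamond centred at $(x,y)$ lies under the path iff $1\le y\le h(x)-1$ with $y\equiv x+1\pmod 2$, so column $x$ contains $\lfloor h(x)/2\rfloor$ diamonds, each carrying the sign $(-1)^x$ of \defref{signedsum}. Hence
\[
s_\alpha=\sum_{x}(-1)^x\left\lfloor \tfrac{h(x)}{2}\right\rfloor .
\]
Pairing the terms $x=2j-1,2j$ and noting that $\lfloor h/2\rfloor$ increases by one across the step on site $2j$ precisely when that step is an up-step collapses the sum to the number of up-steps on even sites. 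For $L=2n$ there are $n$ up-steps in total, so the number on odd sites equals $n$ minus the number on even sites; combined with (a) this gives $k_\alpha=n-s_\alpha$. For $L=2n+1$ the same collapse identifies $n-s_\alpha$ with the number of down-steps on even sites; a short parity count, together with the planarity fact that the unpaired site of such a matching always sits at an odd position (the sites to its left are internally matched, hence even in number), identifies this with the number of odd right-paired sites, so (b) holds in both parities.

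Finally, (c) relates $k_\alpha$ to the number $d_\alpha$ of Dyck ribbons of \cite{dGLS2012}. The useful reformulation is that an up-step sits on an odd site iff it starts from even height, so by (a) and (b) the target is $d_\alpha=\#\{\text{up-steps out of even height}\}$ for $L=2n$, and one more than this (i.e.\ $d_\alpha-1=k_\alpha$) in the odd case, the extra ribbon being produced by the unpaired site, which is itself an up-step out of even height. I would prove this by tracking the greedy ribbon-peeling: show that removing one maximal Dyck ribbon deletes exactly one up-step out of even height and returns a shorter admissible path, so that $d_\alpha$ and the height statistic satisfy the same recursion. Establishing this last claim is where I expect the real work to lie, since it demands a careful analysis of how a maximal ribbon meets the boundary of the tiled region under the precise definition of \cite{dGLS2012}; the cleanest route is probably a bijection assigning to each ribbon the unique even-height up-step at which it is ``born''. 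The odd-$L$ bookkeeping then follows from the same argument applied to the path ending at height one.
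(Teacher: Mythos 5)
Your proofs of (a) and (b) are correct and complete, and they take genuinely different routes from the paper. For (a), the paper argues directly that each closed loop of $\alpha\cup\alpha_0$ passes through exactly one site that is paired to the right in both patterns (asserted with a figure; this is the same observation as your parenthetical remark about minimal sites), whereas you prove the statement by induction on peak removal. Your induction is more laborious but also more self-contained, since it never needs the ``exactly one'' claim to be taken on faith. For (b), the paper assigns signs to all tiles including the $0$th row, cancels diagonally adjacent $+/-$ ``dominoes'', and identifies each surviving $-1$ tile with an up-step from an even to an odd site; you instead telescope the column sums $s_\alpha=\sum_x(-1)^x\lfloor h(x)/2\rfloor$. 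These are two bookkeepings of the same cancellation, and your parity accounting in the odd case (including the fact that the unpaired site sits at an odd position) is sound.

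Part (c) is where there is a genuine gap. You correctly reformulate the claim as ``$d_\alpha$ equals the number of up-steps leaving even height'', but everything then hinges on the assertion that removing one maximal Dyck ribbon deletes exactly one such up-step and leaves a disjoint union of regions under shorter admissible Dyck paths, so that the recursion closes. You do not prove this; you explicitly flag it as ``where the real work lies'' and offer only the intended bijection between ribbons and the even-height up-steps at which they are ``born''. That assertion is the entire content of (c): without it you have defined a statistic and guessed a recursion, not established an equality. A complete proof would have to control how a maximal ribbon meets both the current upper boundary and the baseline, and show that the peeled components are again of the required form --- none of which is routine.

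It is worth seeing how the paper closes (c) with essentially no additional work, by reusing exactly the signed-tile setup from (b): a ribbon occupies consecutive horizontal positions, so its signs alternate, and the paper observes that each maximal ribbon begins and ends on a $-1$ tile, hence contributes exactly $-1$ to the signed sum taken over all tiles including the $0$th row. Since that sum equals $s_\alpha-n$ for $L=2n$ and $s_\alpha-n-1$ for $L=2n+1$, this gives $-d_\alpha=s_\alpha-n$, resp.\ $-d_\alpha=s_\alpha-n-1$, and (c) follows at once from (b). The paper's structural input (ribbon endpoints carry sign $-1$) is itself only justified pictorially, but it is a single local fact about how a ribbon can terminate, much lighter than the global peeling recursion your plan leaves open. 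Since your proof of (b) already gives you the sign decomposition, adopting this argument would be the most economical way to repair the gap.
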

\begin{proof}
\begin{enumerate}[label=(\alph*)]
 \item Given a link pattern $\alpha$, each site is paired either to the right or the left. 
Each closed loop formed by the joining of $\alpha$ and $\alpha_0$ passes through exactly one site (denoted $\times$ below) that is paired to the right in both $\alpha$ and $\alpha_0$:
\[
\begin{tikzpicture}[scale=0.5]
\node at (-2,-2) {$\alpha$};
\draw[thick] (0.5,0) -- (10.5,0);
\draw[smooth] (1,0) to[out=270,in=180] (5.5,-3.5) to[out=0,in=270] (10,0);
\draw[smooth] (2,0) to[out=270,in=180] (4.5,-2.5) to[out=0,in=270] (7,0);
\draw[smooth] (3,0) to[out=270,in=180] (4.5,-1.5) to[out=0,in=270] (6,0);
\draw[smooth] (4,0) to[out=270,in=180] (4.5,-0.5) to[out=0,in=270] (5,0);
\draw[smooth] (8,0) to[out=270,in=180] (8.5,-0.5) to[out=0,in=270] (9,0);
\node at (-2,0.5) {$\alpha_0$};
\draw[smooth] (1,0) to[out=90,in=180] (1.5,0.5) to[out=0,in=90] (2,0);
\draw[smooth] (3,0) to[out=90,in=180] (3.5,0.5) to[out=0,in=90] (4,0);
\draw[smooth] (5,0) to[out=90,in=180] (5.5,0.5) to[out=0,in=90] (6,0);
\draw[smooth] (7,0) to[out=90,in=180] (7.5,0.5) to[out=0,in=90] (8,0);
\draw[smooth] (9,0) to[out=90,in=180] (9.5,0.5) to[out=0,in=90] (10,0);
\node at (1,0) {$\times$};
\node at (3,0) {$\times$};
\node[above] at (1,0.5) {$\scriptstyle 1$};
\node[above] at (2,0.5) {$\scriptstyle 2$};
\node[above] at (3,0.5) {$\scriptstyle 3$};
\node[above] at (4,0.5) {$\scriptstyle 4$};
\node[above] at (5,0.5) {$\scriptstyle 5$};
\node[above] at (6,0.5) {$\scriptstyle 6$};
\node[above] at (7,0.5) {$\scriptstyle 7$};
\node[above] at (8,0.5) {$\scriptstyle 8$};
\node[above] at (9,0.5) {$\scriptstyle 9$};
\node[above] at (10,0.5) {$\scriptstyle 10$};
\end{tikzpicture}
\]
Since the odd sites of $\alpha_0$ are always paired to the right, $k_\alpha$ is just the number of odd sites in $\alpha$ that are paired to the right.

\item Consider the Dyck path corresponding to $\alpha$. If $L$ is even, each up step in the Dyck path corresponds to a site paired to the right, so $k_\alpha$ is the number of up steps that occur from an even to an odd site (indicated below, remembering that sites in Dyck paths are shifted half a step to the right compared to the link pattern):
\[
\begin{tikzpicture}[scale=0.5]
\node at (-2,2) {$\alpha$};
\draw[thick] (0,0) -- (10,0);
\draw[thick] (0,0) -- (1,1) -- (2,0) -- (3,1) -- (4,0) -- (5,1) -- (6,0) -- (7,1) -- (8,0) -- (9,1) -- (10,0);
\draw[thick] (0,0) -- (1,1) (2,2) -- (3,3);
\draw[thin,rotate around={45:(0.5,0.5)}] (0.5,0.5) ellipse (1 and 0.2);
\draw[thin,rotate around={45:(2.5,2.5)}] (2.5,2.5) ellipse (1 and 0.2);
\draw[thick] (1,1) -- (2,2) (3,3) -- (4,4) -- (7,1) -- (8,2) -- (10,0);
\draw[thin,gray] (2,2) -- (3,1) -- (5,3) (3,3) -- (5,1) -- (6,2);
\node[below] at (0,-0.5) {$\scriptstyle 0$};
\node[below] at (1,-0.5) {$\scriptstyle 1$};
\node[below] at (2,-0.5) {$\scriptstyle 2$};
\node[below] at (3,-0.5) {$\scriptstyle 3$};
\node[below] at (4,-0.5) {$\scriptstyle 4$};
\node[below] at (5,-0.5) {$\scriptstyle 5$};
\node[below] at (6,-0.5) {$\scriptstyle 6$};
\node[below] at (7,-0.5) {$\scriptstyle 7$};
\node[below] at (8,-0.5) {$\scriptstyle 8$};
\node[below] at (9,-0.5) {$\scriptstyle 9$};
\node[below] at (10,-0.5) {$\scriptstyle 10$};
\end{tikzpicture}
\]
If $L$ is odd, one of the up-steps in the Dyck path corresponds to the unpaired site (always an odd site) in the link pattern, so instead the number of up steps that occur from an even to an odd site is $k_\alpha+1$.

Recall that the assignment of signs to tiles described in \defref{signedsum}, along with the assignment of $-1$ to each half-box in the $0$th row, gives a sum of $s_\alpha-n$ for $L=2n$, $s_\alpha-n-1$ for $L=2n+1$. However we can ignore any `dominoes' in the alignment
\[
\begin{tikzpicture}[scale=0.5]
\draw[thin,gray] (2,0) -- (3,1) -- (1,3) -- (0,2) -- cycle (1,1) -- (2,2);
\node at (2,1) {$-$};
\node at (1,2) {$+$};
\end{tikzpicture}
\]
as these contribute $0$ to the sum.
\[
\begin{tikzpicture}[scale=0.5]
\filldraw[gray!20] (2,0) -- (10,0) -- (8,2) -- (7,1) -- (4,4) -- (3,3) -- (4,2) -- (3,1) -- (2,2) -- (1,1) -- cycle;
\node at (-2,2) {$\alpha$};
\draw[thin,gray] (1,1) -- (2,0) -- (3,1) -- (4,0) -- (5,1) -- (6,0) -- (7,1) -- (8,0) -- (9,1);
\draw[thin,gray] (2,2) -- (3,1) -- (5,3) (3,3) -- (5,1) -- (6,2);
\draw[thick] (0,0) -- (4,4) -- (7,1) -- (8,2) -- (10,0) -- cycle;
\draw[blue] (2.05,0.05) -- (3.85,0.05) -- (2,1.9) -- (1.1,1) -- cycle;
\draw[blue] (4.05,0.05) -- (5.85,0.05) -- (4,1.9) -- (3.1,1) -- cycle;
\draw[blue] (6.05,0.05) -- (7.85,0.05) -- (6,1.9) -- (5.1,1) -- cycle;
\draw[blue] (8.05,0.05) -- (9.85,0.05) -- (8,1.9) -- (7.1,1) -- cycle;
\draw[blue] (5,1.1) -- (5.9,2) -- (4,3.9) -- (3.1,3) -- cycle;
\node at (2,1) {$+$};
\node at (4,1) {$+$};
\node at (4,3) {$+$};
\node at (6,1) {$+$};
\node at (8,1) {$+$};
\node at (3,2) {$-$};
\node at (5,2) {$-$};
\node at (1,0.25) {$-$};
\node at (3,0.25) {$-$};
\node at (5,0.25) {$-$};
\node at (7,0.25) {$-$};
\node at (9,0.25) {$-$};
\node[below] at (0,-0.5) {$\scriptstyle 0$};
\node[below] at (1,-0.5) {$\scriptstyle 1$};
\node[below] at (9,-0.5) {$\scriptstyle 9$};
\node[below] at (10,-0.5) {$\scriptstyle 10$};
\end{tikzpicture}
\]
What remains is a collection of tiles of weight $-1$: $n-s_\alpha$ of them for $L=2n$; $n-s_\alpha+1$ for $L=2n+1$. Each one can be thought of as the bottom right half of a domino, cut off by the Dyck path, which means that each one corresponds to an up step of the path from an even to an odd site. Thus $k_\alpha=n-s_\alpha$.

\item Consider again the assignment of signs to tiles including the $0$th row:
\[
\begin{tikzpicture}[scale=0.5]
\filldraw[gray!40] (0,0) -- (4,4) -- (5,3) -- (6,4) -- (9,1) -- (11,3) -- (12,2) -- (14,4) -- (18,0) -- (16,0) -- (14,2) -- (12,0) -- (11,1) -- (10,0) -- (8,0) -- (6,2) -- (5,1) -- (4,2) -- (2,0) -- cycle;
\draw[thin,gray] (0,0) -- (1,1) -- (2,0) -- (3,1) -- (4,0) -- (5,1) -- (6,0) -- (7,1) -- (8,0) -- (9,1) -- (10,0) -- (11,1) -- (12,0) -- (13,1) -- (14,0) -- (15,1) -- (16,0) -- (17,1) -- (18,0);
\draw[thin,gray] (2,2) -- (3,1) -- (5,3) -- (7,1) -- (8,2) (3,3) -- (5,1) -- (7,3) (10,2) -- (11,1) -- (12,2) -- (13,1) -- (15,3) (13,3) -- (15,1) -- (16,2);
\draw[thick] (0,0) -- (4,4) -- (5,3) -- (6,4) -- (9,1) -- (11,3) -- (12,2) -- (14,4) -- (18,0);
\draw[thick] (0,0) -- (18,0);
\node at (2,1) {$+$};
\node at (4,1) {$+$};
\node at (6,1) {$+$};
\node at (8,1) {$+$};
\node at (10,1) {$+$};
\node at (12,1) {$+$};
\node at (14,1) {$+$};
\node at (16,1) {$+$};
\node at (3,2) {$-$};
\node at (5,2) {$-$};
\node at (7,2) {$-$};
\node at (11,2) {$-$};
\node at (13,2) {$-$};
\node at (15,2) {$-$};
\node at (4,3) {$+$};
\node at (6,3) {$+$};
\node at (14,3) {$+$};
\node at (1,0.25) {$-$};
\node at (3,0.25) {$-$};
\node at (5,0.25) {$-$};
\node at (7,0.25) {$-$};
\node at (9,0.25) {$-$};
\node at (11,0.25) {$-$};
\node at (13,0.25) {$-$};
\node at (15,0.25) {$-$};
\node at (17,0.25) {$-$};
\end{tikzpicture}
\]
Each Dyck ribbon alternates in sign, beginning and ending on $-1$, thus adding the signs in a Dyck ribbon always gives $-1$. By the end of assignment of Dyck ribbons we have used up all the tiles, so we have
\be
  -d_\alpha=
\begin{cases}
s_\alpha-n, & L=2n,\\
s_\alpha-n-1, & L=2n+1.
\end{cases}
\ee
Thus from part (b), $d_\alpha=k_\alpha$ if $L$ is even, $d_\alpha=k_\alpha+1$ if $L$ is odd.
\end{enumerate}
\end{proof}

\subsection{Periodic boundaries}

We will now prove \eqref{eq:FperE}. Here ${\rm A}_n$ is the number of $n\times n$ alternating sign matrices and ${\rm A}_{n,k}$ is the number of $n \times n$ alternating sign matrices constrained to have a 1 at top of column $k$ (see \appref{app:ASMnumbers} for explicit expressions of these numbers).
\begin{prop}
\label{prop:Fper}
With periodic boundary conditions,
\be
\label{eq:FperEprop}
  F_{2n}^{\mathrm{(per)}}(x) = \sum_{k=1}^n \frac{{\rm A}_{n,k}}{{\rm A}_n} x^k.
\ee

\end{prop}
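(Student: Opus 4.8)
The plan is to recognise \eqref{eq:FperEprop} as a \emph{refined} version of the sum rule $Z_{2n}^{\mathrm{(per)}}={\rm A}_n$ already recorded in \eqref{eq:norms}, and then to recover the explicit polynomial \eqref{eq:FperE} by substituting the product formulas for ${\rm A}_{n,k}$ and ${\rm A}_n$ from \appref{app:ASMnumbers}. Concretely, starting from the definition \eqref{eq:genfun} and using $Z_{2n}^{\mathrm{(per)}}={\rm A}_n$, the coefficient of $x^k$ in $F_{2n}^{\mathrm{(per)}}(x)$ is $a_{k,n}/{\rm A}_n$, where $a_{k,n}=\sum_{\alpha\,:\,k_\alpha=k}\psi_\alpha$. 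Thus the whole proposition collapses to the single identity $a_{k,n}={\rm A}_{n,k}$. As a first sanity check I would verify that both sides carry the same symmetry: the reflection $k\mapsto n+1-k$ fixes the right-hand side because ${\rm A}_{n,k}={\rm A}_{n,n+1-k}$, and it fixes the left-hand side because rotating a periodic link pattern by one step sends $k_\alpha$ to $n-k_\alpha+1$, which is exactly the origin of the functional equation $F(x)=x^{n+1}F(1/x)$ noted after \thmref{genfun}.

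The core of the argument is the refined Razumov--Stroganov correspondence. By the theorem of Cantini--Sportiello, the normalised ground-state component $\psi_\alpha$ equals the number of fully packed loop (FPL) configurations on the $n\times n$ grid whose boundary connectivity is $\alpha$, and summing over all $\alpha$ recovers ${\rm A}_n$. To pin down $a_{k,n}$ I would use \lemref{ka}, which supplies three interchangeable descriptions of $k_\alpha$: the number of odd sites of $\alpha$ paired to the right (part (a)), the quantity $n-s_\alpha$ (part (b)), and the number of Dyck ribbons $d_\alpha$ (part (c)). The task is then to match one of these link-pattern statistics, under the FPL$\leftrightarrow$ASM bijection, to the refinement of ${\rm A}_{n,k}$ by the position of the unique $1$ in the top row of the ASM. Showing that $k_\alpha=k$ translates precisely into ``the $1$ sits in column $k$'' is the refined sum rule, and delivers $a_{k,n}={\rm A}_{n,k}$.

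The main obstacle is exactly this refined identity: the unrefined statement $\sum_\alpha\psi_\alpha={\rm A}_n$ is classical, but matching the loop-model statistic $k_\alpha$ to the ASM top-row refinement requires genuine input. I would establish it by the inhomogeneous route, promoting $\ket{\Psi_L}$ to the fully inhomogeneous qKZ ground state $\Psi(z_1,\dots,z_{2n})$, whose components are polynomials specialising to $\psi_\alpha$ at the homogeneous point, and showing that weighting each boundary loop by $x$ amounts to a one-parameter specialisation of the spectral parameters. Comparing the resulting generating function with the refined Izergin--Korepin determinant for the six-vertex model with domain-wall boundary conditions --- whose single-boundary refinement enumerates ASMs by the top-row $1$ position --- then yields $a_{k,n}={\rm A}_{n,k}$. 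The Dyck-ribbon description of \lemref{ka}(c), following \cite{dGLS2012}, offers an alternative purely combinatorial handle on the same identity and could serve as an independent check. Once $a_{k,n}={\rm A}_{n,k}$ is in hand, \eqref{eq:FperEprop} is immediate, and feeding in the explicit expressions for ${\rm A}_{n,k}$ and ${\rm A}_n$ from \appref{app:ASMnumbers} produces the closed form \eqref{eq:FperE}.
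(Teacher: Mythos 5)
Your reduction of \eqref{eq:FperEprop} to the single identity $a_{k,n}={\rm A}_{n,k}$ is correct, and so is the symmetry check, but the step you call the ``core of the argument'' is precisely where the proposal breaks down. Invoking the Razumov--Stroganov--Cantini--Sportiello theorem \cite{RazStr04,CantiniS11} turns $a_{k,n}$ into the number of ASMs counted according to the statistic ``$k_\alpha$ of the link pattern of the associated FPL configuration''. This is \emph{not} the top-row statistic: the FPL$\leftrightarrow$ASM correspondence does not send ``$k_\alpha=k$'' to ``the $1$ of the top row sits in column $k$'' configuration-by-configuration, and only the aggregate counts agree. That aggregate agreement --- the equivalence of two genuinely different refined enumerations of ASMs --- is exactly what the paper obtains as a \emph{corollary} of \propref{Fper} combined with Cantini--Sportiello, and its Conclusion explicitly flags a direct combinatorial proof of this equivalence as an open problem. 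So your plan proves the proposition from a statement that is at least as hard as the proposition itself, and which in the paper's logic flows in the opposite direction.

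Your fallback, the inhomogeneous qKZ route, rests on the unsupported claim that weighting boundary loops by $x$ ``amounts to a one-parameter specialisation of the spectral parameters''. That is not how the weight $x^{k_\alpha}$ arises, and no such sum rule is available: the known (multi-parameter) Razumov--Stroganov sum rules relate $\sum_\alpha \psi_\alpha(z_1,\dots,z_{2n})$ to the Izergin--Korepin determinant, but keeping one $z_i$ free does not produce the weighting $x^{k_\alpha}$ of the homogeneous components. In the paper's actual proof the weight comes from the \emph{Hecke deformation} parameter, not from spectral parameters: by \cite[Theorem 2]{dGLS2012} the deformed staircase Macdonald polynomial satisfies $M(u,\dots,u;x_i=1)=\sum_\alpha \left(-[u]/[u+1]\right)^{n-d_\alpha}\psi_\alpha$, and \lemref{ka}(c) gives $d_\alpha=k_\alpha$ for $L=2n$, so that $F^{\mathrm{(per)}}_{2n}(x)=x^n M(u,\dots,u)/Z_{2n}$ with $x=-[u+1]/[u]$. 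The remaining hard steps are then imported from the literature: $M(u,\dots,u)$ equals a constant term expression \cite[Section 4.2]{dGLS2012}, which equals the refined TSSCPP generating function $N_{10}'$ by Zeilberger's theorem \cite{Zeil2007}, which equals $\sum_k {\rm A}_{n,k}x^{1-k}$ by Fonseca--Zinn-Justin \cite{FonsZJ08} (both identities were conjectured in \cite{ZJDF08}); the symmetry ${\rm A}_{n,k}={\rm A}_{n,n-k+1}$ then finishes the proof. Note that this chain never uses Cantini--Sportiello at all. To make your route rigorous you would need a refined RS-type sum rule expressing $\sum_\alpha\psi_\alpha x^{k_\alpha}$ as a specialisation of the domain-wall partition function, and establishing that is essentially equivalent to redoing the work above.
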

\begin{proof}
In \cite[Theorem 2]{dGLS2012}, the deformed staircase Macdonald polynomial 
\be
M(u_1,\ldots,u_{n-1}; x_1,\ldots,x_n)
\ee
for the maximally parabolic subgroup of the Iwahori--Hecke algebra is shown to be expressible in terms of the Kazhdan--Lusztig basis, or in other words and for $x_1=\ldots = x_n=1$, the components of the ground state of the TL model. For our purposes we set all of the arguments of $M$ to be equal:
\be
  M(\underbrace{u,\dots,u}_{n-1};x_i=1) = \sum_{\alpha\in\mathrm{LP}_{2n}} c_\alpha \psi_\alpha,
\ee
with
\be
  c_\alpha = \left(-\frac{[u]}{[u+1]}\right)^{n-d_\alpha}.
\ee
From \lemref{ka}, since $L=2n$ we know that $d_\alpha=k_\alpha$, so with
\be
  x=-\frac{[u+1]}{[u]},
\ee
we have
\be
  F_{2n}^{\mathrm{(per)}}(x) = \frac{x^n}{Z_{2n}} M(u,\dots,u).
\ee

After using \eqref{eq:norms}, we now only need to show that $M(u,\dots,u)=\sum_{k=1}^n {\rm A}_{n,k} x^{k-n}$. In \cite[Section 4.2]{dGLS2012}, it is stated that $M(u,\dots,u)$ is equal to a constant term expression $A\left(1,x^{-1},\dots,x^{-1}\right)$. It is conjectured in \cite[Section 4]{ZJDF08} and proved in \cite{Zeil2007} that
\be
  A\left(1,\frac1x,\dots,\frac1x\right) = N_{10}'\left(\frac1x,1,\dots,1\right),
\ee
where $N_{10}'\left(x^{-1},t_1,\dots,t_{n-1}\right)$ is the generating function for a refined counting of totally symmetric self-complementary plane partitions or non-intersecting lattice paths. A constant term formula for $N_{10}'$ is given in \cite[eqn (10)]{ZJDF08}. It is conjectured in the same paper and proved in \cite{FonsZJ08} that
\be
  N_{10}'\left(\frac1x,1,\dots,1\right) = \sum_{k=1}^n {\rm A}_{n,k} x^{1-k}.
\ee
We note that ${\rm A}_{n,k}={\rm A}_{n,n-k+1}$, so we can rewrite this as
\be
  N_{10}'\left(\frac1x,1,\dots,1\right) = \sum_{k=1}^n {\rm A}_{n,k} x^{k-n},
\ee
completing our proof.
\end{proof}

\subsection{Reflecting boundaries}

Here we prove \eqref{eq:FreflE} and \eqref{eq:FreflO}.

\begin{prop}
With $L=2n$ and reflecting boundary conditions, the generating function $F_{2n}^{\mathrm{(refl)}}(x)$ can be written
\be
\label{eq:FreflEprop}
F_{2n}^{\mathrm{(refl)}}(x) =\frac{1}{{\rm AV}_{2n+1}}\,\det_{1\leq i,j\leq n} \left[\binom{i+j-2}{2j-i}+x\binom{i+j-2}{2j-i-1} \right].
\ee
With $L=2n+1$ and reflecting boundary conditions, the generating function $F_{2n+1}^{\mathrm{(refl)}}(x)$ can be written
\be
\label{eq:FreflOprop}
F_{2n+1}^{\mathrm{(refl)}}(x) = \frac{1}{{\rm C}_{2n+2}}\det_{1\leq i,j\leq n} \left[\binom{i+j-1}{2j-i}+x\binom{i+j-1}{2j-i-1} \right].
\ee
\end{prop}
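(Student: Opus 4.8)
The plan is to reduce both identities to a refined enumeration of the combinatorial objects that count the $\beta=1$ ground state, and then to evaluate that enumeration through the Lindström--Gessel--Viennot (LGV) lemma. Starting from \eqref{eq:genfun} we have $Z_L^{\mathrm{(refl)}}\, F_L^{\mathrm{(refl)}}(x)=\sum_{\alpha\in\mathrm{LP}_L}\psi_\alpha\, x^{k_\alpha}$, and by \eqref{eq:norms} the prefactors $1/{\rm AV}_{2n+1}$ and $1/{\rm C}_{2n+2}$ are precisely $1/Z_{2n}^{\mathrm{(refl)}}$ and $1/Z_{2n+1}^{\mathrm{(refl)}}$. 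It therefore suffices to identify each determinant with $\sum_\alpha \psi_\alpha\, x^{k_\alpha}$. Since $\sum_\alpha \psi_\alpha = Z_L^{\mathrm{(refl)}}$ counts the total number of underlying lattice configurations, this weighted sum is nothing but the generating function for those configurations refined by the boundary-loop statistic $k_\alpha$, and the first task is to realise it combinatorially.

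First I would use \lemref{ka} to replace $k_\alpha$ by a statistic that is local in a lattice-path model: part (a) identifies $k_\alpha$ with the number of odd sites paired to the right, and part (b) with $n-s_\alpha$, either of which turns the boundary-loop weight $x$ into a weight attached to individual steps (the up-steps of the associated Dyck path that begin at an even site). I would then invoke the standard representation of the reflecting ground-state configurations as families of non-intersecting lattice paths with fixed sources and sinks --- the same correspondence underlying the sum rules $Z_{2n}^{\mathrm{(refl)}}={\rm AV}_{2n+1}$ and $Z_{2n+1}^{\mathrm{(refl)}}={\rm C}_{2n+2}$ --- so that $\sum_\alpha \psi_\alpha\, x^{k_\alpha}$ becomes a single $x$-weighted count of such path families.

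With that in place, LGV expresses the weighted count as an $n\times n$ determinant whose $(i,j)$ entry is the weighted number of single paths from source $i$ to sink $j$. The two-term structure $\binom{i+j-2}{2j-i}+x\binom{i+j-2}{2j-i-1}$ (respectively $\binom{i+j-1}{2j-i}+x\binom{i+j-1}{2j-i-1}$) is exactly what one expects: the first binomial counts paths carrying no marked step, while the second, $x$-weighted, binomial counts those carrying the single marked step that records a boundary loop, the two differing by one unit in the relevant step count. Setting $x=1$ and applying Pascal's rule collapses each entry, giving $\det[\binom{i+j-1}{2j-i}]={\rm AV}_{2n+1}$ in the even case and $\det[\binom{i+j}{2j-i}]={\rm C}_{2n+2}$ in the odd case (one checks these classical evaluations yield $1,3,26,\dots$ and $2,11,\dots$), which reproduces the normalisation $F_L^{\mathrm{(refl)}}(1)=1$. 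The shift $i+j-2\mapsto i+j-1$ between the even and odd matrices reflects the extra up-step forced by the unpaired site, whose Dyck path ends at height one, consistent with $d_\alpha=k_\alpha+1$ for odd $L$ in \lemref{ka}(c).

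The main obstacle is the step I have stated most glibly: pinning down the precise non-intersecting path model whose configurations biject with the reflecting ground-state configurations, and showing that the boundary-loop statistic $k_\alpha$ corresponds step-by-step to the $x$-marking that yields exactly the one-unit shift in the binomial entries, rather than merely matching generating functions at $x=1$. Establishing this correspondence cleanly is where the real work lies, and I expect it to proceed by importing a refined-enumeration result for vertically symmetric alternating sign matrices (even $L$) and for cyclically symmetric transpose-complement plane partitions (odd $L$) from the literature, in the same spirit as the chain of external identities used for the periodic case in \propref{Fper}, and then identifying the deformation parameter in that result with our $x$ through \lemref{ka}.
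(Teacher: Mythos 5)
There is a genuine gap, and it sits exactly where you have been most glib. Your argument rests on the claim that, because $\sum_\alpha \psi_\alpha = Z_L^{\mathrm{(refl)}}$ equals a count of combinatorial objects, the weighted sum $\sum_\alpha \psi_\alpha x^{k_\alpha}$ ``is nothing but'' an $x$-refined count of those objects, which you then propose to evaluate by Lindstr\"om--Gessel--Viennot. This is a non sequitur: the $\psi_\alpha$ are components of an eigenvector, not counts of anything, and the sum rule \eqref{eq:norms} is an identity between two numbers, not a bijection that carries the link-pattern statistic $k_\alpha$ over to a statistic on path families. To make your step work you would need a Razumov--Stroganov-type correspondence for \emph{reflecting} boundaries, refined so that $k_\alpha$ maps to a local statistic on the non-intersecting paths (equivalently, on vertically symmetric ASMs and cyclically symmetric transpose complement plane partitions). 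No such refined correspondence is available off the shelf; indeed the paper's conclusion points out that even in the periodic case, where the Razumov--Stroganov--Cantini--Sportiello theorem \emph{is} available, the resulting equivalence of two refined ASM countings is a nontrivial consequence whose combinatorial proof remains open. So the ``real work'' you defer is not an import from the literature; it is a statement of essentially the same depth as the result you are trying to prove.

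The paper's proof avoids configuration-level combinatorics entirely and works at the level of the ground-state vector. It identifies each determinant, via the convolution identity $\sum_s\binom{a}{b-s}\binom{c}{d+s}=\binom{a+c}{b+d}$, with the generating function $x^n S(x^{-1})$ of \cite{dGPZJ2009}, where $S(t)=\sum_{a\in Q_n} t^{m_a} y_a$ is built from a \emph{basis transformation} of the ground state: by \cite[Lemma 1]{dGPZJ2009} the $y_a$ are partial sums of the $\psi_\alpha$, the transformation preserves the number of odd opening sites, and \lemref{ka}(a) identifies that number with $k_\alpha$, giving $m_a = n-k_a$ on a set of representative link patterns. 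The determinant formula for $S(t)$ comes from the qKZ/integrability analysis of \cite{dGPZJ2009}, not from LGV. Your observation that Pascal's rule collapses the entries at $x=1$ to the classical Mills--Robbins--Rumsey-type evaluations of ${\rm AV}_{2n+1}$ and ${\rm C}_{2n+2}$ is correct, but it only verifies the normalisation $F_L^{\mathrm{(refl)}}(1)=1$; it does not constrain the $x$-dependence, which is the entire content of the proposition.
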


\begin{proof}
First consider $L=2n$. The determinant in \eqref{eq:FreflEprop} appears in \cite[eq (6.10)]{dGPZJ2009}, as $x^{n}S(2n,n-1|x^{-1})$ with $\beta=\tau=1$ and $\tilde p=0$ in the notation of that paper. Showing that this is the same determinant is simply a matter of using the binomial identity
\be \sum_{s}\binom{a}{b-s}\binom{c}{d+s} = \binom{a+c}{b+d}, \ee
and performing the sum separately for each term. Our aim is thus to show that $F_{2n}^{\mathrm{(refl)}}(x)=x^nS(2n,n-1|x^{-1})/{\rm AV}_{2n+1}$.

In that paper $S(t):=S(2n,n-1|t)$ is defined (see \cite[eq (6.4)]{dGPZJ2009}, also \cite[eq (5.15)]{DFZJ07} and \cite{ZJDF08}) in terms of normalised elements of the homogeneous ground state vector after a basis transformation. (We will avoid details of the full basis transformation here.) The definition of $S(t)$ amounts to
\be S(t):=\sum_{a\in Q_n} t^{m_{a}}y_{a}, \ee
where $Q_n$ is the set of increasing integer sequences of length $n$, for which $a_1=1$ and $a_j \in \{2j-2,2j-1\}$, $j\in\{2,\ldots,n\}$; the exponent $m_{a}$ is the number of even elements of $a$; and $y_{a}$ is the element of the transformed ground state corresponding to $a$.

The elements $y_{a}$ are shown in \cite[Lemma 1]{dGPZJ2009} to be partial sums of the ground state elements $\psi_\alpha$, which each $\psi_\alpha$ appearing in exactly one $y_{a}$. The rule that determines which elements $\psi_\alpha$ belong to which partial sum $y_{a}$ is as follows (see \cite[Appendix A]{ZJDF08}): For each site $i$ of a link opening in $\alpha$, if $i$ is even, then $(\alpha(i)-1)\in a$; if $i$ is odd, then $i\in a$. For example, let $\alpha$ be
\[
\begin{tikzpicture}[scale=0.5]
\draw[thick] (0.5,0) -- (10.5,0);
\draw[smooth] (1,0) to[out=270,in=180] (5.5,-3.5) to[out=0,in=270] (10,0);
\draw[smooth] (2,0) to[out=270,in=180] (4.5,-2.5) to[out=0,in=270] (7,0);
\draw[smooth] (3,0) to[out=270,in=180] (4.5,-1.5) to[out=0,in=270] (6,0);
\draw[smooth] (4,0) to[out=270,in=180] (4.5,-0.5) to[out=0,in=270] (5,0);
\draw[smooth] (8,0) to[out=270,in=180] (8.5,-0.5) to[out=0,in=270] (9,0);
\node at (1,0.5) {$\scriptstyle 1$};
\node at (2,0.5) {$\scriptstyle 2$};
\node at (3,0.5) {$\scriptstyle 3$};
\node at (4,0.5) {$\scriptstyle 4$};
\node at (5,0.5) {$\scriptstyle 5$};
\node at (6,0.5) {$\scriptstyle 6$};
\node at (7,0.5) {$\scriptstyle 7$};
\node at (8,0.5) {$\scriptstyle 8$};
\node at (9,0.5) {$\scriptstyle 9$};
\node at (10,0.5) {$\scriptstyle 10$};
\end{tikzpicture}
\]
Then $a$ contains $\{1,6,3,4,8\}$; i.e., $a=(1,3,4,6,8)$. From this relationship it is also easy to see that the number of odd opening sites is preserved by the basis transformation, so we have $k_\alpha=k_{a}$ for all $\psi_\alpha$ contributing to $y_a$. In particular we can choose a representative $\alpha$ obtained by interpreting $a\in Q_n$ as a list of starting points of links. The representative set of link patterns obtained from $Q_n$ are those with links nested no more than two deep, or equivalently Dyck paths of length $L$ with height no more than two units. As an example, the sequence $a=(1,3,4,6,8)$ can be represented by the link pattern or Dyck path shown:
\[
\begin{tikzpicture}[scale=0.5]
\draw[thick] (0.5,0) -- (10.5,0);
\draw[smooth] (1,0) to[out=270,in=180] (1.5,-0.5) to[out=0,in=270] (2,0);
\draw[smooth] (3,0) to[out=270,in=180] (6.5,-2) to[out=0,in=270] (10,0);
\draw[smooth] (4,0) to[out=270,in=180] (4.5,-0.5) to[out=0,in=270] (5,0);
\draw[smooth] (6,0) to[out=270,in=180] (6.5,-0.5) to[out=0,in=270] (7,0);
\draw[smooth] (8,0) to[out=270,in=180] (8.5,-0.5) to[out=0,in=270] (9,0);
\node at (1,0.5) {$\scriptstyle 1$};
\node at (2,0.5) {$\scriptstyle 2$};
\node at (3,0.5) {$\scriptstyle 3$};
\node at (4,0.5) {$\scriptstyle 4$};
\node at (5,0.5) {$\scriptstyle 5$};
\node at (6,0.5) {$\scriptstyle 6$};
\node at (7,0.5) {$\scriptstyle 7$};
\node at (8,0.5) {$\scriptstyle 8$};
\node at (9,0.5) {$\scriptstyle 9$};
\node at (10,0.5) {$\scriptstyle 10$};
\end{tikzpicture}
\qquad
\begin{tikzpicture}[scale=0.5]
\draw[gray] (3,1) -- (4,0)-- (5,1) -- (6,0) -- (7,1) -- (8,0) -- (9,1);
\draw[thick] (0,0) -- (10,0);
\draw[thick] (0,0) -- (1,1) -- (2,0) -- (4,2) -- (5,1) -- (6,2) -- (7,1) -- (8,2) -- (10,0);
\node at (0,-0.5) {$\scriptstyle 0$};
\node at (1,-0.5) {$\scriptstyle 1$};
\node at (2,-0.5) {$\scriptstyle 2$};
\node at (3,-0.5) {$\scriptstyle 3$};
\node at (4,-0.5) {$\scriptstyle 4$};
\node at (5,-0.5) {$\scriptstyle 5$};
\node at (6,-0.5) {$\scriptstyle 6$};
\node at (7,-0.5) {$\scriptstyle 7$};
\node at (8,-0.5) {$\scriptstyle 8$};
\node at (9,-0.5) {$\scriptstyle 9$};
\node at (10,-0.5) {$\scriptstyle 10$};
\end{tikzpicture}
\]
From \lemref{ka} we have $m_{a}=n-k_{a}$ for these link patterns. Thus, substituting $x=t^{-1}$, we have
\begin{align}
  x^n S(x^{-1}) & :={\rm AV}_{2n+1} F_{2n}^{\mathrm{(refl)}}(x),
\end{align}
with the normalisation $Z_{2n}={\rm AV}_{2n+1}$ from \eqref{eq:norms}.

Now consider $L=2n+1$. The determinant in \eqref{eq:FreflOprop}, like the one for the even case, appears in \cite[eq (6.19)]{dGPZJ2009} as $x^nS(2n+1,n|x^{-1})$, with $\beta=\tau=1$ and $p=n$, where $S$ is defined in \cite[eq (6.15)]{dGPZJ2009}.\footnote{Note that the determinant form of $N_8(2n;\beta)$ appearing in \cite[eq (4.2)]{DF07} is the specialisation 
$S(2n-1,n-1|1)$ with 
general $\beta$. This implies that $N_8(2n;\pm 1)=F(\pm 1)$.} The proof given for the even case carries through to the odd case with very few changes (care must be taken with the notation for the new basis elements $a$).
\end{proof}

\section{Asymptotics}
\subsection{Exact asymptotics for periodic boundaries}

\label{sec:perasymp}

Recalling \eqref{eq:FtildeE}, first we observe that $\tilde F(x)=\tilde F_{2n}^{\mathrm{(per)}}(x)$ is given in terms of a truncating hypergeometric series,
\be
\label{eq:2F1}
\tilde{F}(x) = \frac{(2n-1)! (2n-2)!}{(n-1)!(3n-2)!}\ {}_2F_1(1-n,n;2-2n,x),
\ee
and hence satisfies the following hypergeometric differential equation (this was also observed in \cite{Stroganov2006}):
\be
\label{eq:diffeq}
x(x-1)\tilde F''(x) + 2(n-1+x)\tilde F'(x)-n(n-1)\tilde F(x) = 0,
\ee
with the following initial conditions (given in \appref{app:specialvpere}):
\begin{align}
\nn\tilde F(1)&=1, &  \tilde F(0)&=\frac{A_{n-1}}{A_n},\\
\label{eq:init}\tilde F(-1)&=
\begin{cases}
0, & n \text{ even},\\[1mm]
\dfrac{AV_n^2}{A_n}, & n \text{ odd},
\end{cases} &\hspace{-3cm} \lim_{x\to\pm\infty}\frac{\tilde F(x)}{x^{n-1}} &= \frac{A_{n-1}}{A_n}.
\end{align}
The asymptotics of the initial conditions can be obtained from the explict form of ${\rm A}_n$ and ${\rm AV}_n$ in \appref{app:ASMnumbers} and the asymptotics of Barnes' $G$-function:
\begin{align}
\nn\log \frac{A_{n-1}}{A_n} &= \log \left(\frac{16}{27}\right) n + \log \left(\frac{3\sqrt{3}}{4}\right) + \frac{5}{36n} + \mathcal{O}(n^{-2}),\\
\label{eq:initasymp}\log \frac{AV_{n}^2}{A_n} &= \log \left(\frac{2}{3\sqrt{3}}\right) n + \log\sqrt{6} + \frac{5}{72n}+ \mathcal{O}(n^{-3}).
\end{align}

Since we are interested in the asymptotics of $\log \tilde F(x)$, we assume an expansion of the form
\be
\label{eq:perFexpand}
\tilde F(x)=\exp \left(\sum_{j\geq 0} n^{1-j} f_j(x) \right).
\ee
Note that the expansion \eqref{eq:perFexpand} assumes that $\tilde F(x)$ is positive so that the $f_j$ are real. This is obviously true for $x\geq 0$, but when $x\to -\infty$ it is easy to see from \eqref{eq:FperE} that the function $\tilde F_{2n}^{\mathrm{per}}(x)$ is positive for odd $n$ and negative for even $n$ (\figref{perF} shows typical graphs of $\tilde F_{2n}(x)$ for even and odd $n$, which demonstrate this). We will deal with this below.

\begin{figure}[ht]
\centering
\begin{subfigure}{.45\linewidth}
\includegraphics[width=\textwidth]{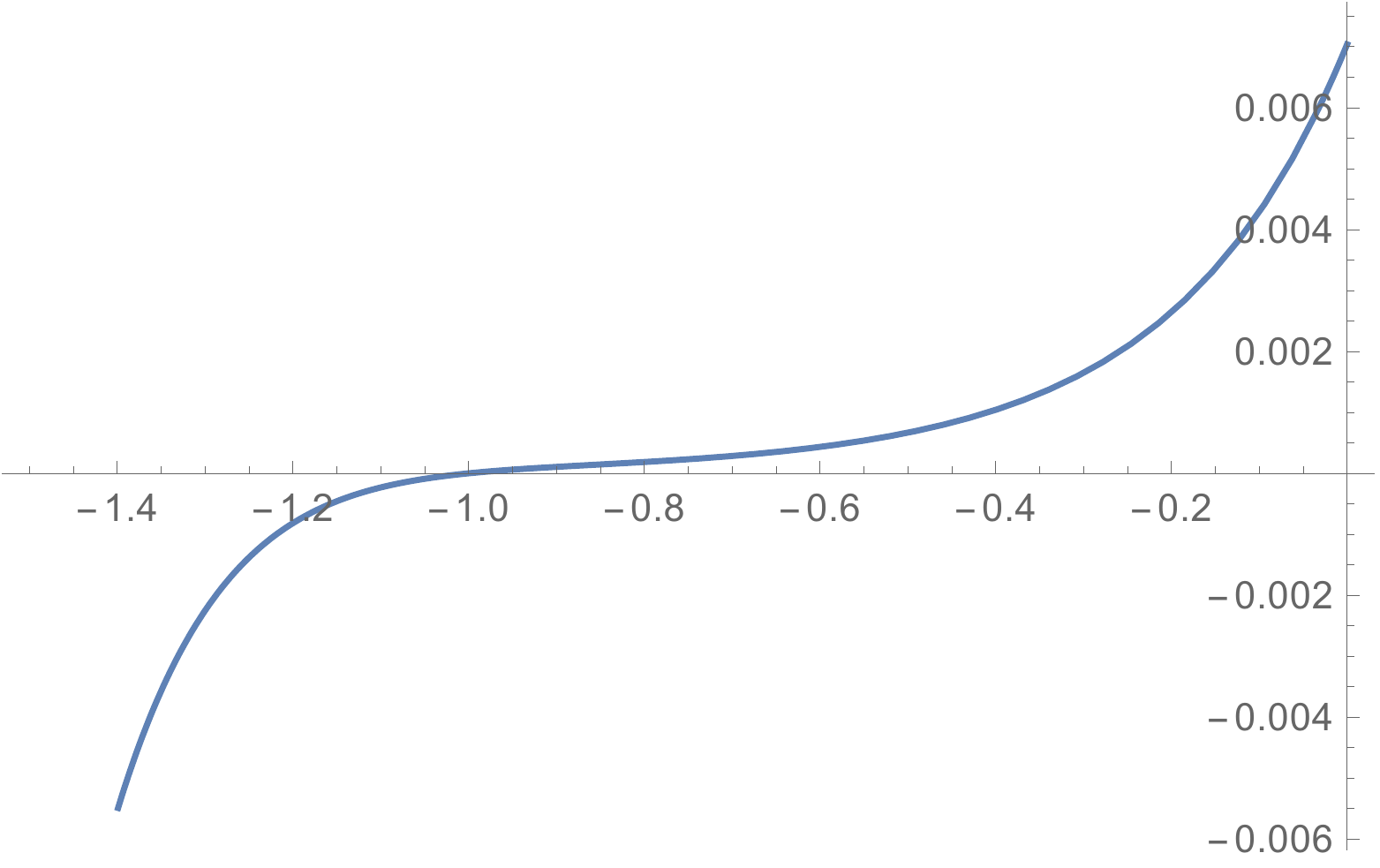}
\caption{}
\end{subfigure}
\quad
\begin{subfigure}{.45\linewidth}
\includegraphics[width=\textwidth]{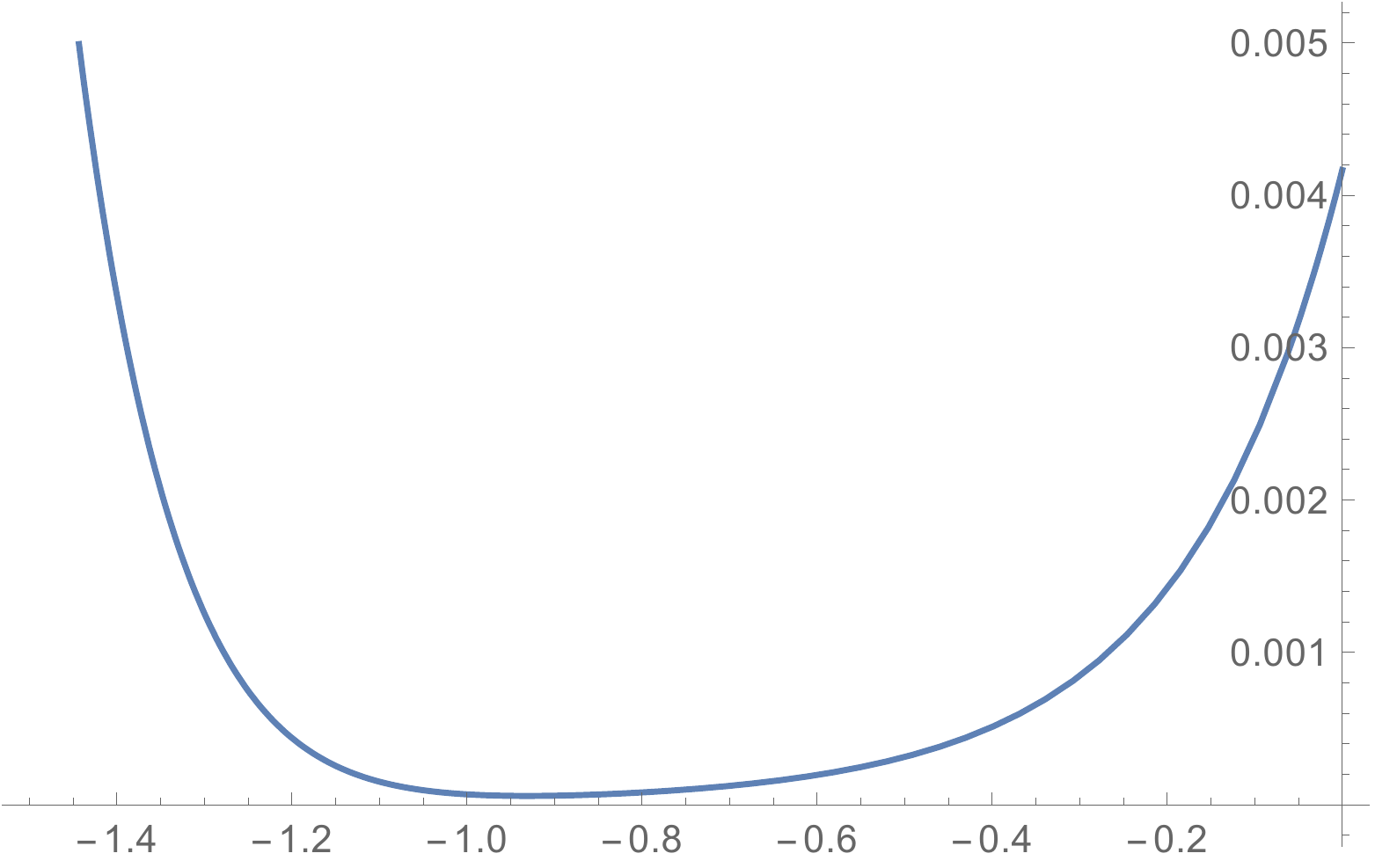}
\caption{}
\end{subfigure}
\caption{Plot of $\tilde F_{2n}(x)$ with (a) $n=10$, and (b) $n=11$.}
\label{fig:perF}
\end{figure}

Substituting \eqref{eq:perFexpand} in \eqref{eq:diffeq} and expanding using the small parameter $n^{-1}$, we derive differential equations for the functions $f_j$ with initial conditions given by the coefficients in $n$ of \eqref{eq:init}--\eqref{eq:initasymp}. The first few DEs are:
\begin{align}
\nn 0&= x(1-x)(f'_0)^2 -2f'_0 +1,\\
\nn 0&= 2f_1'\big(x(1-x)f_0'-1\big) + (1-x)(2f_0'+xf_0'')-1,\\
\nn 0&= 2f_2'\big(x(1-x)f_0'-1\big) + (1-x)\big(2f_1'+xf_1'^2+xf_1''\big),\\
\label{eq:fDEs} & \text{etc.}
\end{align}
The solution to each DE relies on the solution to the previous ones, but for $j\geq 1$ they are simply linear first order DEs. We give here the results for $j=0,1,2$ --- the process can be continued to calculate arbitrarily many terms.

For $j=0$ the DE has two branches
\be
\label{eq:f0de}
f_0'(x)=\frac{1\pm\sqrt{1-x+x^2}}{x(1-x)},
\ee
and the special values
\begin{align}
f_0(1)&=0,\qquad
f_0(0)=\lim_{x\to\pm\infty}(f_0(x)-\log|x|)=\log\left(\frac{16}{27}\right).
\end{align}
Only the negative root of \eqref{eq:f0de} is compatible with the boundary condition at $x\to\infty$, as well as with the special values at $x=0$ and $x=1$, but it is not compatible with $x\to-\infty$ where the positive root of \eqref{eq:f0de}  must be chosen.  We further note that since the positive branch of \eqref{eq:f0de} has a pole at $x=0$, the solution for $x<0$ that matches the asymptotic boundary condition at $-\infty$ is only valid on $(-\infty,0)$. 

The positive root in \eqref{eq:f0de} satisfies $f_0'(x) <0$ for $x<0$, so this branch of $f_0(x)$ is monotone and decreasing. The negative root satisfies $f_0'(x)>0$ for all $x\in\mathbb{R}$ and hence this branch is monotone and increasing. Furthermore, the range of (the real part of) $f_0(x)$ is $\mathbb{R}$ and therefore there is a special point $x=x_\c <0 $ where the branches meet and where $f_0(x)$ is not differentiable.  We will prove below that $x_\c=-1$.

This can be seen in \figref{perdata}, where data from numerical analysis of $\tilde F(x)$ (for odd $n$) are compared with the expressions from \propref{perasymp}. Note the strange location of the data point for $f_1(-1)$. This is related to the non-differentiability of $f_0(x)$ at $x=-1$, and to the separate values for $\tilde F(-1)$ for odd and even $n$, see \eqref{eq:init}. There is a cusp at $x=-1$ in the graph of $\log(\tilde F(x))-nf_0(x)$ (odd $n$), whose width tends to zero as $n\to\infty$, leaving the point at $x=-1$ isolated.

\begin{figure}[ht]
\centering
\begin{subfigure}{.45\linewidth}
\includegraphics[width=\textwidth]{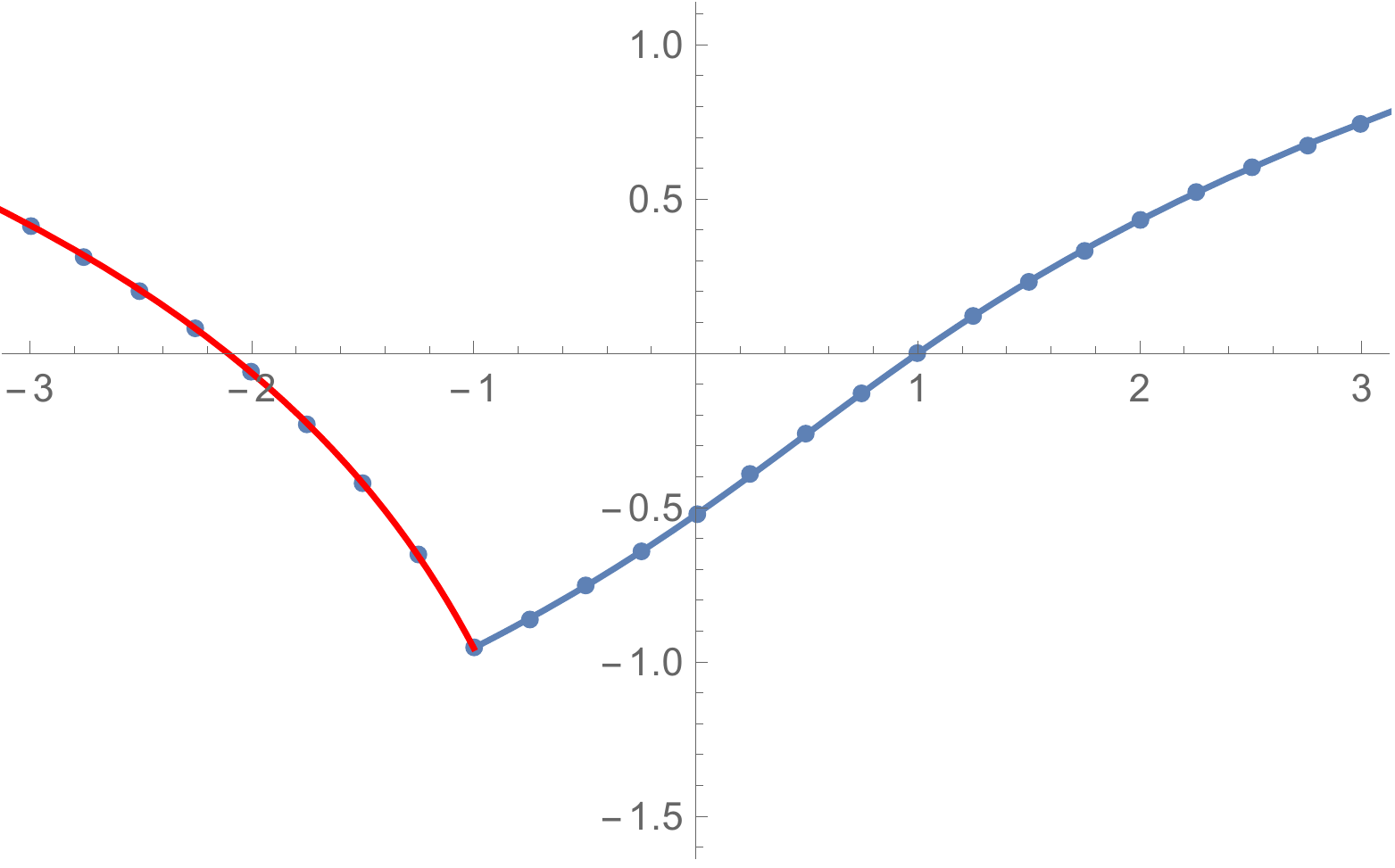}
\caption{}
\end{subfigure}
\quad
\begin{subfigure}{.45\linewidth}
\includegraphics[width=\textwidth]{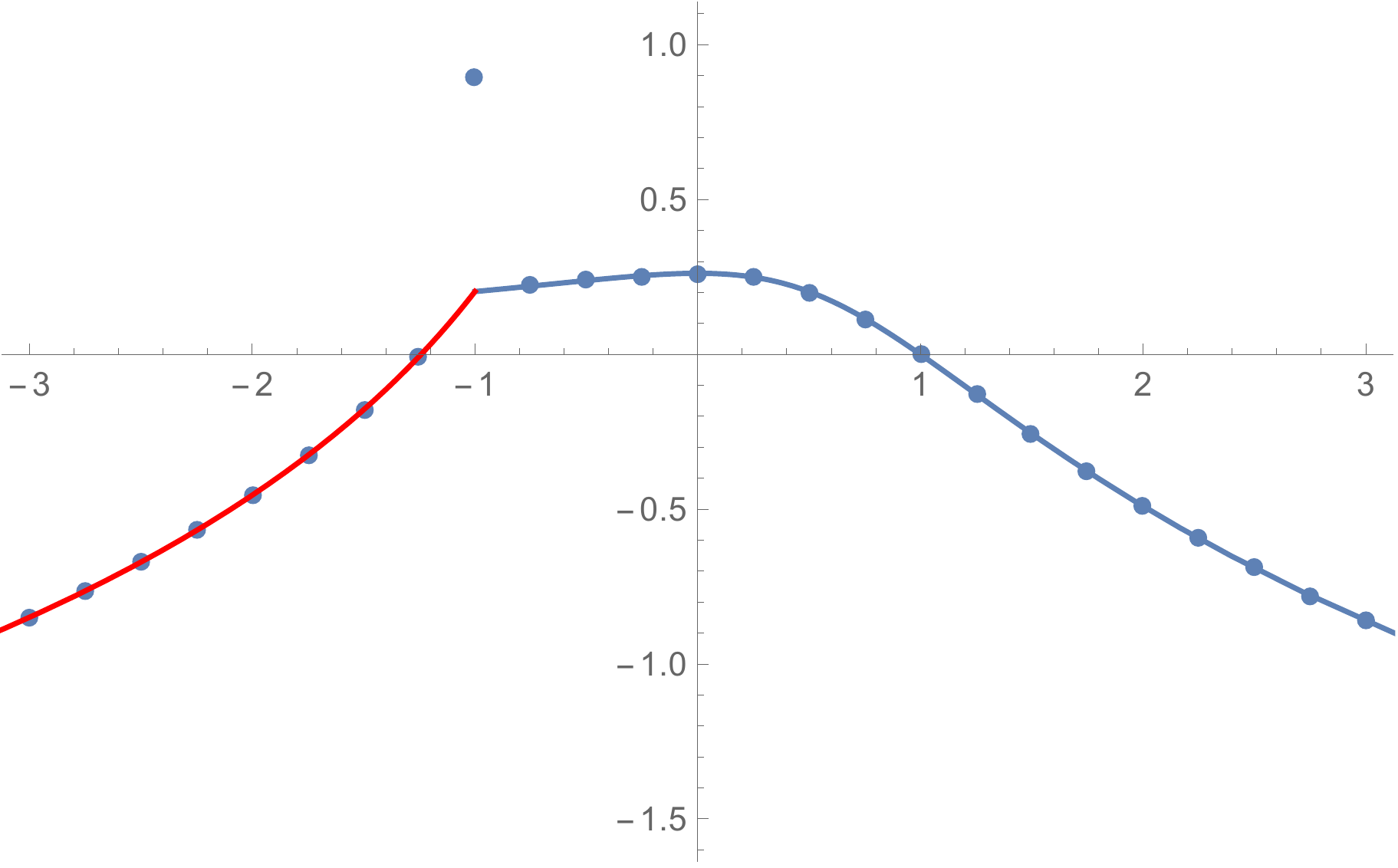}
\caption{}
\end{subfigure}\\[1ex]
\begin{subfigure}{\linewidth}
\centering
\includegraphics[width=0.45\textwidth]{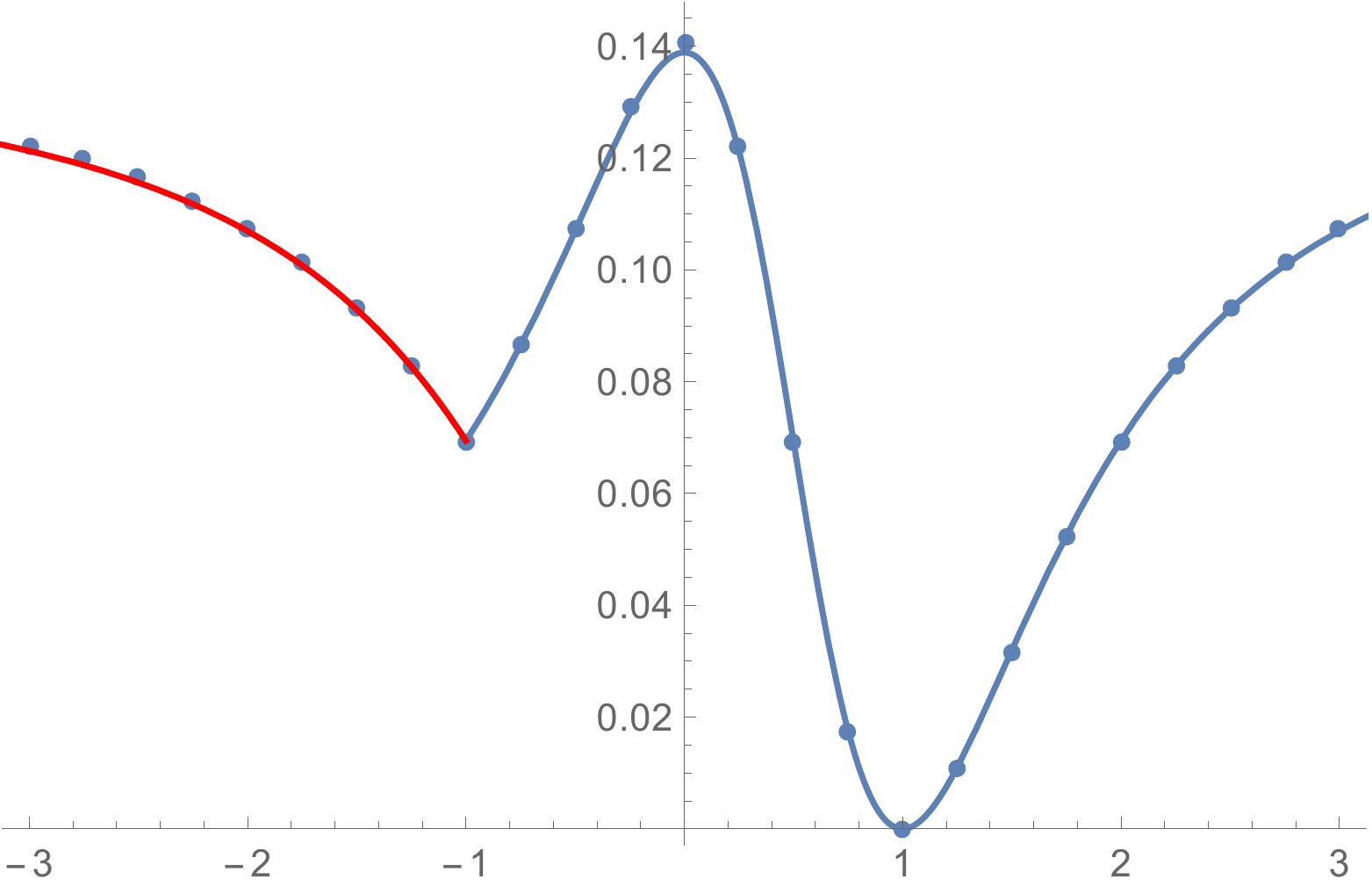}
\caption{}
\end{subfigure}
\caption{Comparison of expressions in \propref{perasymp} to numerical results for (a) $f_0(x)$, (b) $f_1(x)$, and (c) $f_2(x)$, plotted against $x$. In each case, the blue line is the expression for $x>-1$, the red is the expression for $x<-1$, and the blue dots are obtained by a fit to data of $\tilde F_{2n}(x)$ from \eqref{eq:FperE}, with odd $n$ between $101$ and $200$.}
\label{fig:perdata}
\end{figure}

The rest of this analysis will consider $x>x_\c$ and $x<x_\c$ separately.

\subsubsection{\texorpdfstring{$x>x_\c$}{x>-1}}
To find $f_0$ it is convenient to parametrise $x$ by \eqref{eq:xparam}:
\be
\label{eq:xparam2}
x= \frac{\sin(\frac{\pi  (r+1)}{3})}{\sin(\frac{\pi  r}{3})}, \qquad 0<r<3,
\ee
and we have
\be
\sqrt{1-x+x^2} = \frac{\sqrt{3}}{2\sin(\frac{\pi  r}{3})}, \qquad 1-x = \frac{\sin(\frac{\pi  (r-1)}{3})}{\sin(\frac{\pi  r}{3})}.
\ee
Taking the negative root of \eqref{eq:f0de}, which is compatible with $f_0(1)=0$, we get
\be
\frac{d}{dx}f_0(x(r))= \frac{2 \sin \left(\frac{\pi  r}{3}\right)}{2 \sin \left(\frac{\pi  r}{3}\right)+\sqrt{3}}.
\ee
The extensive boundary entropy may be obtained by integrating this expression, to get
\be
\label{eq:f0a}
\exp\big(f_0(x)\big) = \dfrac{4}{3\sqrt{3}}\ \dfrac1{\tan\big(\frac{\pi  r}{6}\big)}\ \dfrac{\sin\big(\frac{\pi  (r+1)}{6}\big)^2}{\sin\big(\frac{\pi  (r+2)}{6}\big)^2}.
\ee

From \eqref{eq:fDEs} the equation for $f_1$ reads
\be
\label{eq:f1de}
f_1'(x) = \frac{(1-x)(2f_0'+xf_0'')-1}{2\big(1-x(1-x)f_0'\big)}.
\ee
With the above result for $f_0$, this can be integrated with the initial condition $f_1(1)=0$, and we obtain
\be
\exp\big(f_1(x)\big) = \dfrac{\sqrt{3}}{2}\ \dfrac{\sin\big(\frac{\pi r}{2}\big)}{\sin\big(\frac{\pi  (r+1)}{3}\big)}.
\ee

Likewise, from \eqref{eq:fDEs} the equation for $f_2$ reads
\be
\label{eq:f2de}
f_2'(x)=\frac{(1-x)(2f_1'+xf_1'^2+xf_1'')}{2\big(1-x(1-x)f_0'\big)},
\ee
and with the results for $f_0$ and $f_1$, this can be integrated with the initial condition $f_2(1)=0$ to give
\be
f_2(x) = \frac{5}{72} \left(\cos(\pi r) +1\right).
\ee
%

%

\subsubsection{\texorpdfstring{$x<x_\c$}{x<-1}}
As mentioned above, $\tilde F_{2n}^{\mathrm{per}}(x)$ for even $n$ is negative when $x<x_\c$. So for this case we take the expansion
\be
\tilde F(x)=-\exp \left(\sum_{j\geq 0} n^{1-j} f_j(x) \right),
\ee
whereas for $n$ odd we take the expansion \eqref{eq:perFexpand} as before. The boundary condition $x\to -\infty$ gives:
\begin{align}
\frac{A_{n-1}}{A_n}&=\lim_{x\to-\infty}\frac{\tilde F(x)}{x^{n-1}}= \lim_{x\to\infty}\frac{\tilde F(-x)}{(-x)^{n-1}}=\begin{cases}
\displaystyle\lim_{x\to\infty}\frac{-\tilde F(-x)}{x^{n-1}}, & n \text{ even},\\[4mm]
\displaystyle\lim_{x\to\infty}\frac{\tilde F(-x)}{x^{n-1}}, & n \text{ odd}.
  \end{cases}
\end{align}
Clearly the resulting boundary conditions for the $f_j$ will therefore be the same in both cases:
\begin{align}
\lim_{x\to\infty}\big(f_0(-x)-\log(x)\big) &= \log\big(\frac{16}{27}\big),\\
\lim_{x\to\infty}\big(f_1(-x)+\log(x)\big) &= \log\big(\frac{3\sqrt{3}}{4}\big),\\
\lim_{x\to\infty}f_2(-x)&=\frac{5}{36}.
\end{align}

To find $f_0$ we use the same parametrisation as before \eqref{eq:xparam2}, and take now the positive root of \eqref{eq:f0de}. We thus have
\be
\frac{d}{dx}f_0(x(r))= \frac{2 \sin \left(\frac{\pi  r}{3}\right)}{2 \sin \left(\frac{\pi  r}{3}\right)-\sqrt{3}},
\ee
and integrating this we get
\be
\label{eq:f0b}
\exp\big(f_0(x)\big) = \dfrac{-4}{3\sqrt{3}}\ \dfrac1{\tan\big(\frac{\pi(r-3)}{6}\big)}\ \dfrac{\sin\big(\frac{\pi(r-2)}{6}\big)^2}{\sin\big(\frac{\pi(r-1)}{6}\big)^2}.
\ee

The DE for $f_1$ is the same as for $x>x_\c$ \eqref{eq:f1de}. Using the new result for $f_0$ and integrating we get
\be
 \exp\big(f_1(x)\big) = \dfrac{-\sqrt{3}}{2}\ \dfrac{\sin\big(\frac{\pi(r-3)}{2}\big)}{\sin\big(\frac{\pi  (r-2)}{3}\big)}.
\ee

Similarly the DE for $f_2$ is the same as for $x>1$ \eqref{eq:f2de}. Using the new results for $f_0$ and $f_1$ and integrating we get
\be
 f_2(x) = \dfrac{5}{72}(\cos(\pi(r-3))+1).
\ee

Finally, the value of $x_\c$ is obtained by equating \eqref{eq:f0a} and \eqref{eq:f0b}, which results in $r_\c\in \frac12 + \mathbb{Z}$. As $x_\c <0$ there is only one solution, namely $r_\c=\frac52$, for which $x_\c=-1$.

\subsection{Conformal field theory argument for reflecting boundaries}
\label{sec:cftargument}
The term $n g_0(x)$ in the exponential expansion \eqref{eq:reflexp} of $\tilde{F}_L(x)$ can be interpreted as the surface free energy associated to the particular boundary condition imposed on the top of the strip (see \figref{Flattice}). This term is not affected by changing from periodic to reflecting boundary conditions, so $g_0=f_0$. For the usual reflecting boundary conditions (i.e., $x=1$ and $r=1$) we find $g_0 = 0$, as we should, since the generating function is trivial in that case: $\tilde{F}_L(1) = 1$ (implying $g_i(1) = 0$ for all $i \ge 0$).

The next term, $\log(n) g_1(x)$, is universal and the coefficient $g_1(x)$ can be derived using arguments of conformal field theory. To be more precise, CFT will provide an expression for $g_1(x(r))$ that is valid for $r$ within a domain ${\mathcal D}_0$ that contains the trivial point $r=1$. By general arguments, the same analytical expression $g_1(r)$ should hold at least for positive values of $x$, so ${\mathcal D}_0 \supseteq (0,2)$. However, since we have parameterised $x$ by \eqref{eq:xparam2}, which is insensitive to shifting $r$ by multiples of 3, it is possible that the whole interval $r \in (0,3)$ will be divided into several domains, among which the expression for $g_1(x)$ varies by such shifts in $r$. This is precisely what we see in \conjref{reflasymp}, according to which ${\mathcal D}_0 = (0,\frac52)$. There then exists another domain, ${\mathcal D}_1 = (\frac52,3)$ covering the remainder of the interval $(0,3)$, on which the analytical expression for $g_1(r)$ is obtained from that on ${\mathcal D}_0$ through the shift $r \to r-3$.

We now give CFT arguments in support of \conjref{reflasymp} on the domain ${\mathcal D}_0$. Consider first a CFT defined in the upper half plane with an operator of conformal weight $h$ inserted at the origin. This produces a singularity in the stress-energy tensor close to the origin:
\be
 T(w) \approx \frac{h}{w^2} \,.
\ee
We can produce a $\frac{\pi}{2}$ corner at the origin by applying the conformal mapping $z = w^{1/2}$. Recall the usual transformation law
\be
 T(z) = T(w) \left( \frac{{\rm d}w}{{\rm d}z} \right)^2 + \frac{c}{12} \{ w;z \} \,,
\ee
where $c$ is the central charge of the CFT, and $\{ w;z \}$ denotes the Schwarzian derivative. In the new geometry we therefore have the singularity
\begin{equation}
 T(z) \approx \frac{2 \tilde{h}}{z^2} \,,
 \label{eq:T_anomaly}
\end{equation}
where
\begin{equation}
 \tilde{h} := 2h-\frac{c}{16}
 \label{eq:tildeh}
\end{equation}
denotes the effective conformal weight at the corner.

The anomaly \eqref{eq:T_anomaly} implies a non-trivial scaling dependence of physical quantities \cite{CardyPeschel,VernierJacobsen}, which manifests itself even in the case $h=0$ when there is no boundary condition changing (BCC) operator residing in the corner (provided that $c \neq 0$). We begin by focussing on this case. In particular, consider the deformed free bosonic theory (Coulomb gas), which describes the continuum limit of the Temperley--Lieb loop model \cite{Jacobsen_review}. Parameterising the loop weight as
\begin{equation}
 \beta = 2 \cos \left( \frac{\pi}{p+1} \right) \,,
 \label{eq:def_tau}
\end{equation}
with $p \in (1,\infty)$, the corresponding central charge is
\be
 c = 1 - \frac{6}{p(p+1)} \,.
\ee
One may compute the continuum limit partition function $Z_{\mathcal R}$ of this CFT on a large $L \times M$ rectangle \cite{KlebanVassileva,KlebanZagier}. The result is \cite[Section III.D]{BDJS}
\begin{equation}
 Z_{\mathcal R}(L,M) = L^{-4 \tilde{h}} Z_{\rm CFT}(\tau) \,,
 \label{eq:Z_rectangle}
\end{equation}
where the second factor
\be
 Z_{\rm CFT}(\tau) = \eta(\tau)^{-c/2}
\ee
is expressed in terms of the Dedekind function $\eta(\tau)$ and the modular parameter (aspect ratio) $\tau = i M/L$.
The first factor in \eqref{eq:Z_rectangle} meanwhile picks up an anomaly $L^{-\tilde{h}}$ from each of the four corners.

The result \eqref{eq:Z_rectangle} has been confirmed by a large number of explicit computations, including various
scalar products and careful derivations for free bosonic and fermionic systems \cite[Section IV]{BDJS}. More importantly
in the present context, the expression for the corner anomaly has been shown to hold also in the general case, where each
corner supports various types of BCC operators \cite{BJS2}. It follows that the universal amplitude takes the general
form
\begin{equation}
 g_1(x) = - \sum_i \tilde{h}_i \,,
 \label{eq:sum_corners}
\end{equation}
where the sum is over the effective conformal weights \eqref{eq:tildeh} of each $\frac{\pi}{2}$ corner.

It thus remains to identify the nature of the BCC operators in the two corners along the top rim of the
semi-infinite strip (see \figref{latticerefl}). Consider first the even case, $L=2n$. The reflecting boundary
conditions along the left and right sides of the strip amount to giving a weight to loops touching those sides
equal to the bulk loop weight $\beta$. This corresponds to free boundary conditions in the equivalent $Q= \beta^2$
state Potts model. The different weight $x$ given to loops touching the top rim of the strip corresponds to the
insertion of a BCC operator $\phi_{r,r}$ in the upper-left corner,
and another, identical, BCC operator in the upper-right corner that changes back to free boundary conditions along the right side of the strip.
With the parametrisation \eqref{eq:xparam2}, the conformal weight of either operator is found \cite{JS_bcs} to
be $h = h_{r,r}$, where we have used the Kac table notation
\be
\label{hrs}
 h_{r,s} = \frac{(r (p+1) - s p)^2 -1}{4p(p+1)} \,,
\ee
and $p$ has the same meaning as in \eqref{eq:def_tau}. Specialising now to percolation (i.e., $\beta = 1$ and $p=2$,
whence $c=0$), we obtain from \eqref{eq:tildeh} and \eqref{eq:sum_corners}
\be
 g_1 = -2(h_{r,r} + h_{r,r}) = \frac{1-r^2}{6} \,,
\ee
in agreement with \conjref{reflasymp}.

In the odd case, $L=2n+1$, the BCC operator in the upper-left corner is the same, namely $\phi_{r,r}$, but in the upper-right corner
there is an additional operator that absorbs the unpaired loop strand (see \figref{lpDyckodd1}). This is
well known \cite{DS86} to correspond to the operator $\phi_{1,2}$ in Kac notation. This has to be fused with
the other $\phi_{r,r}$ operator. A priori there are two fusion channels,
\be
 \phi_{r,r} \times \phi_{1,2} = \phi_{r,r-1} + \phi_{r,r+1} \,,
\ee
but to obtain the correct result in the limit $r \to 1$, when $\phi_{r,r} = \phi_{1,1}$ is the identity operator, the only tenable option is $\phi_{r,r+1}$.
Specialising again to percolation, we thus have
\be
 g_1 = -2(h_{r,r} + h_{r,r+1}) = \frac{-(1-r)^2}{6} \,,
\ee
which again agrees with \conjref{reflasymp}.

\subsection{Conformal field theory argument for the even periodic case}

The term $f_1(x)$ appearing in the asymptotic expansion of $\tilde{F}_{2n}^{\rm (per)}(x)$ can
be rederived by CFT arguments as well. The universal part of the boundary entropy $S_B$ in \eqref{eq:entropy} is the coefficient of the constant, $n$-independent term. We therefore have to be careful with normalisations, in particular regarding the extra factor of $x$ appearing in \eqref{eq:FtildeE}. Let us write
\begin{equation}
 S_B^{\rm (univ)} = - \log g_{\rm AL}
\end{equation}
for the universal part of $S_B$, where $g_{\rm AL}$ is the so-called Affleck-Ludwig $g$-factor \cite{AffleckLudwig}. As in the preceeding
section we take the boundary loop weight $x$ parametrised in terms of $r$ as in \eqref{eq:xparam2}. The CFT argument will then hold for $r$ inside the domain ${\cal D}_0 = (0,\frac52)$, where we have from Proposition~\ref{prop:perasymp}
\begin{equation}
 g_{\rm AL} = x \exp(f_1(x)) = \frac{\sqrt{3}}{2} \frac{\sin \left( \frac{\pi r}{2} \right)}{\sin \left( \frac{\pi r}{3} \right)} \,. \label{gALprop1}
\end{equation}
We now outline the CFT derivation of this result, following \cite{DubailJS2009,DubailJS2010b,DubailJS2010}.

Consider the continuum limit of our model defined on a cylinder of {\em finite} height $m$ and circumference $n$, cf. Figure~\ref{fig:Flatticeper}.
The top of the cylinder is endowed with the boundary conditions $| b \rangle$ defining the special loop weight $x$, whereas the bottom sustains the
usual reflecting boundary conditions $| a \rangle$ with $x = \beta$. The following argument applies for any value $\beta$ of the bulk loop weight inside
the critical range, $\beta \in [0,2]$.

According to the principle of modular invariance, there are two equivalent ways
of writing the corresponding continuum-limit partition function $Z_{ab}(m,n)$, corresponding to two different quantisation schemes.
In the first scheme, we build the cylinder using a time-evolution operator $U^{\rm (per)} = {\rm e}^{-H^{\rm (per)}}$ that propagates the
system upwards from the initial state $| a \rangle$ to the final state $\langle b |$. This reads
\begin{equation}
 Z_{ab}(m,n) = \left \langle b \left| (U^{\rm (per)})^m \right| a \right \rangle = \left \langle b \left| \tilde{q}^{L_0+\bar{L}_0-\frac{c}{12}} \right| a \right \rangle \,,
\end{equation}
where we have introduced the (conjugate) modular parameter $\tilde{q} = {\rm e}^{-2 \pi m/n}$, the Virasoro generators $L_0$ and $\bar{L}_0$, and
the central charge $c$. The Hamiltonian for the periodic system (closed string channel) then reads $H^{\rm (per)} = \frac{2\pi}{n}(L_0 + \bar{L}_0 - \frac{c}{12})$. 
In the second scheme, the time-evolution operator $U^{\rm (open)}$ propagates the system horizontally between the boundary conditions $a$ and $b$.\footnote{On the lattice these are implemented using boundary Temperley-Lieb algebras \cite{NicholsRG2005,dGN2009}} The
partition function is then a trace (and more precisely a Markov trace, due to the non-local nature of the loop weights):
\begin{equation}
 Z_{ab}(m,n) = {\rm Tr}_{ab} \left( U^{\rm (open)} \right)^n = {\rm Tr}_{ab} \left( q^{L_0 - \frac{c}{24}} \right) \,,
\end{equation}
where now $U^{\rm (open)} = {\rm e}^{-H^{\rm (open)}}$, and the Hamiltonian for the non-periodic system (open string channel) reads
$H^{\rm (open)} = \frac{\pi}{m}(L_0 - \frac{c}{24})$. Note that this involves only a chiral CFT; the corresponding modular parameter is $q = {\rm e}^{-\pi n/m}$.

The expression for $Z_{ab}(m,n)$ in the second scheme has been established in \cite{JS_bcs}, in a more general situation where the weights of
loops depend on their homotopy class. Let $\beta = 2 \cos \gamma$ (resp.\ $x$) be the weight of loops homotopic to a point that do not touch
(resp.\ touch) the $b$-boundary. Similarly, let $\ell = 2 \cos \chi$ (resp.\ $\ell_1 = \frac{\sin(u+1)\chi}{\sin u \chi}$) be the weight of non-contractible
loops (i.e., loops that wind around the cylinder) and that do not touch (resp.\ touch) the $b$-boundary. Here we use convenient parametrisations in terms of parameters $\chi$ and $u$. Finally, let $g = 1-\frac{\gamma}{\pi}$ denote the Coulomb gas coupling constant \cite{Jacobsen_review}. The result of \cite{JS_bcs} then reads
\begin{equation}
 Z_{ab}(m,n) = \frac{q^{-c/24}}{P(q)} \sum_{j \in \mathbb{Z}} \frac{\sin (u+2j)\chi}{\sin u \chi} q^{h_{r,r+2j}} \,,
 \label{Zab2}
\end{equation}
where $P(q) = \prod_{k=1}^\infty (1-q^k)$, and $h_{r,s}$ refer to the conformal weights (\ref{hrs}), here with $\gamma = \frac{\pi}{p+1}$.
The label $j$ corresponds to the sector with $|2j|$ non-contractible loops, of which the uppermost touches (resp.\ does not touch) the $b$-boundary
for $j > 0$ (resp.\ for $j < 0$). The corresponding amplitude can be written \cite[eq.~(41)]{DubailJS2009}
\begin{equation}
 \frac{\sin (u+2j)\chi}{\sin u \chi} = \ell_1 U_{2j-1} \left( \frac{\ell}{2} \right) - U_{2j-2} \left( \frac{\ell}{2} \right) \,
\end{equation}
where $U_k(z)$ is the $k$th order Chebyshev polynomial of the second type. The amplitude
was first found in the latter form by using a rigorous combinatorial approach \cite{JS_comb}, in which the Markov trace was decomposed on
usual (matrix) traces within each standard module corresponding to the label $j$.

Using the Poisson summation formula, the expression (\ref{Zab2}) can now be transformed into the first quantisation scheme,
that is, in terms of the parameter $\tilde{q}$. The result is \cite[eq.~(42)]{DubailJS2009}
\begin{equation}
 Z_{ab}(m,n) = (2g)^{-1/2} \, \frac{\tilde{q}^{-c/12}}{P(\tilde{q}^2)} \sum_{p \in \mathbb{Z}}
 \frac{\sin\left( u \chi + r \frac{\gamma}{g}(p+\frac{\chi}{\pi}) \right)}{\sin u \chi}
 \tilde{q}^{\frac{1}{2g} \left[ \left( \frac{\chi}{\pi} + p \right)^2 - \left( \frac{\gamma}{\pi} \right)^2 \right]} \,. \label{Zab1}
\end{equation}
In this form we can now take the limit $m \to \infty$ of a half-infinite cylinder. In that limit $\tilde{q} \ll 1$, and the dominant
contribution to (\ref{Zab1}) comes from the $p=0$ term (where the eigenvalue of $L_0 + \bar{L}_0$ is the trivial critical exponent $h_0 + \bar{h}_0 = 0$).
We have then
\begin{equation}
 Z_{ab}(m,n) \sim \langle b | 0 \rangle \, \langle 0 | a \rangle \, {\rm e}^{\frac{\pi c}{6} \frac{m}{n}} \,,
\end{equation}
where $|0 \rangle$ denotes the ground state of the CFT. Finally, we identify the scalar product of this with the boundary state
as $g_{\rm AL} = \langle b | 0 \rangle$, whereas $\langle 0 | a \rangle$ is the same quantity evaluated at $r=1$ (reflecting boundary conditions).
Thus, setting $\chi = \gamma$ and $u = r$ for simplicity, we obtain
\begin{equation}
 \langle b | 0 \rangle \, \langle 0 | a \rangle = (2g)^{-1/2} \frac{\sin \frac{r \gamma}{g}}{\sin r \gamma} \,,
\end{equation}
and from this one deduces \cite[eq.~(13)]{DubailJS2010}
\begin{equation}
 g_{\rm AL} = (2g)^{-1/4} \, \frac{\sin \frac{r \gamma}{g}}{\sin r \gamma} \left( \frac{\sin \gamma}{\sin \frac{\gamma}{g}} \right)^{1/2} \,. \label{gAL-CFT}
\end{equation}
Specialising now to the case of bond percolation ($\gamma = \frac{\pi}{3}$ and $g = \frac23$) this reproduces (\ref{gALprop1}) indeed.

The result (\ref{gAL-CFT}) was checked against numerical evaluations of the lattice scalar product in \cite[figure 2]{DubailJS2010} for several
values of $\beta$, including $\beta=1$, finding in all cases excellent agreement. It should be stressed that in \cite{DubailJS2010} the square lattice
was turned by an angle $\frac{\pi}{4}$ with respect to our conventions, so the agreement found demonstrates that $g_{\rm AL}$ is
indeed universal, i.e., independent of details of the lattice realisation.

\section{Conclusion}
We have computed the overlap of the ground state of the Temperley-Lieb loop model with bulk loop weight $\beta=1$ with that of the product state of small arcs, or deformed dimerised state. We have done so on the cylinder as well as on the strip, and computed the generating function $F(x)$ in those cases by giving a weight $x$ to boundary loops.

The boundary entropy $S_B$ for critical bond percolation can be found from the generating function $F(x)$ via the formula
\be
S_B=-\log\big(F(x)\big).
\ee
We have calculated $F(x)$ rigorously for finite sizes in the even-sized periodic case as well as the even- and odd-sized reflecting cases. In the periodic case we have derived exact asymptotics as a function of $x$, which agrees with the predictions of CFT, and in the reflecting case we have made a conjecture for the subleading $\log(n)$ term in the exponent based on CFT arguments and supported by numerical data of arbitrary high precision.

Clearly a finite-size expression for the odd-sized periodic case is still lacking. We have collected some small-size data to aid the search in \appref{app:smallsize}. However our results for the leading-order asymptotics of the periodic system should not depend on the parity of the system size.

We would like to be able to rigorously calculate exact asymptotics for the reflecting case as well, but the form of the finite-size expression (a determinant of a matrix whose size grows with the lattice size) presents difficulties. The usual techniques to compute asymptotics from determinants do not seem obviously applicable, and further exploration is out of the scope of this paper. We have been able, however, to obtain and conjecture exact analytic expressions in this case based on high-precision numerical analysis of the finite-size expressions.

Finally, the quantity $F(x)$ appears to hold many combinatorial secrets. This can be gathered from the various conjectures in \appref{app:specialvrefle} and the small-size examples in \appref{app:smallsize}. The most intriguing of these combinatorial connections is the following: The Razumov--Stroganov--Cantini--Sportiello Theorem \cite{RazStr04,CantiniS11} gives an interpretation of the even periodic TL ground state components $\psi_\alpha$ in terms of alternating sign matrices. This interpretation implies that the numbers ${\rm A}_{n,k}$ from \eqref{eq:FperEprop} are not only the numbers of ASMs refined according to the position of the $1$ in the top row, but also a \emph{different} refined counting of ASMs (the number of closed loops through the top boundary, $k$, carries through to an equivalent statistic on the ASM side). Thus there is an equivalence between two separate refined countings of ASMs --- a purely combinatorial result, proved via the TL loop model. It would be interesting to find a combinatorial proof of this result.

\section*{Acknowledgments}
We warmly thank Filiberto Ares, Roger Behrend, Filippo Colomo, J\'er\^ome Dubail, Gy\"{o}rgy Feh\'{e}r, Anthony Mays, Bernard Nienhuis, Amilcar Rabelo de Queiroz, Hjalmar Rosengren, Michael Wheeler and Paul Zinn-Justin for stimulating discussions. JdG and AP thank the Australian Research Council for generous financial support. JJ is supported by the Institut Universitaire de France, and by the European Research Council through the Advanced Grant NuQFT.

\appendix

\section{Combinatorial numbers}
\label{app:ASMnumbers}
We use the following product formul\ae:
\begin{align}
{\rm A}_n&= \prod_{j=0}^{n-1} \frac{(3j+1)!}{(n+j)!},\nn\\
{\rm A}_{n,k} &= \binom{n+k-2}{k-1} \frac{(2n-k-1)!}{(n-k)!} \frac{(n-1)!}{(2n-2)!} {\rm A}_{n-1},\nn\\
{\rm AV}_{2n+1}&=\prod_{j=0}^{n-1} (3j+2)\frac{(6 j+3)! (2 j+1)!}{(4 j+3)! (4 j+2)!},\nn \\
{\rm C}_{2n}&=\prod _{j=0}^{n-1} (3 j+1)\frac{(6 j)! (2 j)!}{(4 j)! (4 j+1)!},\nn\\
{\rm AVH}_{2n+1}&={\rm AV}_{2\floor{\frac{n}{2}}+1}{\rm C}_{2\floor{\frac{n+1}{2}}},\nn\\
{\rm AHT}_{2n}&=\prod _{j=0}^{n-1} \frac{(3 j)! (3 j+2)!}{((n+j)!)^2},\nn\\
{\rm AHT}_{2n+1}&=\frac{n!}{(3 n+2)!} \prod _{j=0}^n \frac{(3 j)! (3 j+2)!}{((n+j)!)^2},\label{eq:combs}
\end{align}
were ${\rm A}_n$ is the number of $n\times n$ alternating sign matrices (ASMs), ${\rm A}_{n,k}$ is the number of $n \times n$ alternating sign matrices constrained to have a 1 at top of column $k$, ${\rm AV}_{2n+1}$ is the number of vertically symmetric ASMs of size $2n+1$, ${\rm AVH}_{2n+1}$ is the number of vertically and horizontally symmetric ASMs of size $2n+1$, ${\rm AHT}_{n}$ is the number of half-turn symmetric ASMs of size $n$, and ${\rm C}_{2n}$ is the number of cyclically symmetric transpose complement plane partitions of size $2n$.

The boundary loop generating functions  at special values of the boundary loop weight $x$ evaluate to some combinatorial numbers, which we list here for convenience and which will assist with asymptotic calculations.

Some of the results listed here are merely observations with proofs outstanding. We have collected small-size examples in \ref{app:smallsize}.

\begin{remark} \rm
We note that all values of $x$ treated in the Appendices (namely $x=2$, $1$, $\frac12$, $0$, $-1$, and the limit $x \to \pm \infty$) correspond to
$r \in (0,3)$ taking half-integer values (namely $r = \frac12$, $1$, $\frac32$, $2$, $\frac52$, and the limits $r \to 0^+$ and $r \to 3^-$).
It is conceivable that the special role of $r \in \mathbb{N}$ may be accounted for by the representation theory of the one-boundary Temperley-Lieb algebra, according
to which the boundary conditions corresponding to the BCC operator $\phi_{r,r}$ are expressible within the usual TL algebra in terms of a Jones-Wenzl projector
that symmetrises the first physical strand with $r-1$ extra ghost strands \cite{JS_bcs}. (See also \cite{PRZ06} for an equivalent description in terms of boundary
integrability, still for $r \in \mathbb{N}$.)
We also note that half-integer Kac labels of BCC operators are ubiquitous in the CFT of loop models \cite{Jacobsen_review}, and have appeared recently in the
boundary integrability framework as well \cite{BPT16}.
\end{remark}

\subsection{Special values of \texorpdfstring{$F^{\mathrm{(per)}}_{2n}(x)$}{F(x) (per, even)}}
\label{app:specialvpere}

At $x=\pm 1$ the evaluation of $F_{2n}^{\mathrm{(per)}}$ is given by
\be
F_{2n}(1)=1,\qquad F_{2n}(-1)=
\begin{cases}
0, & n \text{ even},\\[1mm]
\dfrac{-{\rm AV}_n^2}{{\rm A}_n}, & n \text{ odd}.
\end{cases}
\ee
The case $F_{2n}(-1)$, $n$ odd, was conjectured in \cite{DF2006} and proved in \cite{Williams2008}. In the current setting it follows directly from Kummer's theorem \cite{bailey1935generalized} for the hypergeometric series \eqref{eq:2F1}. The top and bottom coefficients are known to be
\be
\lim_{x\to 0}
x^{-1} F_{2n}(x)=\frac{{\rm A}_{n-1}}{{\rm A}_n}=
\lim_{x\to \infty} x^{-n} F_{2n}(x).
\ee
We also have that
\be
F_{2n}(2)=\dfrac{(2 n)!}{2 n!}\dfrac{ 3 \frac{n}{2}!}{ \frac{3 n}{2}!}, \qquad F_{2n}(\tfrac12)=2^{-n-1}\dfrac{(2 n)!}{2 n!}\dfrac{ 3 \frac{n}{2}!}{ \frac{3 n}{2}!}.
\ee
The second is a result from Bailey \cite{bailey1935generalized} for the hypergeometric series \eqref{eq:2F1}, and follows from Kummer's theorem after an Euler transformation. Note that the first result is related to the second by the property $F(x)= x^{n+1}F(1/x)$. If $n$ is odd, they can be expressed as
\be
F_{2n}(2) = \dfrac{2^{2n-1}{\rm AV}_{n}^2}{{\rm A}_{n}}, \qquad F_{2n}(\tfrac12) = \dfrac{2^{n-2}{\rm AV}_{n}^2}{{\rm A}_{n}}.
\ee

\subsection{Special values of \texorpdfstring{$F^{\mathrm{(per)}}_{2n+1}(x)$}{F(x) (per, odd)}}
\label{app:specialvpero}
Until now we have avoided mention of the odd-sized periodic case, because of a lack of results. However we collect here some observations at special values of $x$, and hope this will assist in finding the equivalent expression to \eqref{eq:FperEprop} for this case.

The sum rule $Z_{2n+1} = \sum_\alpha \psi_\alpha$ of the odd-sized periodic ground state is \cite[Section 2]{DiFZJZ2006}
\be
Z_{2n+1} = {\rm AHT}_{2n+1},
\ee
and as usual, $F_{2n+1}(1)=1$.

The coefficient in $F_{2n+1}(x)$ of the highest power of $x$ is $\psi_{\alpha_0}/Z_{2n+1}$, since $\alpha_0$ is the only link pattern that can give $n$ loops when paired with $\alpha_0$. Thus \cite[Conjecture 9]{BatdGN01} gives the following.
\begin{conj}
\be
\lim_{x\to\infty}\frac{F_{2n+1}(x)}{x^n}= \frac{{\rm A}_n^2}{{\rm AHT}_{2n+1}}.
\ee
\end{conj}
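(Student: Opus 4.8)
The plan is to reduce the limit to one ground-state component and then to identify that component combinatorially. First I observe that the limit extracts the leading coefficient of the polynomial $F_{2n+1}(x)$. An odd periodic link pattern has exactly $n$ arcs, hence exactly $n$ sites that are paired to the right, so by \lemref{ka}(a) we have $k_\alpha\le n$, with equality precisely when all $n$ arc openings lie on odd sites; as recorded in the text this occurs only for $\alpha=\alpha_0$. Consequently the top monomial $x^n$ in \eqref{eq:genfun} carries coefficient $\psi_{\alpha_0}/Z_{2n+1}$ while every other term has strictly smaller degree, so that
\[
\lim_{x\to\infty}\frac{F_{2n+1}(x)}{x^n}=\frac{\psi_{\alpha_0}}{Z_{2n+1}}.
\]
Substituting the odd periodic sum rule $Z_{2n+1}={\rm AHT}_{2n+1}$ reduces the statement to the single identity $\psi_{\alpha_0}={\rm A}_n^2$.

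This identity is exactly \cite[Conjecture 9]{BatdGN01}, and establishing it is the real work. The natural setting is the inhomogeneous ground state, whose components $\psi_\alpha(z_1,\dots,z_{2n+1})$ form the minimal-degree polynomial solution of the $q$KZ equation at the combinatorial point $\beta=1$. The entry $\psi_{\alpha_0}$ is singled out as the one stabilised by the simplest chain of Temperley--Lieb relations and is therefore the most likely to admit a closed product form. I would attempt to compute $\psi_{\alpha_0}(z_1,\dots,z_{2n+1})$ by the standard recursion that specialises a pair of adjacent spectral parameters so as to descend from $2n+1$ to $2n-1$ strands, and then pass to the homogeneous point $z_i=1$. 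The appearance of the perfect square ${\rm A}_n^2$ strongly suggests that this polynomial factorises as a product of two Izergin--Korepin-type determinants, each of which degenerates to the ASM number ${\rm A}_n$ via the product formul\ae\ of \eqref{eq:combs}.

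A complementary route is purely combinatorial: one would prove a Razumov--Stroganov-type statement for the odd periodic chain, interpreting $\psi_{\alpha_0}$ as the number of half-turn symmetric fully-packed-loop configurations with connectivity $\alpha_0$, and then construct a bijection between that set and ordered pairs of $n\times n$ alternating sign matrices. The main obstacle is that neither the determinant factorisation nor such a bijection is currently available, and the generic components of the odd periodic ground state lack the closed forms enjoyed by the even periodic and reflecting cases treated in \thmref{genfun}. For this reason the statement can at present only be asserted as a conjecture, conditional on \cite{BatdGN01}; the supporting evidence is the exact numerical agreement of both sides for small $n$, reinforced by the fact that ${\rm A}_n^2$ and ${\rm AHT}_{2n+1}$ are both given by the explicit product formul\ae\ in \eqref{eq:combs}.
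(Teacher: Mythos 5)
Your proposal is correct and takes essentially the same route as the paper: you reduce the limit to the leading coefficient $\psi_{\alpha_0}/Z_{2n+1}$ (using that $\alpha_0$ is the unique link pattern with $k_\alpha=n$), invoke the sum rule $Z_{2n+1}={\rm AHT}_{2n+1}$ of \cite{DiFZJZ2006}, and appeal to \cite[Conjecture 9]{BatdGN01} for $\psi_{\alpha_0}={\rm A}_n^2$, which is exactly why the statement is labelled a conjecture in the paper as well. Your additional sketches of possible attacks on the Batchelor--de~Gier--Nienhuis conjecture go beyond what the paper attempts but do not change the underlying argument.
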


$F_{2n+1}(0)$ gives the constant term, which is the normalised sum of all components whose corresponding link patterns produce no loops when paired with $\alpha_0$. 
\begin{conj}
\be F_{2n+1}(0) = \frac{{\rm AHT}_{2n}}{{\rm AHT}_{2n+1}}. \ee
\end{conj}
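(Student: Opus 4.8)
The plan is to reduce the statement to a refined enumeration of half-turn symmetric ASMs. Writing $F_{2n+1}(0)=\frac{1}{Z_{2n+1}}\sum_{\alpha:\,k_\alpha=0}\psi_\alpha$ and using $Z_{2n+1}={\rm AHT}_{2n+1}$ from \eqref{eq:norms}, the claim is equivalent to the purely combinatorial identity
\be
\sum_{\substack{\alpha\in\mathrm{LP}_{2n+1}\\ k_\alpha=0}}\psi_\alpha={\rm AHT}_{2n}.
\ee
First I would try to transplant, into the odd periodic sector, the mechanism behind \propref{Fper}. There the even periodic components $\psi_\alpha$ are, by \cite{RazStr04,CantiniS11}, a refined ASM counting; the statistic $k_\alpha$ translates through \lemref{ka} into the position of the $1$ in the top row of the ASM; and the sum over fixed $k_\alpha$ reproduces ${\rm A}_{n,k}$. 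Since the homogeneous sum rule $Z_{2n+1}={\rm AHT}_{2n+1}$ of \cite{DiFZJZ2006} already identifies the relevant objects as half-turn symmetric ASMs of odd size, the natural route is to establish the analogous Razumov--Stroganov-type dictionary: a weight-preserving correspondence under which each $\psi_\alpha$ counts half-turn symmetric ASMs of size $2n+1$ carrying prescribed boundary data, and the boundary-loop number $k_\alpha$ becomes a corner refinement statistic.

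Granting such a dictionary, I would argue that $k_\alpha=0$ is the extremal value of that statistic, singling out the half-turn symmetric ASMs of size $2n+1$ with a $1$ fixed in a chosen corner. The target identity then becomes the half-turn analogue of the classical relation ${\rm A}_{n,1}={\rm A}_{n-1}$, namely that the number of such corner-constrained matrices equals ${\rm AHT}_{2n}$. This is a statement about the symmetry class alone, and I expect it to follow either from the doubly-refined half-turn enumeration or directly from the product formul\ae\ collected in \eqref{eq:combs}, verified by a routine ratio computation; the small cases $n=1,2$ (where ${\rm AHT}_2=2$ and ${\rm AHT}_3=3$ give $F_3(0)=\tfrac23$, matching the explicit link-pattern count) provide a useful check.

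An independent route is to follow the inhomogeneous strategy of \propref{Fper}: realise $\psi_\alpha$ as a specialisation of a ground-state (qKZ) polynomial and derive a constant-term expression for $\sum_{k_\alpha=0}\psi_\alpha$ by partially specialising spectral parameters, as in \cite{dGLS2012,ZJDF08,Zeil2007}. The value $x=0$ corresponds to $r=2$ in \eqref{eq:xparam2}, i.e.\ to the boundary-changing operator $\phi_{2,2}$, which by the one-boundary Temperley--Lieb description \cite{JS_bcs} symmetrises the marked strand with a single ghost strand; this Jones--Wenzl picture suggests that the $x=0$ overlap can be rewritten as a sum rule for a size-$2n$ object, which would yield ${\rm AHT}_{2n}$ directly.

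The main obstacle is exactly the gap the paper flags: no finite-size generating function is known in the odd periodic case, so neither the Razumov--Stroganov dictionary nor the identification of $k_\alpha$ with a half-turn ASM statistic is available off the shelf, and both must be constructed. Producing this refined correspondence---equivalently, the odd analogue of \eqref{eq:FperEprop}---is the crux; once it is in place, the reduction to the corner-refined count and its evaluation via \eqref{eq:combs} should be comparatively routine.
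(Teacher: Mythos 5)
This statement is one of the paper's \emph{open conjectures}: the paper gives no proof of it at all. It is listed among the special values of $F^{\mathrm{(per)}}_{2n+1}(x)$ as an observation supported by small-size data, with only the remark that it would become \emph{equivalent} to a conjecture of \cite{PearceRdGN02} if one could show that the partial sum $\sum_{k_\alpha=0}\psi_\alpha$ coincides with the sum rule of the punctured even periodic model (``distinct connectivities''). So there is no proof in the paper to compare yours against, and any complete argument would be new mathematics going beyond the paper.

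Your proposal, however, is not such an argument: it is a plan whose decisive step is missing, as you concede in your last paragraph. What is correct and checkable is the reduction, via $Z_{2n+1}={\rm AHT}_{2n+1}$ from \eqref{eq:norms} and \cite{DiFZJZ2006}, to the combinatorial identity $\sum_{k_\alpha=0}\psi_\alpha={\rm AHT}_{2n}$, together with your small-size verification (e.g.\ $F_3(0)=\tfrac23={\rm AHT}_2/{\rm AHT}_3$, and likewise $F_5(0)=\tfrac25={\rm AHT}_4/{\rm AHT}_5$); this matches how the paper frames the conjecture. But both of your routes hinge on objects you do not construct: (i) a Razumov--Stroganov-type dictionary for the \emph{odd} periodic sector in which $k_\alpha$ becomes a corner statistic on half-turn symmetric ASMs --- the theorem of \cite{RazStr04,CantiniS11} concerns the even cylinder and does not supply this; and (ii) the assertion that the corner-constrained HTSASMs of size $2n+1$ number ${\rm AHT}_{2n}$. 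On (ii) you are too optimistic: this is a \emph{refined} enumeration statement, and it cannot ``follow from the product formul\ae\ in \eqref{eq:combs} by a routine ratio computation'', because those formul\ae\ carry no refined information; it would require the refined HTSASM enumeration as genuine input, and even then the identification of that refinement with $k_\alpha$ is precisely the missing dictionary of (i). The qKZ/constant-term route and the Jones--Wenzl remark about $r=2$ are similarly programmatic. In short, you have correctly located the obstruction --- the absence of an odd analogue of \eqref{eq:FperEprop}, which is exactly the gap the paper flags --- but naming the obstruction does not remove it: the statement is exactly as open after your proposal as before. It is also worth noting that your suggested dictionary route differs from the paper's own suggested (and equally unexecuted) route through the punctured model of \cite{PearceRdGN02}.
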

If there is a way to show that the sum of these components is equal to the sum rule for the punctured even periodic model, then this conjecture will be equivalent to one in \cite{PearceRdGN02} (in that article the punctured model is referred to as distinct connectivities, or ``DC'').

Finally we have
\begin{conj}
\be
F_{2n+1}(-1) = \frac{{\rm AV}_{2n+1}}{{\rm AHT}_{2n+1}}, \qquad F_{2n+1}(2) = \frac{2^{2 n} {\rm AV}_{2n+1}}{{\rm AHT}_{2n+1}}.
\ee
\end{conj}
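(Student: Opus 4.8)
The plan is to prove both evaluations together by reading them as \emph{refined sum rules} of the $\beta=1$ loop model and then computing those sum rules at the distinguished half-integer values of $r$. Writing $Z_{2n+1}={\rm AHT}_{2n+1}$, the two claims are equivalent to
\[
\sum_{\alpha\in\mathrm{LP}_{2n+1}}\psi_\alpha(-1)^{k_\alpha}={\rm AV}_{2n+1},\qquad
\sum_{\alpha\in\mathrm{LP}_{2n+1}}\psi_\alpha\,2^{k_\alpha}=2^{2n}{\rm AV}_{2n+1}.
\]
The key structural observation is that, by \eqref{eq:norms}, ${\rm AV}_{2n+1}=Z_{2n}^{\mathrm{(refl)}}$ is itself the sum rule of the even \emph{reflecting} model. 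This strongly suggests that the signed boundary sum over odd periodic link patterns should collapse, after cancellation, onto the (unsigned) reflecting sum rule on $2n$ sites. I would first pin down conventions and the power-of-two normalisation on the base case $n=1$, where $P(x):=\sum_k a_{k,n}x^k=x+2$, so that $P(-1)=1={\rm AV}_3$ and $P(2)=4=2^2\,{\rm AV}_3$.

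The natural engine is the inhomogeneous $O(1)$ loop model. I would promote $\ket{\Psi_{2n+1}}$ to the Di~Francesco--Zinn-Justin ground state $\psi_\alpha(z_1,\dots,z_{2n+1})$ solving the qKZ equations, and form the boundary generating polynomial $G_n(x;\mathbf z)=\sum_\alpha \psi_\alpha(\mathbf z)\,x^{k_\alpha}$, in which the weight $x^{k_\alpha}$ arises from capping the top with $\alpha_0$ and assigning weight $x$ to each completed boundary loop. This is exactly the device that, in the even periodic case, allowed \cite{dGLS2012} to identify the analogous sum with a deformed staircase Macdonald polynomial (used in \propref{Fper}), and that underlies the reflecting determinants of \cite{dGPZJ2009}. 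Using the qKZ exchange and recursion relations one obtains a recursion in $n$ for $G_n(x;\mathbf z)$, together with its total degree and wheel-condition constraints, which should characterise it uniquely. An odd analogue of the even staircase construction of \cite{dGLS2012}, were it available, would give the full polynomial directly; in its absence I work through the recursion and specialisations.

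The decisive step is evaluation at the two distinguished points. Both $x=2$ and $x=-1$ correspond to half-integer $r$ (namely $r=\tfrac12$ and $r=\tfrac52$) under \eqref{eq:xparam}, which by the Remark is precisely where the boundary loop weight is realised integrably, through a Jones--Wenzl projector in the one-boundary Temperley--Lieb algebra. I would argue that at $r=\tfrac52$ the weighted contraction with $\alpha_0$ degenerates onto the reflecting (arc) boundary condition, the unpaired strand being absorbed at the corner exactly as the $\phi_{1,2}$ insertion of \secref{cftargument}; consequently $G_n(-1;\mathbf z)$ should equal the inhomogeneous reflecting partition function $Z_{2n}^{\mathrm{(refl)}}(\mathbf z)$ up to a controllable prefactor, and setting $z_i=1$ with $Z_{2n}^{\mathrm{(refl)}}={\rm AV}_{2n+1}$ would give $F_{2n+1}(-1)={\rm AV}_{2n+1}/{\rm AHT}_{2n+1}$. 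At $r=\tfrac12$ the same collapse occurs, but each of the $2n$ boundary faces contributes an extra factor $2$ in the normalisation of the projector, producing the prefactor $2^{2n}$; alternatively, once $F_{2n+1}(-1)$ is established one may hope to deduce $F_{2n+1}(2)=2^{2n}F_{2n+1}(-1)$ from a transformation relating the points $r$ and $3-r$ (these two values are interchanged by $r\mapsto 3-r$).

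The main obstacle is precisely this collapse: proving rigorously that the signed ($x=-1$) boundary contraction on the odd-length periodic chain reproduces the even reflecting partition function, i.e.\ exhibiting the sign cancellations explicitly (for instance by pairing link patterns of opposite $(-1)^{k_\alpha}$ whose components $\psi_\alpha$ cancel, leaving a residual family in bijection with even reflecting patterns), and then tracking the power-of-two bookkeeping for $x=2$. Because no closed form for the odd periodic generating function is available, this cannot be done by a hypergeometric identity as in the even case (\appref{app:specialvpere}) and must instead proceed through the qKZ recursion and specialisation. A safe fallback, should the direct collapse prove intractable, is to verify that both sides of each identity satisfy the same recursion in $n$ with matching initial data, using the explicit product formulas for ${\rm AV}_{2n+1}$ and ${\rm AHT}_{2n+1}$ collected in \appref{app:ASMnumbers}.
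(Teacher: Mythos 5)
You should first be aware that the paper does \emph{not} prove this statement: it appears there as a conjecture, and immediately after it the authors write that they ``currently have no explanation for these observations.'' So there is no paper proof to compare against, and your attempt has to stand on its own. It does not, because it is a research plan rather than a proof: every load-bearing step is modalised (``should characterise it uniquely'', ``I would argue'', ``should equal \dots up to a controllable prefactor'', ``one may hope''). The decisive claim --- that the signed contraction $G_n(-1;\mathbf z)$ of the inhomogeneous odd periodic ground state with $\alpha_0$ collapses onto the inhomogeneous even reflecting partition function $Z^{\mathrm{(refl)}}_{2n}(\mathbf z)$ --- is exactly the content of the conjecture restated in qKZ language; you supply no mechanism for it: no sign-cancelling involution on link patterns, no degree count plus wheel conditions verified for \emph{both} sides, no computation of the prefactor, and no bookkeeping that would produce $2^{2n}$ at $x=2$. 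The fallback of ``matching recursions in $n$'' is equally unavailable: qKZ recursions relate system sizes only at special specialisations of the inhomogeneities $z_i$, and extracting from them a recursion for the \emph{homogeneous} evaluations at $x=-1$ and $x=2$ is precisely the step no one knows how to do --- it is why the paper has no closed form for $F^{\mathrm{(per)}}_{2n+1}(x)$ in the first place.

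Two specific assertions are moreover wrong or unsupported. First, you invoke the paper's Remark to claim that at $r=\tfrac12$ and $r=\tfrac52$ the boundary weight ``is realised integrably, through a Jones--Wenzl projector''; but the Remark states that this realisation (symmetrising the first physical strand with $r-1$ ghost strands) exists for $r\in\mathbb{N}$, which makes no sense for half-odd-integer $r$, and for such $r$ the Remark records only a speculation about CFT Kac labels. Second, the hoped-for transformation exchanging $r\leftrightarrow 3-r$ (i.e.\ $x=-1\leftrightarrow x=2$), from which you would deduce $F_{2n+1}(2)=2^{2n}F_{2n+1}(-1)$, does not exist for the odd periodic case: the paper's small-size data, e.g.\ $Z_5F_5(x)=10+11x+4x^2$, show that $F^{\mathrm{(per)}}_{2n+1}$ has no palindromic functional equation of the type $F(x)=x^mF(1/x)$ that underlies the analogous relations in the even periodic and odd reflecting cases, so the two evaluations in the conjecture are genuinely independent claims. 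What is sound in your write-up --- the reduction of both identities to sum rules $\sum_\alpha\psi_\alpha(-1)^{k_\alpha}={\rm AV}_{2n+1}$ and $\sum_\alpha\psi_\alpha 2^{k_\alpha}=2^{2n}{\rm AV}_{2n+1}$, the structural hint ${\rm AV}_{2n+1}=Z^{\mathrm{(refl)}}_{2n}$, and the base case $n=1$ --- is a reasonable starting point, but the conjecture remains open after your argument exactly where it was open before it.
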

We currently have no explanation for these observations.

\subsection{Special values of \texorpdfstring{$F^{\mathrm{(refl)}}_{2n}(x)$}{F(x) (refl, even)}}
\label{app:specialvrefle}

\begin{prop}
\label{prop:even}
\begin{align}
F_{2n}(1)&=1,\\
\lim_{x\to 0}\frac{F_{2n}(x)}{x} &= \frac1{{\rm AV}_{2n+1}} \frac{2^{2-n} 3(2 n-2)! (2 n-1)!}{(n-1)!^3 n!^2 (3 n)!}\prod _{i=1}^{n-1} \frac{(3 i+1)! (3 i+3)! (4n+2i-2)!}{(2 i-1)! (3n+3i)! (2n+i-1)!}, \\
\lim_{x\to \infty}\frac{F_{2n}(x)}{x^n}&=\frac{{\rm C}_{2n}}{{\rm AV}_{2n+1}}.
\end{align}
\end{prop}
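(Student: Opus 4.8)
All three evaluations concern the single degree-$n$ polynomial
\[
D(x):=\det_{1\le i,j\le n}\left[\binom{i+j-2}{2j-i}+x\binom{i+j-2}{2j-i-1}\right]=\det(A+xB),
\]
where I abbreviate $A_{ij}=\binom{i+j-2}{2j-i}$ and $B_{ij}=\binom{i+j-2}{2j-i-1}$, and by \eqref{eq:FreflEprop} we have $F_{2n}^{\mathrm{(refl)}}(x)=D(x)/{\rm AV}_{2n+1}$. The three claims are statements about $D(1)$, the top coefficient $[x^n]D$, and the lowest coefficient $[x^1]D=a_{1,n}$; note $[x^0]D=\det A=0$ because $F_{2n}(0)=0$ by the overall factor of $x$ in \eqref{eq:FtildeE} (every even reflecting link pattern produces at least one boundary loop, so $k_\alpha\ge1$). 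The plan is to treat each coefficient separately using multilinearity of the determinant in its rows.

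The first claim, $F_{2n}(1)=1$, is immediate from the definition \eqref{eq:genfun}, since $F_{2n}(1)=Z_{2n}^{-1}\sum_\alpha\psi_\alpha=1$ with $Z_{2n}={\rm AV}_{2n+1}$ from \eqref{eq:norms}; at the level of the determinant it is the Pascal reduction $A_{ij}+B_{ij}=\binom{i+j-1}{2j-i}$ together with the standard evaluation $\det_{1\le i,j\le n}\binom{i+j-1}{2j-i}={\rm AV}_{2n+1}$. For the third claim I extract $[x^n]D$ by choosing the linear part $B_{ij}$ in every row, giving $[x^n]D=\det B=\det_{1\le i,j\le n}\binom{i+j-2}{2j-i-1}$; combinatorially this is forced, since $\alpha_0$ is the only link pattern with $k_\alpha=n$ (\lemref{ka}), whence $[x^n]F_{2n}=\psi_{\alpha_0}/Z_{2n}$. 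It then suffices to invoke the known binomial-determinant evaluation $\det_{1\le i,j\le n}\binom{i+j-2}{2j-i-1}={\rm C}_{2n}$ for cyclically symmetric transpose-complement plane partitions, yielding $\lim_{x\to\infty}F_{2n}/x^n={\rm C}_{2n}/{\rm AV}_{2n+1}$.

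The second claim, the value of $a_{1,n}=[x^1]D$, is the substantive one. Since $\det A=0$, row-multilinearity gives $[x^1]D=\sum_{\ell=1}^n\det\!\big(A\text{ with row }\ell\text{ replaced by }B_{\ell\bullet}\big)=\operatorname{tr}\!\big(\operatorname{adj}(A)\,B\big)$. As this coefficient is nonzero (the asserted product is nonzero), $A$ has rank exactly $n-1$, so $\operatorname{adj}(A)$ has rank one and factors as $\operatorname{adj}(A)=\lambda\,\mathbf r\,\mathbf u^{\mathsf T}$ with $A\mathbf r=0$, $\mathbf u^{\mathsf T}A=0$, and $\lambda$ fixed by any nonvanishing cofactor of $A$. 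This collapses the determinant to the bilinear form $[x^1]D=\lambda\,\mathbf u^{\mathsf T}B\,\mathbf r$. I would then guess-and-verify hypergeometric closed forms for the null vectors $\mathbf r,\mathbf u$ (their defining relations are Vandermonde-type binomial sums $\sum_j\binom{i+j-2}{2j-i}r_j=0$), evaluate $\lambda$ by a single cofactor using the same lattice-path (Lindstr\"om--Gessel--Viennot) techniques that underlie the first and third claims, and finally reduce $\mathbf u^{\mathsf T}B\mathbf r$ to a single hypergeometric sum. An equivalent route is to read $D(x)$ directly as a weight-$x$-marked count of non-intersecting lattice path families and interpret $[x^1]D$ as the families carrying exactly one marked step, again re-expressed as an LGV determinant. (As a consistency check, for $n=2$ one finds $\mathbf r=(1,-1)^{\mathsf T}$, $\mathbf u=(1,0)^{\mathsf T}$, $\lambda=1$, and $\lambda\,\mathbf u^{\mathsf T}B\mathbf r=1$, matching the product.)

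The main obstacle is the final consolidation in the second claim: turning $\lambda\,\mathbf u^{\mathsf T}B\mathbf r$, a single but highly structured hypergeometric sum, into the explicit product stated in the proposition will require a Saalsch\"utz/Gauss-type summation and substantial factorial bookkeeping; it is conceivable that the cleanest path is instead to match $a_{1,n}=[x^1]D$ against a known refined symmetry-class enumeration in the ASM/plane-partition literature whose product formula is already established. The first and third claims, by contrast, reduce to standard determinant evaluations and are essentially bookkeeping.
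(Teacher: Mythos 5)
Your first and third evaluations are correct and coincide with the paper's own route: $F_{2n}(1)=1$ is immediate from the normalisation $Z_{2n}={\rm AV}_{2n+1}$ in \eqref{eq:norms}, and the leading coefficient of $D(x)=\det(A+xB)$ is $\det B=\det_{1\le i,j\le n}\binom{i+j-2}{2j-i-1}$, the Lindstr\"om--Gessel--Viennot determinant enumerating cyclically symmetric transpose complement plane partitions, which equals ${\rm C}_{2n}$ \cite{MillsRobRums1987,CiucuK2000}.

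The second evaluation --- which you yourself identify as the substantive one --- is where your proposal has a genuine gap: the adjugate programme $[x^1]D=\operatorname{tr}\big(\operatorname{adj}(A)\,B\big)=\lambda\,\mathbf{u}^{\mathsf T}B\,\mathbf{r}$ is never executed. You do not exhibit the null vectors, do not evaluate $\lambda$, and do not carry out the closing hypergeometric summation, so the product formula remains unproven, as you concede. Moreover, the rank-one machinery obscures the elementary structure that makes this claim tractable. Since $A_{1j}=\binom{j-1}{2j-1}=0$ and $B_{1j}=\binom{j-1}{2j-2}=\delta_{j,1}$ for all $j\ge 1$, the first row of $A+xB$ is $(x,0,\dots,0)$, whence
\begin{equation*}
\det\big(A+xB\big)=x\,\det_{2\le i,j\le n}\big(A_{ij}+xB_{ij}\big),
\qquad\text{so}\qquad
[x^1]D=\det_{2\le i,j\le n}\binom{i+j-2}{2j-i}.
\end{equation*}
(This also gives $\det A=0$ and the overall factorisation $D(x)=x\tilde D(x)$ with no combinatorial input, and it makes your rank-$(n-1)$ assertion --- which as written presupposes the nonvanishing of the very product you are trying to prove --- unnecessary.) After re-indexing and using the symmetry $\binom{m}{k}=\binom{m}{m-k}$, this single $(n-1)\times(n-1)$ minor is $\det_{0\le i,j\le n-2}\binom{i+j+2}{2i-j}$, a member of the classical one-parameter family $\det\binom{x+i+j}{2i-j}$ whose product evaluation is known; this is exactly how the paper proceeds, citing eq.~(2.19) of \cite{Kratt99} (the same numbers appear in \cite{Pyat04}). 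So the ``cleanest path'' you speculate about in your final paragraph is in fact a one-line reduction to a standard determinant evaluation; without either that observation or a completed summation, the central claim of the proposition is not established.
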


The product formula for $x\to 0$ is a result of applying \cite[eq (2.19)]{Kratt99}. These numbers are also found in \cite{Pyat04} (see (3.20) with $N=1$, $L$ odd), and conjecturally given a different product formula there. This formula is conjectured to be the sum of all components in an odd-sized system (size $2n-1$ in our notation) for which the unpaired link of the link pattern is at site $1$. The coefficient of $x$ in $F_{L}(x)$ is simply the component $\psi_\alpha$ where $\alpha$ has pairings $(1,L)$ and $(2i,2i+1)$, $i=1,\ldots, n-1$. This leads one to suspect that there must be a relationship between these.

The value for $x\to\infty$ comes from the Lindstrom--Gessel--Viennot-type determinant for $C_n$, see \cite{MillsRobRums1987,CiucuK2000}.

\begin{conj}
\label{con:evenm1}
\begin{align}
F_{2n}(-1)=\frac{(-1)^n {\rm AVH}_{2n+1}^2}{{\rm AV}_{2n+1}},\qquad
F_{2n}(2)=\frac{{\rm AHT}_{2n}}{{\rm AV}_{2n+1}}, \qquad
F_{2n}(\tfrac12) = \frac{2^{-n} {\rm A}_n^2}{{\rm AV}_{2n+1}}.
\end{align}
\end{conj}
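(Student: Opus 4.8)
The plan is to reduce each of the three identities to a single determinant evaluation and then evaluate that determinant in closed form. Starting from \eqref{eq:FreflEprop}, write
\[
 D_n(x) := \det_{1\leq i,j\leq n}\left[\binom{i+j-2}{2j-i}+x\binom{i+j-2}{2j-i-1}\right],
\]
so that $F_{2n}^{\mathrm{(refl)}}(x) = D_n(x)/{\rm AV}_{2n+1}$ and the three claims become $D_n(-1)=(-1)^n {\rm AVH}_{2n+1}^2$, $D_n(2)={\rm AHT}_{2n}$, and $D_n(\tfrac12)=2^{-n}{\rm A}_n^2$. It is useful to first record the benchmark $x=1$: by Pascal's rule the two binomials combine into $\binom{i+j-1}{2j-i}$, and $\det_{1\le i,j\le n}[\binom{i+j-1}{2j-i}]={\rm AV}_{2n+1}$ is the classical vertically symmetric ASM determinant, recovering $F_{2n}(1)=1$. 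As noted in the Remark, the three special values sit at the half-integer points $r=\tfrac52,\tfrac12,\tfrac32$ of the parametrisation \eqref{eq:xparam2}, which is exactly where six-vertex/ASM partition functions are known to factorise, so closed forms should exist.

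For $x=\tfrac12$ and $x=2$ I would first clear denominators and apply Pascal's rule to put the entries into a recognisable shape. For $x=\tfrac12$, pulling $2^{-n}$ out of the $n$ rows gives $D_n(\tfrac12)=2^{-n}\det_{1\le i,j\le n}[\binom{i+j-2}{2j-i}+\binom{i+j-1}{2j-i}]$, so the claim reduces to $\det_{1\le i,j\le n}[\binom{i+j-2}{2j-i}+\binom{i+j-1}{2j-i}]={\rm A}_n^2$; the square of the ASM number strongly suggests that the matrix admits a factorisation $M=UV$ (equivalently that the associated family of non-intersecting lattice paths splits into two independent subfamilies), each factor evaluating to ${\rm A}_n$. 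For $x=2$, Pascal's rule turns the entries into $\binom{i+j-1}{2j-i}+\binom{i+j-2}{2j-i-1}$, and I would match this to the standard determinant representation of ${\rm AHT}_{2n}$ from the half-turn-symmetric ASM literature by elementary row and column operations. Both cases are amenable to the Lindstr\"om--Gessel--Viennot method combined with Krattenthaler's determinant-evaluation toolbox \cite{Kratt99}: one interprets the entries as weighted lattice-path counts and identifies the resulting non-intersecting families with the known monotone-triangle/plane-partition models enumerated by ${\rm A}_n$ and ${\rm AHT}_{2n}$ (cf.\ the LGV determinants in \cite{MillsRobRums1987,CiucuK2000}).

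The case $x=-1$ is the one I expect to be the main obstacle, since one must explain both the overall sign $(-1)^n$ and the fact that the answer is a \emph{perfect square}, ${\rm AVH}_{2n+1}^2$. Here the entries are ballot-type differences $\binom{i+j-2}{2j-i}-\binom{i+j-2}{2j-i-1}$, and the natural strategy is to exhibit $D_n(-1)=(-1)^n E_n^2$ with $E_n$ a determinant evaluating to ${\rm AVH}_{2n+1}$. Two mechanisms could produce the square: either a Pfaffian reformulation, in which an antisymmetrising operation at $x=-1$ turns $D_n(-1)$ into (a sign times) the square of a Pfaffian, or a genuine block factorisation $D_n(-1)=(-1)^n\det(A)\det(A^{T})$ for an explicit $A$ whose determinant matches the known product formula for ${\rm AVH}_{2n+1}$. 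Conceptually this square is the signature of a factorisation of the underlying six-vertex partition function at the special spectral parameter $r=\tfrac52$, in the spirit of Kuperberg's unified treatment of ASM symmetry classes, where the doubly (vertically and horizontally) symmetric class appears precisely as such a factor. The real work is producing the factorisation explicitly — finding the row/column operations or the antisymmetric reformulation that manifests the square — rather than merely confirming it to arbitrary numerical precision from \eqref{eq:FreflEprop}.
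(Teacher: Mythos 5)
First, note that the statement you set out to prove is left \emph{open} in the paper: it appears there as Conjecture~\ref{con:evenm1}, and the appendix explicitly flags such items as ``merely observations with proofs outstanding.'' So there is no proof in the paper to compare against; a correct argument here would go beyond the paper itself.

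Your proposal, however, does not close that gap. The preliminary steps are fine: since $F_{2n}^{\mathrm{(refl)}}(x)=D_n(x)/{\rm AV}_{2n+1}$ is precisely \eqref{eq:FreflEprop}, restating the three identities as determinant evaluations is immediate; your Pascal-rule manipulations at $x=\tfrac12$ and $x=2$ (e.g.\ $2\binom{i+j-2}{2j-i}+\binom{i+j-2}{2j-i-1}=\binom{i+j-2}{2j-i}+\binom{i+j-1}{2j-i}$) are correct, as is the $x=1$ benchmark. But everything after that is a statement of intent rather than an argument: for $x=\tfrac12$ you say the appearance of ${\rm A}_n^2$ ``strongly suggests'' a factorisation $M=UV$ without producing $U$ and $V$; for $x=2$ you say you ``would match'' the entries to a standard determinant for ${\rm AHT}_{2n}$ by unspecified row and column operations; and for $x=-1$ you yourself concede that ``the real work is producing the factorisation explicitly,'' which you do not do. The three evaluations $D_n(-1)=(-1)^n{\rm AVH}_{2n+1}^2$, $D_n(2)={\rm AHT}_{2n}$ and $D_n(\tfrac12)=2^{-n}{\rm A}_n^2$ \emph{are} the conjecture; invoking the Lindstr\"om--Gessel--Viennot method, Krattenthaler's toolbox, and Kuperberg-style factorisations names plausible tools (indeed the same kind of tools the paper uses for the values it does prove, cf.\ Proposition~\ref{prop:even}), but you never identify the non-intersecting path families, the block factorisation, or the Pfaffian reformulation that any one of the three cases would require. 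As it stands, the proposal is a sensible research plan with correct reductions, not a proof.
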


\subsection{Special values of \texorpdfstring{$F^{\mathrm{(refl)}}_{2n+1}(x)$}{F(x) (refl, odd)}}
\label{app:specialvreflo}

\begin{prop}
\label{prop:odd}
\be
F_{2n+1}(1)=1,\qquad \lim_{x\to \infty}\frac{F_{2n+1}(x)}{x^n}=\frac{{\rm AV}_{2n+1}}{{\rm C}_{2n+2}},\qquad F_{2n+1}(0)=\frac{{\rm AV}_{2n+1}}{{\rm C}_{2n+2}}.
\ee
\end{prop}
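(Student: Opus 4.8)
The plan is to read off the three evaluations directly from the determinantal formula \eqref{eq:FreflOprop}, using the reciprocal symmetry $F_{2n+1}^{\mathrm{(refl)}}(x)=x^nF_{2n+1}^{\mathrm{(refl)}}(1/x)$ recorded after \thmref{genfun}, and importing the one nontrivial determinant evaluation from the even case so that no fresh computation is needed. The value $F_{2n+1}(1)=1$ is immediate, since by the definition \eqref{eq:genfun} of $F_L$ as a normalised generating function one has $F_L(1)=1$ for every $L$. (Equivalently, setting $x=1$ in \eqref{eq:FreflOprop} and applying Pascal's rule to each entry turns the matrix into $\binom{i+j}{2j-i}$, so that the statement becomes the known Lindstr\"om--Gessel--Viennot evaluation $\det_{1\leq i,j\leq n}\binom{i+j}{2j-i}={\rm C}_{2n+2}$; but this is not needed.)

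For the constant term I would set $x=0$ in \eqref{eq:FreflOprop}, giving
\be
F_{2n+1}(0)=\frac{1}{{\rm C}_{2n+2}}\det_{1\leq i,j\leq n}\binom{i+j-1}{2j-i},
\ee
so that everything reduces to the determinant identity $\det_{1\leq i,j\leq n}\binom{i+j-1}{2j-i}={\rm AV}_{2n+1}$. Rather than evaluate this from scratch, I would specialise the even-case formula \eqref{eq:FreflEprop} at $x=1$ and use $F_{2n}^{\mathrm{(refl)}}(1)=1$, which yields
\be
\det_{1\leq i,j\leq n}\left[\binom{i+j-2}{2j-i}+\binom{i+j-2}{2j-i-1}\right]={\rm AV}_{2n+1}.
\ee
Pascal's rule collapses each entry to $\binom{i+j-1}{2j-i}$, so the left-hand side is exactly the determinant above; hence $F_{2n+1}(0)={\rm AV}_{2n+1}/{\rm C}_{2n+2}$.

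Finally, the leading coefficient follows from the constant term by symmetry alone. Since $F_{2n+1}^{\mathrm{(refl)}}(x)$ is a polynomial of degree $n$ with $F(x)=x^nF(1/x)$, its coefficients satisfy $c_k=c_{n-k}$, so the top coefficient equals the constant term and $\lim_{x\to\infty}F_{2n+1}(x)/x^n=F_{2n+1}(0)={\rm AV}_{2n+1}/{\rm C}_{2n+2}$. One could instead identify the $x^n$ coefficient of \eqref{eq:FreflOprop} as ${\rm C}_{2n+2}^{-1}\det_{1\leq i,j\leq n}\binom{i+j-1}{2j-i-1}$ and evaluate that determinant directly, but the reciprocal symmetry bypasses this entirely. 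The only genuine content in the whole argument is the single determinant identity $\det_{1\leq i,j\leq n}\binom{i+j-1}{2j-i}={\rm AV}_{2n+1}$; the main point --- and what would otherwise be the obstacle, since these binomial determinants have no elementary direct evaluation --- is that it has already been secured by the even-case analysis, so here it comes essentially for free.
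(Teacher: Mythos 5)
Your proposal is correct, but it takes a genuinely different route from the paper. The paper's proof of this proposition is essentially a citation: it reads the $x=0$ value and the leading coefficient off the determinant formula \eqref{eq:FreflOprop} by recognising $\det_{1\leq i,j\leq n}\binom{i+j-1}{2j-i}$ and $\det_{1\leq i,j\leq n}\binom{i+j-1}{2j-i-1}$ as known Lindstr\"om--Gessel--Viennot-type determinants enumerating vertically symmetric ASMs, with references to \cite{MillsRobRums1987,DF07}. You instead derive the single identity you need, $\det_{1\leq i,j\leq n}\binom{i+j-1}{2j-i}={\rm AV}_{2n+1}$, internally: it is the even-size formula \eqref{eq:FreflEprop} evaluated at $x=1$, where $F_{2n}^{\mathrm{(refl)}}(1)=1$ holds by the definition \eqref{eq:genfun}, collapsed entrywise by Pascal's rule; and you dispose of the $x\to\infty$ coefficient with the palindromic symmetry $F_{2n+1}(x)=x^n F_{2n+1}(1/x)$ rather than by a second determinant evaluation. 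There is no circularity in this: the paper's proof of \eqref{eq:FreflEprop} (via \cite{dGPZJ2009} and the sum rule $Z_{2n}={\rm AV}_{2n+1}$ from \eqref{eq:norms}) nowhere uses the value of that determinant at $x=1$, so specialising it there is legitimate. What your route buys is self-containedness --- the two external LGV evaluations are replaced by consistency relations among the paper's own formulas; what the paper's route buys is the combinatorial identification of these determinants with VSASM counting, which is standard and of independent interest. One small caveat: the symmetry $F_{2n+1}(x)=x^n F_{2n+1}(1/x)$ is stated after \thmref{genfun} with only a one-sentence justification (it also tacitly uses reflection invariance of the ground-state components), so your argument inherits that assertion; avoiding it via your fallback --- a direct evaluation of $\det_{1\leq i,j\leq n}\binom{i+j-1}{2j-i-1}$ --- would reintroduce exactly the kind of external determinant identity the paper cites.
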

The cases $x\to\infty$ and $x=0$ again come from the Lindstrom--Gessel--Viennot-type determinant for $AV_n$, see \cite{MillsRobRums1987,DF07}.
\begin{conj}
\label{conj:odd}
\be
F_{2n+1}(-1)=
\begin{cases}
\dfrac{{\rm AV}_{n+1}^4}{{\rm C}_{2n+2}}, & n \text{ even},\\[2mm]
0, & n \text{ odd},
\end{cases}
\quad F_{2n+1}(2)=\frac{{\rm AHT}_{2n+1}}{{\rm C}_{2n+2}},
\quad F_{2n+1}(\tfrac12) = \frac{2^{-n} {\rm AHT}_{2 n + 1}}{{\rm C}_{2n+2}}.
\ee
\end{conj}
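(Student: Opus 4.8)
The plan is to work throughout from the exact determinant formula \eqref{eq:FreflOprop}, writing
$D_n(x) := \det_{1\le i,j\le n}\big[\binom{i+j-1}{2j-i} + x\binom{i+j-1}{2j-i-1}\big]$, so that $F_{2n+1}^{\mathrm{(refl)}}(x) = D_n(x)/{\rm C}_{2n+2}$, and then to evaluate $D_n$ at the three requested points. Two reductions are immediate. First, the functional equation $F_{2n+1}^{\mathrm{(refl)}}(x) = x^n F_{2n+1}^{\mathrm{(refl)}}(1/x)$ gives $F_{2n+1}^{\mathrm{(refl)}}(\tfrac12) = 2^{-n} F_{2n+1}^{\mathrm{(refl)}}(2)$, so the $x=\tfrac12$ claim is an automatic corollary of the $x=2$ claim and needs no separate argument. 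Second, the same functional equation evaluated at $x=-1$ reads $F_{2n+1}^{\mathrm{(refl)}}(-1) = (-1)^n F_{2n+1}^{\mathrm{(refl)}}(-1)$, which forces $F_{2n+1}^{\mathrm{(refl)}}(-1)=0$ whenever $n$ is odd, establishing the lower branch of the $x=-1$ case for free. It therefore remains only to prove the two genuine evaluations $D_n(2) = {\rm AHT}_{2n+1}$ and, for $n$ even, $D_n(-1) = {\rm AV}_{n+1}^4$.

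For $x=2$ I would first simplify the matrix entries by Pascal's rule, $\binom{i+j-1}{2j-i} + 2\binom{i+j-1}{2j-i-1} = \binom{i+j}{2j-i} + \binom{i+j-1}{2j-i-1}$, to put the determinant into a form whose entries are manifestly counts of lattice steps. The Lindstr\"om--Gessel--Viennot (LGV) lemma then realises $D_n(2)$ as an enumeration of families of non-intersecting lattice paths carrying a weight $2$ on a distinguished class of steps, exactly analogous to the unweighted $x\to\infty$ and $x=0$ cases treated in \propref{odd}. The target ${\rm AHT}_{2n+1}$ is the number of half-turn symmetric alternating sign matrices, for which product and determinant/Pfaffian representations are known. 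I would either (i) exhibit a weight-preserving bijection between the weighted non-intersecting path families and the monotone-triangle/rhombus-tiling objects enumerated by the known ${\rm AHT}$ formula, or (ii) evaluate $D_n(2)$ directly against the product formula for ${\rm AHT}_{2n+1}$ in \eqref{eq:combs} by Krattenthaler-style determinant calculus, showing that the product ansatz obeys the same Dodgson condensation recurrence as $D_n(2)$ with matching small-$n$ base cases. The even reflecting analogue $F_{2n}^{\mathrm{(refl)}}(2) = {\rm AHT}_{2n}/{\rm AV}_{2n+1}$ from Conjecture~\ref{con:evenm1} has the same structure, so a unified treatment of both parities is natural.

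For $x=-1$ with $n$ even I would exploit the structure already signalled by the odd-$n$ vanishing: the entry $\binom{i+j-1}{2j-i} - \binom{i+j-1}{2j-i-1}$ should become skew-symmetric after a suitable sequence of row and column operations combined with a diagonal rescaling. Once $M(-1)$ is brought to skew form, $D_n(-1) = (\mathrm{Pf}\,M(-1))^2$ automatically vanishes for odd $n$ (recovering the functional-equation result) and is a perfect square for even $n$. It then remains to show $\mathrm{Pf}\,M(-1) = \pm {\rm AV}_{n+1}^2$, which I would attack by a minor-summation / de Bruijn integral identity reducing the Pfaffian to a smaller determinant, matched against a known vertically symmetric ASM evaluation, or again by direct comparison with the product formula for ${\rm AV}_{n+1}$ in \eqref{eq:combs}. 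The fourth power ${\rm AV}_{n+1}^4 = ({\rm AV}_{n+1}^2)^2$ is then precisely consistent with this Pfaffian-squared structure.

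The main obstacle I anticipate is the $x=2$ identification with the half-turn symmetric class: ${\rm AHT}$ is the most delicate of the symmetric ASM enumerations, and its known closed form rests on six-vertex / Kuperberg Pfaffian arguments rather than on an elementary non-intersecting-path model, so forging the link from our binomial determinant to ${\rm AHT}_{2n+1}$---whether by bijection or by determinant calculus---is where the real work lies. A secondary difficulty is verifying that the $x=-1$ matrix genuinely skew-symmetrises and that the resulting Pfaffian factorises as a perfect square in ${\rm AV}_{n+1}$; the fourth-power structure is a strong hint that the configurations decouple into two independent vertically symmetric halves, and making that decoupling explicit would be the most transparent route to the even-$n$ evaluation.
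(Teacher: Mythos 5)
First, a point of orientation: the paper does not prove this statement at all --- it appears as \conjref{odd}, and the surrounding text offers only two remarks, namely that the $x=-1$ evaluation was already observed in \cite{DF07} (eq.~(4.5) there, still unproven) and that the $x=2$ and $x=\tfrac12$ cases are related by the symmetry $F(x)=x^nF(1/x)$. Your two preliminary reductions are correct and coincide exactly with these remarks: the functional equation gives $F_{2n+1}(\tfrac12)=2^{-n}F_{2n+1}(2)$, and at $x=-1$ it gives $F_{2n+1}(-1)=(-1)^nF_{2n+1}(-1)$, which rigorously forces the odd-$n$ vanishing. Up to this point you are on solid ground, and in fact slightly ahead of the paper in making the odd-$n$ zero an actual consequence rather than an observation.

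The remainder, however, is a research program rather than a proof, and the two items it leaves open --- $D_n(2)={\rm AHT}_{2n+1}$ and $\det = {\rm AV}_{n+1}^4$ for even $n$ --- are the entire nontrivial content of the conjecture. For $x=2$, the Pascal-rule rewrite $\binom{i+j-1}{2j-i}+2\binom{i+j-1}{2j-i-1}=\binom{i+j}{2j-i}+\binom{i+j-1}{2j-i-1}$ is correct algebra, but it does not by itself yield a Lindstr\"om--Gessel--Viennot scheme: you would need explicit endpoints and a step-weighting whose non-intersecting families realise the determinant, and then a weight-preserving correspondence with objects counted by ${\rm AHT}_{2n+1}$ --- and, as you yourself concede, the half-turn symmetric class has no known elementary path model, its product formula resting on six-vertex/Pfaffian arguments. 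The fallback via Dodgson condensation is also not automatic: condensation relates $D_n$ to minors obtained by deleting extreme rows and columns, and these minors generally leave the one-parameter family of matrices in \eqref{eq:FreflOprop}, so one must first exhibit a condensation-stable family of determinants with product evaluations; nothing in the proposal verifies this, and the ``matching small-$n$ base cases'' step presumes it. For $x=-1$, the assertion that the matrix ``should become skew-symmetric after a suitable sequence of row and column operations'' is unsupported --- no transformation is produced, and the subsequent identity $\mathrm{Pf}\,M(-1)=\pm\,{\rm AV}_{n+1}^2$ is exactly as hard as the original claim (the fourth-power structure is suggestive, as you say, but suggestion is all it is; this is precisely where \cite{DF07} also stopped). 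In short: your framing and reductions are right and reproduce the paper's own remarks, but under your proposal the conjecture remains unproven just as it does in the paper, the concrete gaps being the absent path/bijective model for the ${\rm AHT}$ evaluation and the undemonstrated skew-symmetrisation and Pfaffian factorisation at $x=-1$.
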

The observation for $x=-1$ was also made in \cite{DF07} (see eq.~(4.5) and the discussion around eq.~(4.7) of that paper). The observations for $x=2$ and $x=\frac12$ are related by the property $F(x)=x^n F(\frac1x)$.

\section{Results for small sizes}
\label{app:smallsize}
We give here small size ($L\leq 14$) examples of $F_L(x)$ multiplied by the normalisation $Z_L$ for all cases (including odd periodic).
\subsection{Periodic, \texorpdfstring{$L=2n$}{L=2n}}
\def\arraystretch{1.1}
\begin{tabular}{c|p{8cm}ccc}
\hline
$L$ & $Z_L F_L(x)$ & $Z_L$ & $Z_L F_L(-1)$ & $Z_L F_L(2)$\\ \hline
$2$ & $x$ & $1$ & $-1$ & $2$\\
$4$ & $x+x^2$ & $2$ & $0$ & $6$\\
$6$ & $2x+3x^2+2x^3$ & $7$ & $-1$ & $32$\\
$8$ & $7x+14x^2+14x^3+7x^4$ & $42$ & $0$ & $294$\\
$10$ & $42 x+105 x^2+135 x^3+105 x^4+42 x^5$ & $429$ & $-9$ & $4608$\\
$12$ & $429 x^6 + 1287 x^5 + 2002 x^4 + 2002 x^3 + 1287 x^2 + 429 x$ & $7436$ & $0$ & $122694$\\
$14$ & $7436 x+26026 x^2+47320 x^3+56784 x^4+47320 x^5+26026 x^6+7436 x^7$ & $218348$ & $-676$ & $5537792$\\ \hline
\end{tabular}

\subsection{Periodic, \texorpdfstring{$L=2n+1$}{L=2n+1}}
\begin{tabular}{c|p{8cm}ccc}
\hline
$L$ & $Z_L F_L(x)$ & $Z_L$ & $Z_L F_L(-1)$ & $Z_L F_L(2)$\\ \hline
$3$ & $2 +x$ & $3$ & $1$ & $4$\\
$5$ & $10 +11x +4x^2$ & $25$ & $3$ & $48$\\
$7$ & $140 +232x +167x^2 +49x^3$ & $588$ & $26$ & $1664$\\
$9$ & $5544 +12182 x +12617 x^2 +7097 x^3 +1764 x^4$ & $39204$ & $646$ & $165376$\\
$11$ & $622908+1699522 x+2262448 x^2+1804988 x^3+849080 x^4+184041 x^5$ & $7422987$ & $45885$ & $46986240$\\
$13$ & $ 198846076+646978332 x+1044949413 x^2+1059015059 x^3+703061958 x^4+286853502 x^5+55294096 x^6 $ & $3994998436$ & $9304650$ & $38111846400$\\ \hline
\end{tabular}

\subsection{Reflecting, \texorpdfstring{$L=2n$}{L=2n}}
\begin{tabular}{c|p{8cm}ccc}
\hline
$L$ & $Z_L F_L(x)$ & $Z_L$ & $Z_L F_L(-1)$ & $Z_L F_L(2)$\\ \hline
$2$ & $x$ & $1$ & $-1$ & $2$\\
$4$ & $x +2x^2$ & $3$ & $1$ & $10$\\
$6$ & $4x +11x^2 + 11x^3$ & $26$ & $-4$ & $140$\\
$8$ & $50x + 171x^2 + 255x^3 + 170x^4$ & $646$ & $36$ & $5544$\\
$10$ & $1862 x+7540 x^2+14196 x^3+14858 x^4+7429 x^5$ & $45885$ & $-1089$ & $622908$\\
$12$ & $202860 x + 944119 x^2 + 2107417 x^3 + 2828644 x^4 + 2301150 x^5 + 920460 x^6$ & $9304650$ & $81796$  & $198846076$\\
$14$ & $64080720 x + 335905878 x^2 + 859371991 x^3 + 1374229792 x^4 + 1453822999 x^5 + 971405460 x^6 + 323801820 x^7$ & $5382618660$ & $-19536400$  & $180473355920$\\ \hline
\end{tabular}

\subsection{Reflecting, \texorpdfstring{$L=2n+1$}{L=2n+1}}
\begin{tabular}{c|p{8cm}ccc}
\hline
$L$ & $Z_L F_L(x)$ & $Z_L$ & $Z_L F_L(-1)$ & $Z_L F_L(2)$\\ \hline
$3$ & $1 +x$ & $2$ & $0$ & $3$\\
$5$ & $3 +5x +3x^2$ & $11$ & $1$ & $25$\\
$7$ & $26 +59x +59x^2 +26x^3$ & $170$ & $0$ & $588$\\
$9$ & $646 +1837 x +2463 x^2 +1837 x^3 +646 x^4$ & $7429$ & $81$ & $39204$\\
$11$ & $45885 + 156107 x + 258238 x^2 + 258238 x^3 + 156107 x^4 + 45885 x^5$ & $920460$ & $0$ & $7422987$\\
$13$ & $9304650 + 36756435 x + 71760049 x^2 + 88159552 x^3 + 71760049 x^4 + 36756435 x^5 + 9304650 x^6$ & $323801820$ & $456976$ & $3994998436$\\ \hline
\end{tabular}

\section{Asymptotics of \texorpdfstring{$F_L^{\mathrm{(refl)}}(x)$ for $x=-1$, $0$, $\frac12$, and $2$}{F(x) (refl) for x=-1, 0, 1/2, and 2.}}
\label{app:asymps}
\conjref{reflasymp} is supported by the following results for special values of $x$ according to \ref{app:specialvrefle} and \ref{app:specialvreflo}, whose asymptotics can be derived from that of the Barnes $G$-function, see \eqref{eq:initasymp}. In the following we denote by $A$ the Glaisher constant.

First recall \eqref{eq:reflexp},
\be
\tilde F^{\mathrm{(refl)}}_L(x)=\exp\big( n g_0(x) + \log(n) g_1(x) + g_2(x) + n^{-1} g_3(x)+\dots\big).
\ee
For $x=1$ ($r=1$) we have that $\tilde{F}_L(1)=1$ and hence $g_j(1)=0$ $\forall j$. In addition we have results at $x=-1, 0, \tfrac12$ and $2$ ($r=\frac52, 2, \frac32$ and $\frac12$).

\subsection{\texorpdfstring{$L$}{L} even}
Here we list the asymptotics for $L=2n$ obtained from the results in \ref{app:specialvrefle}. From Conjecture~\ref{con:evenm1} and Proposition~\ref{prop:even} we find
\begin{align}
g_0(-1)&= \log\Big(\frac{2}{3\sqrt{3}}\Big), & g_1(-1)&= \frac{1}{8}, & g_2(-1)&= \frac1{24}+\log\left(\frac{3^{\frac{11}{24}}\Gamma(\frac13)}{2^{\frac1{18}}(\pi A)^{\frac12}}\right),\\
g_0(0)&= \log\Big(\frac{16}{27}\Big), & g_1(0)&= -\frac12, & g_2(0)&= \log\Big(\frac{3}{(2 \pi)^{\frac12}}\Big),\\
g_0(2)&= \log\Big(\frac{8}{3\sqrt{3}}\Big), & g_1(2)&= \frac18, & g_2(2)&= -\frac38 + \log\Big(\frac{\Gamma(\frac13)}{3^{\frac1{24}}2^{\frac1{18}}(\pi A)^{\frac12}}\Big),\\
g_0(\tfrac12) &= \log\Big(\frac{4}{3\sqrt{3}}\Big), & g_1(\tfrac12)&= -\frac5{24}, & g_2(\tfrac12)&= \frac1{24} + \log\Big(\frac{2^{\frac79} \pi^{\frac14}}{3^{\frac7{24}} (A \Gamma(\frac{1}{6}))^{\frac12}}\Big).
\end{align}

\subsection{\texorpdfstring{$L$}{L} odd}
Here we list the asymptotics for $L=2n+1$ obtained from the results in \ref{app:specialvrefle}. From Conjecture~\ref{conj:odd} and Proposition~\ref{prop:odd} we find
\begin{align}
g_0(-1)&= \log\Big(\frac{2}{3\sqrt{3}}\Big),  &  g_1(-1)&= -\frac{3}{8} , & g_2(-1)&= \frac1{24}+\log\Big(\frac{2^{\frac{25}{9}}\pi}{3^{\frac{25}{24}}\Gamma(\frac13)^2A^{\frac12}}\Big),\\
g_0(0)&= \log\Big(\frac{16}{27}\Big),  &  g_1(0)&= -\frac{1}{6}, & g_2(0)&= \log\Big(\frac{2^{\frac{17}{6}}\pi^{\frac12}}{3^{\frac32}\Gamma(\frac13)}\Big),\\
g_0(2)&= \log\Big(\frac{8}{3\sqrt{3}}\Big),  &  g_1(2)&= -\frac{1}{24}, & g_2(2)&= \frac1{24}+\log\Big(\frac{2^{\frac{16}{9}}}{3^{\frac{25}{24}}A^{\frac12}}\Big),\\
g_0(\tfrac12)&= \log\Big(\frac{4}{3\sqrt{3}}\Big),  &  g_1(\tfrac12)&= -\frac{1}{24}, & g_2(\tfrac12)&= \frac1{24}+\log\Big(\frac{2^{\frac{16}{9}}}{3^{\frac{25}{24}}A^{\frac12}}\Big).
\end{align}

\section{Periodic asymptotics, lower order terms}
The computations of $f_j(x)$, $j>2$, are completely analogous to the case $j=2$.  One first gets an expression for $f_j'(x)$ in terms of $f_k'(x)$ with $k=0,1,\ldots,j-1$ and $f_{j-1}''(x)$, all of which are known. 
Integrating one obtains $f_j(x)$, and the constant of integration is chosen to match the initial conditions in \eqref{eq:init}. Rewriting in terms of $r$, one obtains $-S_{-j+1}(r)$.

The results can be written as follows for $x\geq-1$ (for $x=-1$ the results are only valid for odd $n$, as in \propref{perasymp}),
\begin{align}
 S_{-1} &= -\frac{5}{36} \cos^2 \left( \frac{\pi r}{2} \right), \nonumber \\
 \frac{S_{-2}}{S_{-1}} &= \frac{1}{2} \cos(\pi r), \nonumber \\
 \frac{S_{-3}}{S_{-1}} &= \frac{1}{864} \big( -15 - 10 \cos(\pi r) + 221 \cos(2 \pi r) \big), \nonumber \\
 \frac{S_{-4}}{S_{-1}} &= \frac{1}{576} \big( -5 - 51 \cos(\pi r) - 5 \cos(2 \pi r) + 113 \cos(3 \pi r) \big), \nonumber \\
 \frac{S_{-5}}{S_{-1}} &= \frac{1}{248832} \big( 225 - 1826 \cos(\pi r) - 37952 \cos(2 \pi r) - 
 1758 \cos(3 \pi r)\nn \\
 &\qquad + 49695 \cos(4 \pi r) \big), \nonumber \\
 \frac{S_{-6}}{S_{-1}} &= \frac{1}{497664} \big(
 1605 + 22102 \cos(\pi r) - 1760 \cos(2 \pi r) - 
 135990 \cos(3 \pi r)\nn \\
 &\qquad - 3365 \cos(4 \pi r) + 
 125920 \cos(5 \pi r) \big).
\end{align}
The results for $x<-1$ are obtained from the above by replacing $r$ with $r-3$.





\end{document}